\pgfplotsset{compat=1.10}
\newcommand{\PP}{\mathbb{P}}
\newcommand{\RR}{\mathbb{R}}
\newcommand{\NN}{\mathbb{N}}
\newcommand{\EE}{\mathbb{E}}
\newcommand{\E}{\text{E}}
\newcommand{\PA}{\text{PoA}}
\newcommand{\PS}{\text{PoS}}
\newcommand{\PR}{\text{PR}}
\newtheorem{thm}{Theorem}
\newtheorem{lemma}{Lemma}
\newtheorem{remark}{Remark}
\def \R{\mathbb{R}}
\def \E{\mathbb{E}}
\def \indi{\mathds{1}}
\def \1{\mathds{1}}
 \def \1{{\bf 1}}
\newtheorem{definition}{Definition}
\newtheorem{example}{Example}
\newtheorem{proposition}{Proposition}
\newlength{\algofontsize}
\begin{document}
	
	\algrenewcommand\algorithmicrequire{\textbf{Input:}}
	\algrenewcommand\algorithmicensure{\textbf{Output:}}
	
	\title{Competition and Recall in Selection Problems \footnote{\today}\footnote{The authors
gratefully acknowledge funding from ANITI ANR-3IA Artificial and Natural
Intelligence Toulouse Institute, grant ANR-19-PI3A-0004, and from the ANR under the Investments for the Future program, grant ANR-17-EURE- 0010. J. Renault also acknowledges the support of ANR MaSDOL-19-CE23-0017-01.}}

	\author{Fabien Gensbittel
	\thanks{Toulouse School of Economics, University of Toulouse Capitole, Toulouse, France,  fabien.gensbittel@tse-fr.eu}
	\and Dana Pizarro			
	\thanks{Toulouse School of Economics,  Universit\'e Toulouse 1 Capitole and ANITI, France, dana.pizarro@tse-fr.eu}
	\and J\'{e}r\^{o}me Renault 
	\thanks{Toulouse School of Economics, University of Toulouse Capitole, Toulouse, France,   jerome.renault@tse-fr.eu}
	}
\date{\vspace{-1em}}
\maketitle
\thispagestyle{empty}
\begin{abstract}
 
We extend the prophet inequality problem to a competitive setting. At every period $ t \in \{1,\ldots,n\}$, a new 
sample $X_t$ from a known distribution $F$ arrives, which is publicly observed. Then  two players   simultaneously decide whether to pick  an available value
or to pass and wait until the next period (ties are broken uniformly at random).
As soon as a player gets one sample, he leaves the market and his payoff is the value of this item.  
In a first variant, namely  ``no recall'' case, the agents can only bid in period $t$ for the current  
value $X_t$.  
In a second variant, the ``full recall'' case, the agents can also  bid at period $t$ for any of the previous samples with values $X_1$,...,$X_{t-1}$ which has not been already selected. 
For each variant, we study the subgame-perfect Nash equilibrium payoffs of the corresponding game, as a function of the number of periods $n$ and the distribution $F$.
More specifically, we give a full characterization in the full recall case, and show in particular that  both players always get the same payoff at equilibrium, whereas in the no recall case the set of equilibrium  payoffs typically has full dimension. Regarding the welfare at equilibrium,  surprisingly  it is possible  that the best equilibrium payoff a player can have is strictly higher in the no recall case than in the full recall case. However, symmetric equilibrium payoffs are  always better when the players have  full recall.
Finally, we show that in the case of 2 arrivals and arbitrary distributions on $[0,1]$, the prices of Anarchy and Stability in the no recall case are at most 4/3, and this bound is tight.\\

\noindent
\textbf{Keywords}: Optimal stopping, Competing agents, Recall, Prophet inequalities, Price of anarchy, Price of stability, Subgame-perfect equilibria,  Game theory.
\end{abstract}

\clearpage 
\pagenumbering{arabic}
\newpage
\section{Introduction}

\subsection{Context}
The theory of optimal stopping has a vast history and is concerned with the problem of a decision-maker who observes a sequence of random variables arriving over time and has to decide when to stop optimizing a particular objective. 
Probably, the two best-known problems in optimal stopping are the Secretary  Problem and the Prophet Inequality. In the classical model of the former, introduced in the ‘60s, a decision-maker observes a sequence of values arriving over time and has to pick one in a take-it-or-leave-it fashion maximizing  the probability of picking the highest one. In other words, after observing an
arrival, he has to decide whether to pick this value (and gets a reward equals to the value picked) or to pass and continue observing the sequence.
Once a value is picked, the game ends and the goal of the decision-maker is to maximize the probability of getting the highest value. Lindley \cite{L61} proves that an optimal stopping rule for this problem consists in  rejecting  a particular amount of  
values first and then accepting the first value 
higher than the maximum observed so far.  When the number of arrivals goes to infinity, the probability of picking the best value approaches to $1/e$. Since then, several variants have been studied in the literature (see, e.g., \cite{EFK20, F83, IKM06}). 

Related to the secretary problem is the optimal selection problem where the decision-maker knows not only the total number of arrivals but also the distribution behind them, and he has to decide when to stop, with the goal of maximizing the expected value of what he gets. Instead of looking at this problem as an optimal stopping problem, in the ‘70s researchers started to answer the question of how good can a decision-maker play  compared to what a prophet can do, where a prophet is someone who knows all the realizations of the random variables in advance and simply picks the maximum. These inequalities are called Prophet Inequalities and it was in the '70s when Krengel and Sucheston, and Garling \cite{KS78} proved that the decision-maker can get at least 1/2 of what a prophet gets, and that this bound is tight. Later, in 1984 Samuel-Cahn \cite{SC84} proved that instead of looking  at all feasible stopping rules, it is enough to look at a single threshold strategy to get the 1/2 bound. These results are for a general setting where the random variables are independent but not necessarily identically distributed, and then one natural question that arose was if this bound could be improved assuming i.i.d. random variables. Kertz \cite{K86} answered this question positively and provided a lower bound of roughly 0.7451. Quite recently, Correa et al. \cite{CFHOV17} proved that this bound is tight. A lot of work has appeared considering different model variants (infinitely many arrivals, feasibility constraints, multi-selection, etc) but it was since the last decades that this problem gained particular attention due to its surprising connection with online mechanisms (see, e.g., \cite{CHMS10, CFPV19, HKS07}).

\subsection{Our paper}
In the i.i.d. setting of the prophet inequality problem, there is a decision-maker who observes an i.i.d. sample $X_1, \dots, X_n$ distributed according to a known distribution $F$. 
Only one value can be selected, and the banchmark is $\EE(\max_i X_i)$.
After each arrival, the decision-maker must make an immediate and irrevocable decision whether to pick this value or not. If he selects it, he leaves the market and the game ends, otherwise he continues observing the next arrival.  This problem is commonly motivated by applications to online auction theory, where at each time period a seller, who wants to sell an item, receives an offer from a potential buyer and has to decide whether to accept it and sell the item or to reject it and wait for the next offer. Most of the work on the prophet inequality problem relies in the fact that there is only one decision-maker and in that the decision must be taken immediately after observing the offer. However, in some situations of interest-- a person who wants to buy a house, a company hiring an employee, among others-- it seems reasonable to allow more than one decision-maker, as well as to be able to make the decision later on time. Driven by this fact, in this paper we study two variants of the classic setting. More specifically, on one hand, we consider the setting where two decision-makers compete to get the best possible value, and we call it \textit{competitive selection problem with no recall}. On the other hand, we consider the problem where the  two decision-makers are allowed to select any available value that appeared in the past and not only the one just arrived, and we call it \textit{competitive selection problem with full recall}.

For each variant, we study the two-player game induced by the optimal stopping problem, focusing on   subgame-perfect Nash equilibria   (SPE  for short).
The main contributions of this paper can be divided into three lines, which are summarized in what follows. \\

\noindent{\bf Description  of  the subgame-perfect equilibrium payoffs.} 
The first stream of contributions refers to the study of the set of SPE payoffs (SPEP) for both settings. Regarding the full recall case, we fully characterize the set of SPEP (Theorems~\ref{thm:SPEPcharact} and \ref{thm:caract}) and we obtain that  each such payoff is symmetric, meaning that every SPE gives the same payoff to both players. 
In the no-recall case, the set of SPEP is clearly symmetric with respect to the diagonal but contains points outside of the diagonal, inducing different payoffs for the players. In this case we give in Theorem ~\ref{thm:norecall} recursive formulas to  compute the best and the worst SPEP payoffs (in terms of sum of payoffs of the players), as well as the worst payoff a single player can get at a subgame-perfect equilibrium. 
To illustrate these results, in Section~\ref{sec:unif} we provide a detailed study of the particular case where  $F$ is the uniform distribution on $[0,1]$. \\

\noindent{\bf Comparison of the two variants.}
The second stream of results is focused on comparing the highest SPEP in both settings. Surprisingly, an example (see Section~\ref{sec:simple:example}) shows that the best SPEP a player can obtain may be higher in the no recall case than in the full recall case. This can be explained by  the possible  existence of an asymmetric SPE in the no recall case which significantly favors one player at the expense of the other, who picks a value early in the game. However, if we restrict our attention to symmetric SPEP, we show in Theorem~\ref{thm:comparison} that players are always better off in the variant with recall. More precisely, we prove that for every number of periods $n$ and continuous distribution $F$, if $(u,u)$ is a SPEP of the game with full recall and $(v,v)$ is a SPEP of the game without recall, then $u\geq v$. Furthermore, this advantage can be significant: If $F$ is the uniform distribution on $[0,1],$ and $n=5$, the payoff of players in the full recall case is at least $7\%$ higher than in the no recall case.\\

\noindent {\bf Efficiency of equilibria.} 
To analyze efficiency, we use in Section~\ref{sec:eff} the standard notions of Price of Anarchy and Price of Stability, and we introduce the notion of Prophet Ratio, defined as the sum of payoffs a team of prophets would obtain, divided by the  best  sum of payoffs at equilibrium. When $F$ is uniform on $[0,1]$, we find numerically that equilibria are quite efficient and that having full recall gives a small advantage to the players, in terms of having a payoff closer to the one obtained by playing the best possible feasible strategy in the corresponding setting. We also show that both the prices of Anarchy and Stability are maximized in the no recall case with $n=2$, and the intuition  is that it is the case maximizing the probability that a player will get nothing. The Prophet Ratio in the same context is maximized for $n=5$, which is less intuitive. %we find more difficult to interpret. 
Finally we consider the case $n=2$ without recall, and let $F$ be any distribution with support contained in [0,1]. We prove in Theorem \ref{prop:boundeff} that both the prices of anarchy and stability are not greater than 4/3, and that this %4/3
bound (reminiscent of the bound for routing problems with linear latencies, see \cite{RT02}) is tight in both cases. \\

It is worth noting that, although competition and recalling variants have already been considered in the literature for some optimal stopping problems (as we will discuss in the next section), to the best of our knowledge, our paper is the first considering them from a Game Theory perspective with a focus on the study of the set of SPEP, and then it constitutes a good starting point for future research work of interest cross academic communities in Operations Research, Computer Science, and Economics.

\subsection{Related literature} \label{sec:rel:lit}

As it was aforementioned, the literature in optimal stopping theory is extensive and   mainly focused on finding optimal or near-optimal policies for the different model variants, as well as on studying the guarantees of some simple strategies, such as single threshold strategies, even when they are not optimal. However, this paper introduces a game-theoretic approach for a model with competition and where recall is allowed. In what follows, we revisit some of the existing literature regarding optimal stopping problems with some of these two particular features.

%It is worth mentioning that, to the best of our knowledge, this is the first paper introducing both competition and recalling 

\paragraph{Optimal Stopping with competition.}
Abdelaziz and Krichen \cite{AK07} survey  the literature  on optimal stopping problems with more than one decision-maker until 2000s. More recently, Immorlica et al. \cite{IKM06} and Ezra et al. \cite{EFK20} study the secretary problem with competition. The former considers a classical setting where decision are made in a take-it-or-leave-it fashion and ties are broken uniformly at random, and they show that as the number of competitors grows, the moment at which ``accept'' is played for the first time in an equilibrium decreases. The latter incorporates the recalling option, and studies the structure and performance of equilibria in this game when the ties are broken uniformly at random or according to a global ranking. 

Our paper considers a different model, since the problem is more related to the prophet inequality setting. In this sense,  the work closest to ours is the recent paper by Ezra et al. \cite{EFK21}, who introduce, independently of our paper, the no recall case, again with the two variants for tie breaking. However, the novelty of our paper is the incorporation of recalling in addition to the competition between players. Moreover, instead of studying  the reward guarantees under single-threshold strategies, we focus  on the study of equilibria of the game.

\paragraph{Optimal Stopping with recall.}
Allowing decision-makers to choose between any of the values arrived so far  
%items that have already arrived 
is a variant of the classic problem that may have interesting applications. 
If this extension is considered without competition, it is easy to see that the optimum is just to wait until the end and pick the best value. However, adding competition in the model makes the problem interesting and characterizing the set of equilibrium payoff and studying their efficiency are challenging questions we address in this paper. 
Notice that this notion of recalling is not new for some optimal stopping problems. For example, Yang \cite{Y74} considers a variant of the secretary problem, where the interviewer is allowed to make an offer to any applicant already interviewed. In his model, the applicant reject the offer with some probability that depends on when the offer is made and he studies the optimal stopping rules in this context. Thereafter,  different authors have been studying other variants of the secretary problem with recall (see, e.g., \cite{EFK20, P81, S94}).
Our work differs from most of them not only in the model we consider but also in terms of questions since we are more interested in a game-theoretic approach of the problem. \\

%\cite{MM60}
%\cite{BF97}

\subsection{Roadmap}
The remainder of this paper is organized as follows: We start presenting the model in Section~\ref{sec:model}, including the description of the games for the full recall and no recall cases in Section~\ref{sec:model:game}. The results of our paper are presented formally in Section~\ref{sec:results}.  This section is divided into four parts: In Section~\ref{sec:payoffs}, we study the set of SPEP, starting with its computation for a very simple example (see Section~\ref{sec:simple:example}), and making then a complete analysis for the full recall case (see Section~\ref{sec:payoffs:FR}) and for the no recall case (see Section~\ref{sec:payoff:NR}). In Section~\ref{sec:comparison} we present the results concerning the comparison of the two variants, whereas in Section~\ref{sec:eff} we study the efficiency of equilibria. In Section~\ref{sec:unif} we make a detailed analysis of the case where $F$ is uniform on $[0,1]$. Finally, we close the paper with the proofs of the results, which are
relegated to Section~\ref{sec:proofs}.

\section{Model}\label{sec:model}

Consider a sequence of samples $X_1$,...,$X_n$ of a random variable $X$ distributed according to a c.d.f. $F$ with support included in [0,1].  \footnote{We assume the support of $F$ included in $[0,1]$ to not overload the notation but the results can be easily extended to the general case.}
%Consider a sequence $X_1$,...,$X_n$ of i.i.d. random variables distributed according to  a c.d.f.  $F$ with support included in  $[0,1]$.
There are two players, or decision-makers, competing to select the highest value among $X_1$,...,$X_n$.
More explicitly, at each time period $t=1,...,n$, the decision-makers -- who know the distribution $F$-- observe $X_{t}$ and simultaneously decide whether or not  to select one value in the current  {\it{feasible set}} $\mathcal{F}_t$. Once a decision-maker chooses a value $X_j$, he leaves the market obtaining a payoff of $X_{j}$ and this value no longer belongs to the feasible set. 
Decision-makers must decide when to stop, maximizing the expected value of their payoff.  If at time $t$ both agents want to select the same value, we break the tie uniformly at random. That is, each of them gets the value with probability $1/2$, and the decision-maker who gets it leaves the market, whereas the other passes to the next period. 

It remains to  specify what are the sets of feasible values, distinguishing the two cases we will consider. 
%he set $\mathcal{F}_t \subset \{1, \dots, t\}$ of feasible elements at time $t$ consists of all the periods that have not yet been selected by a decision-maker, and a {\it{feasible action}} represents a probability distribution over the set $\mathcal{F}_t \cup \{ \emptyset \}$, where $\emptyset$ represents the action of not selecting anything. 
In the first one, namely {\it{full recall case}}, the set %$\mathcal{F}_t \subset \{X_1, \dots, X_t\}$ of feasible values at time $t$  consists on all the  samples arrived so far that have not yet been selected by a decision-maker, 
$\mathcal{F}_t \subset \{1, \dots, t\}$ of feasible elements at time $t$ consists of all the periods that have not yet been selected by a decision-maker, and a {\it{feasible action}} represents a probability distribution over the set $\mathcal{F}_t \cup \{ \emptyset \}$, where $\emptyset$ represents the action of not selecting anything.
In this case, the game ends when all decision-makers  choose an element, and then it could be later than $n$ if both players are still present at period $n$.
We assume that at stage $n$ all players who are present select the element corresponding to the highest value. If both players are present, there is a tie, and the player who losses the toss gets the second best value.
  %If this is the case, we assume that the last  present player takes at  time $n+1$ the best item still available.  

%The second variant is the one we call {\it{partial recall case}}. It is similar to the former but here the game ends at time $N$ or earlier if all decision makers already get an item. That is, if there is a tie at the last period, the one who wins the auction gets it and the others get nothing. Thus, unlike in the previous case, in this case there may be decision-makers who run out of items. As before, the set of feasible actions at time $t \in [N]$ is defined as a probability distribution over the set  $\mathcal{F}_t \cup \{\emptyset\}$, where $\mathcal{F}_t$ is the set of feasible items and $\emptyset$ represents the action of not selecting anything.

The second variant we consider is the {\it{no recall case}}. In this case, a take-it-or-leave-it decision is faced by the decision-makers at each time period. That is, after observing the sample just arrived they should decide whether to take it or not. 
%Whether it was chosen by someone or not,
Independently of their decision, the value cannot be chosen later. 
Thus, the set of feasible elements at time $t$ is just given by $\mathcal{F}_t= \{t\}$ and a  feasible action is a probability distribution on $\{t, \emptyset\},$ where $\emptyset$ represents the action of represents the action of not selecting anything.
Notice that in the 1-player case (decision problem), the optimal strategy with full recall is simply to wait until the end and to pick the element which corresponds to the maximum of $\{X_1,...,X_n\}$. And in the 1-player case with no recall  we have a standard prophet  problem, with value smaller than the expectation of the maximum of $\{X_1,...,X_n\}$. Obviously, having the possibility of getting a sample observed in the past is beneficial to the player.  
In our multiplayer setting,  we cannot say {\it a priori} that the full recall case is beneficial to the players, as there are examples where having more information or more actions decreases the sum of  the  payoffs of the players at equilibrium. This motivated us to ask how important is to have the power of being able to choose a value observed in the past. 
To answer this question we study the games behind the two model variants described above. In particular, we study the set of SPEP as well as the Price of Anarchy (PoA) and the Price of Stability (PoS). 
%One question we want to address here is the following : how important is to have the power to be able to choose an item that came to the market earlier?
%For each variant, we  will study the set of SPE payoffs as well as the Price of Anarchy (PoA) and the Price of Stability (PoS).

We remark here that throughout the paper we will  ``the decision-maker selects value $X_i$'' to refer to the decision-maker selecting element $i$ from $\mathcal{F}_t$.

\subsection{Description of the games} \label{sec:model:game}

We now formally describe the games induced by the full recall case and the no recall case, denoted by $\Gamma^{FR}_n$ and $\Gamma^{NR}_n$, respectively. Then, we specify the notions of equilibrium that will be used throughout the paper. 

\begin{enumerate}
    \item {\bf Full recall case.} Game $\Gamma^{FR}_n$

For each $t \in \{0,1,...,n+1\}$ we denote by $\mathcal{H}_t$ the set of possible histories up to stage $t$. $\mathcal{H}_0$ only contains the empty history. $\mathcal{H}_1$ contains what happens at stage 1, i.e., the sample $X_1$, who tried to get $X_1$ and who got $X_1$ (possibly nobody).  $\mathcal{H}_2$ contains everything that happened at stages 1 and 2,
and so on.%etc... 

As usual a strategy for player $j \in \{1,2\}$ is an element $\sigma_j=(\sigma_{j,_t})_{t =0,...,n}$ where $\sigma_{j,_t}$ is a measurable map which associates to every history in  $\mathcal{H}_t$ an available action, that is a  probability distribution over $\mathcal{F}_t \cup \{ \emptyset \}$. A strategy profile $(\sigma_1, \sigma_2)$ induces a probability distribution over the set of possible plays $\mathcal{H}_{n+1}$, and the payoff (or utility) of each player is defined as the expectation of the value %of the item 
he gets, with the convention that getting 
%no item
none of the samples yields a payoff of $0$.

\item {\bf No recall case.} Game $\Gamma^{NR}_n$

Here, the set of available elements at stage $t$ is $\mathcal{F}_t =\{t\}$, and  we only need to consider histories $\mathcal{H}_t$ for $t \in \{0,...,n\}$  where $\mathcal{H}_t$  contains everything that happened up to stage $t$ under the no recall assumption. 
A strategy  for player $j \in \{1,2\}$ is an element $\sigma_j=(\sigma_{j,t})_{t =0,...,n-1}$ where $\sigma_{j,t}$ is a measurable map which associates to every history in  $\mathcal{H}_t$ an available action, that is a  probability distribution over $\{t, \emptyset\}$.
A strategy profile $(\sigma_1, \sigma_2)$ induces a probability distribution over the set of possible plays $\mathcal{H}_{n}$, and payoffs are defined as in the full recall case.

\item{\bf Equilibrium notions.} 
We recall here   the usual notions of Nash equilibrium (NE) and subgame perfect equilibrium (see, e.g., \cite{FT91}). The following  definitions  apply to both games $\Gamma^{FR}_n$  and  $\Gamma^{NR}_n$. 
\begin{definition}
A strategy profile  $\sigma= (\sigma_1,\sigma_2)$ is a Nash equilibrium (NE) of the game if for every agent $i$ and every strategy $\sigma'_i$, player's $i$ utility when $(\sigma'_i, \sigma_{-i})$ is played is not greater than the one obtained if $\sigma$ is played. 
\end{definition}

Given a stage number $t$ and a  finite history $h_t$ mentioning everything that happened  up to stage $t$, we can define the continuation game after $h_t$. 

\begin{definition}
A strategy profile  $\sigma= (\sigma_1,\sigma_2)$ is a subgame perfect equilibrium (SPE) if it induces  a NE for every proper subgame of the game (i.e. for any continuation game after a finite  history).
\end{definition}
 When studying SPE, we will assume w.l.o.g. that as soon as a player is alone in the game, he plays optimally in the remaining decision problem. By best (resp. worse) equilibrium, we mean an  equilibrium maximizing (resp. minimizing) the sum of the payoffs of the players.
 
%In the next section, we study the  SPE payoffs in each variant.  
\end{enumerate}
\section{Results} \label{sec:results}

The goal of this section is to present the main results of the paper. First, we study the structure of the sets of SPEP. We start by introducing a simple example, where the computation for both the recall and no recall cases is easy. Then, we present our main result in the full recall case which fully characterizes the set of SPEP. Regarding the no recall case, we provide recursive formulas to compute the the worst payoff a player can get in equilibrium, as well as the sum of payoffs of players for both the best and worst SPE, when the distribution function $F$ is continuous. 
After that, we move on to understand what is the relationship between the payoff a player gets in both problems. In other words, we answer the question: Is the recall case always beneficial for the players? Finally, we study the efficiency of equilibria for the full recall and no recall cases. We leave the proofs of the results to Section~\ref{sec:proofs}.

\subsection{Motivating example.}\label{sec:simple:example}
Let us consider the particular instance of the problem where $n$ samples of $X$ arrive sequentially over time, where the law $F$ of $X$ is defined by:
%Consider the following discrete random variable: 

 \[   
X= 
     \begin{cases}
    1/3  & \text{with probability } 1/2,\\
      2/3 &\text{with probability } 1/2.\\ 
     \end{cases}
\]
We compute the set of SPEP in the full recall and no recall case, denoted by $E_n^{FR}$ and $E_n^{NR}$, respectively.\\ 
%when there are $n$ samples of $X$ arriving sequentially over time. \\

\noindent
\textit{Full recall case.} Note that in this case, the unique SPE is to bid at any time if $X=2/3$ and to pass otherwise until the last period, where the best available value is chosen. 
%Then, the expected payoff of a player playing SPE is:
Then, if all samples take the value $1/3,$ the players obtain a payoff of $1/3$; if only one sample is equal to $2/3$, both players bid for this value and then they obtain $2/3$ with probability $1/2$ and $1/3$ with probability $1/2$; and if more than one realization of $X$ has value $2/3,$ both players obtain a payoff of $2/3$. Thus, the expected payoff of a player in equilibrium, namely $h_n$, can be computed as follows
\[
h_n=\frac{1}{3} \PP( X_{(n)} =1/3)+ \frac{1}{2} \PP( \exists! \ i : X_i=2/3 ) + \frac{2}{3} \PP(\exists i\neq j : X_i=X_j=2/3),
%
%h_n=\frac{1}{3} \PP( X_{(n)} =1/3)+\left( \frac{1}{2}\frac{2}{3}+\frac{1}{2}\frac{1}{3}  \right)\PP( \exists! \ i : X_i=2/3 ) + \frac{2}{3} \PP(\exists i\neq j : X_i=X_j=2/3),
%=\frac{2}{3}-\frac{n+2}{3 \cdot 2^{n+1}},
\]
where $X_{(n)}$ denotes the maximum of $n$ i.i.d. samples of $X \sim F$. The probabilities above are easy to compute, having that
%where the equality follows because 
$\PP( X_{(n)} =1/3)=1/2^n$, $\PP( \exists! \ i : X_i=2/3 )= n/2^{n}$ and $\PP(\exists i\neq j : X_i=X_j=2/3)=1-(n+1)/2^{n}$.
%Therefore, denoting $E_n^{FR}$ the set of SPE payoffs, we have that
Putting all together we conclude that
$E_n^{FR}= \left\{\left( h_n,h_n\right) \right\},$
with $$h_n=\frac{2}{3}-\frac{n+2}{3 \cdot 2^{n+1}}.$$

\noindent
\textit{No recall case.} 
Because we are considering an instance with $n$ time periods and the random variable only takes two possible values, for the first $n-2$ arrivals, at equilibrium, players bid if and only if the sample has value $2/3$. Therefore, at equilibrium: if at least two samples have value $2/3$ up to time $n-2$, players obtain $2/3$; if only one samples have value $2/3$ up to time $n-2$, each player obtain $2/3$ with probability $1/2$ and the expected value of $X_{(2)}$ with probability $1/2$; and if all samples up to $n-2$ are $1/3$, the players obtain an expected payoff $(e_n^1,e_n^2) \in E_2^{NR}$. Note that $\EE(X_{(2)})$ is the value of the decision problem in a standard prophet setting with two arrivals from $X$, and thus  $\EE\left(X_{(2)}\right)=7/12$. Therefore, an expected payoff of player 1 can be computed as
%Note that if we consider the game with $n$ time periods, in equilibrium, players bid in the first $n-2$ arrivals if and only if the value is $2/3$. 
%Calling $e_2$ to an SPEP of a player when $n=2$, it follows an expected payoff of the player playing a SPE in a $n$ stages games is given by:
\begin{eqnarray*}%\label{eqn:example:NR}
\PP\left( X_{(n-2)}=1/3\right) e_2^1+\PP( \exists! \ i\leq n-2 : X_i=2/3 )\frac{15}{24}+ \PP(\exists i\neq j : X_i=X_j=2/3, i,j \leq n-2) \frac{2}{3},
\end{eqnarray*}
whereas for player 2 we have 
\begin{eqnarray*}%\label{eqn:example:NR}
\PP\left( X_{(n-2)}=1/3\right) e_2^2+\PP( \exists! \ i\leq n-2 : X_i=2/3 )\frac{15}{24}+ \PP(\exists i\neq j : X_i=X_j=2/3, i,j \leq n-2) \frac{2}{3}.
\end{eqnarray*}
Again, the probabilities above are easy to compute, having $\PP\left( X_{(n-2)}=1/3\right)=1/2^{n-2}$, $\PP( \exists! \ i\leq n-2 : X_i=2/3 )=(n-2)/2^{n-2}$ and  $\PP(\exists i\neq j : X_i=X_j=2/3, i,j \leq n-2)=1-(n-1)/2^{n-2}$. Thus, we obtain that an expected payoff of player $i$ is given by: 
\begin{eqnarray}\label{eqn:example:NR}
 \frac{e_2^i}{2^{n-2}}+ \frac{15}{24} \cdot\frac{n-2}{2^{n-2}}+\frac{2}{3} \cdot \left(1-\frac{n-1}{2^{n-2}} \right)=\frac{2}{3}+\frac{e_2^i}{2^{n-2}}-\frac{n+4}{3\cdot 2^{n+1}}.
\end{eqnarray}

It remains to compute the set $E_2^{NR}.$ To this end, for  $a\ \in \{1/3,2/3\}$ we consider the game, called $\Gamma_1^{NR}(a)$, defined as $\Gamma_1^{NR}$ but with an initial available value $a$. That is, there is a time period ``zero'' where players choose between selecting the value $a$ or pass, before observing the value of the single sample of the game. We denote by $E_1^{NR}(a)$ the set of SPEP of the game $\Gamma_1^{NR}(a)$. Notice that 
\begin{equation}\label{eqn:nr:simple:example}
    E_2^{NR}= \left\{(\gamma_1,\gamma_2) : \gamma_i= \sum_{a \in  \{1/3,2/3\} } \PP(X=a) e_1^i(a), \text{where} \left(e_1^1(a), e_1^2(a) \right) \in E_1^{NR}(a)\right\},
\end{equation}
and then $E_2^{NR}$ can be easily computed from $E_1^{NR}(a).$ To study $E_1^{NR}(a),$ we consider the payoffs matrix for the game $\Gamma_1^{NR}(a),$ represented in Table~\ref{tab:payoff:example}.

  \begin{table}[h]
 \centering
    \setlength{\extrarowheight}{7pt}
    \begin{tabular}{cc|c|c|c|}
      & \multicolumn{1}{c}{} & \multicolumn{2}{c}{Player $2$}\\
      & \multicolumn{1}{c}{} & \multicolumn{1}{c}{$a$}  &  \multicolumn{1}{c}{$\emptyset$} \\\cline{3-4}
      \multirow{2}*{Player $1$}  & $a$ & $\left(\frac{1}{2} a + \frac{1}{4} , \frac{1}{2} a + \frac{1}{4}\right)$ & $\left( a, \frac{1}{2}\right)$ \\\cline{3-4}
    
       \multirow{2}*{}  & $\emptyset$ & $\left(\frac{1}{2},a\right)$ &  $\left( \frac{1}{4}, \frac{1}{4}\right)$ \\\cline{3-4}
    \end{tabular}
    
    \caption{\label{tab:payoff:example}Payoffs matrix for the game $\Gamma_1^{NR}(a)$.}
 \end{table}
 
 Observe that  if $a=2/3$, $(a,a)$ is the unique NE with payoff $(7/12, 7/12)$, and thus $E_1^{NR}(2/3)=\{(7/12, 7/12)\}$. Otherwise, if $a=1/3$, there are three NE: $(a, \emptyset), (\emptyset,a)$ and a symmetric mixed equilibrium in which both agents play $a$ with probability $1/2$ and pass with probability $1/2$. The equilibrium payoffs are $(1/3,1/2), (1/2,1/3)$ and $(3/8,3/8)$, respectively, and then $E_1^{NR}(1/3)=\{(1/3,1/2), (1/2,1/3), (3/8,3/8)\}.$ 
  We now can compute the set $E_2^{NR}$ by using \eqref{eqn:nr:simple:example}, obtaining 
 \[E_2^{NR}= \{(11/24,13/24), (13/24, 11/24), (23/48,23/48)\}.\]

 Finally, we go back to equations \eqref{eqn:example:NR} and we conclude that $E_n^{NR}$ is the three-elements set
 \begin{eqnarray}\label{set:SPE:exaple:NR}
 E_n^{NR}=\left\{ \left(p_n,q_n\right), \left(r_n,r_n \right),\left(q_n, r_n \right) \right\},
 \end{eqnarray} 
 %\begin{eqnarray}\label{set:SPE:exaple:NR}
 %E_n^{NR}=\left\{ \left(\frac{2}{3}-\frac{n+1}{3 \cdot 2^{n+1}}, \frac{2}{3}- \frac{n+3}{3 \cdot 2^{n+1}} \right); \left(\frac{2}{3}- \frac{n+3}{3 \cdot 2^{n+1}}, \frac{2}{3}-\frac{n+1}{3 \cdot 2^{n+1}} \right);\left(\frac{2}{3}- \frac{n+5/2}{3 \cdot 2^{n+1}}, \frac{2}{3}-\frac{n+5/2}{3 \cdot 2^{n+1}} \right) \right\}
 %\end{eqnarray} 
     with \[ p_n:=\frac{2}{3}-\frac{n+1}{3 \cdot 2^{n+1}}, \ \ q_n:=\frac{2}{3}- \frac{n+3}{3 \cdot 2^{n+1}} \ \ \text{  and  } \ \  r_n:=\frac{2}{3}-\frac{n+5/2}{3 \cdot 2^{n+1}}.\]
     Note that $p_n$ and $q_n$ represent the best and worst possible payoff of a player in equilibrium, respectively, whereas $r_n$ represents the payoff in between. 
     
     \vspace{0.5cm}
     
     Recall that $h_n=2/3-(n+2)/(3 \cdot 2^{n+1})$ and therefore     $p_n > h_n > r_n > q_n.$
  Moreover $p_n+q_n=2h_n>2r_n$. The latter means that the sum of equilibrium payoffs of the players in the full recall case is equal to the sum of the payoffs of players in the no recall case if an asymmetric equilibrium is played. Otherwise, i.e., if in the no recall case the symmetric equilibrium is played, then the sum of payoffs of the players is strictly lower than in the full recall case,  which is somehow not surprising. 
  
  However, for this example we observe that, surprisingly, the best payoff a single player can obtain at equilibrium is strictly  higher in the no recall case than in the full recall case.  Intuitively, what happens here is that the lack of recalling power implies that with positive probability one player will take the ``bad'' value  in period $n-1$ and therefore it gives an advantage to the other player, who is now alone in the game.  
  
A natural question is to ask how efficient are the equilibria of the games compared with the best feasible strategy. We use the well known notions of Price of Stability (PoS) and Price of Anarchy (PoA) which are defined as the ratio between the maximal sum of payoffs obtained by players under any feasible strategy and the best and worst sum of payoffs of SPEs respectively.  For the example, this analysis is easy to do, and we start by noting that in both model variants, there exist a feasible strategy (not the same for both problems) where players obtain the first and the second best values. This implies that the numerator of PoA and PoS will be the same and equal to $\EE(X_{(n)}+X_{(2:n)})$, where $X_{(2:n)}$ denotes the second best sample, which is given by $4/3-(n+2)/(3 \cdot 2^n ).$ Observe that this value is exactly $2 h_n$ and $p_n+q_n$ and then PoS is 1 for both settings and PoA is 1 for the full recall case, meaning that in the full recall case the resulting equilibria is efficient. On the other hand, the PoA for the no recall case is given by $\EE(X_{(n)}+X_{(2:n)})/(2 r_n)$ which is strictly greater to 1 but it converges fast to one when $n$ grows.

This example highlights the relevance of our work in three ways. First, even in this case where the random variable only takes two possible values, we observe that the computation of the SPEP requires a detailed analysis and then computing these sets for a general probability distribution, even under some mild assumptions, seems difficult and challenging. On the other hand, this example shows us that the power of recalling is not always favorable to all players, and then it is an interesting question to try to understand under which assumptions one can ensure that the payoff of a player in the full recall case in equilibria will be at least as good as the one in the no recall case. Finally, we see in this example that the SPEP are efficient and it motivates us to study if that still holds for a general distribution.

\subsection{ Description  of  the subgame-perfect equilibrium payoffs} \label{sec:payoffs}
%\subsection{Study of SPE payoffs}\label{sec:payoffs}
Here we  will fully  characterize the  set of SPEP for the full recall case. For the no recall case, we provide recursive formulas to compute the best and worst sum of SPEP.
%{\color{blue} Due to there are some properties that we prove and use in the full recall case and that are not valid for the no recall case, it seems hopeless to give a full characterization of the set of SPEP for the no recall case. However, we} provide recursive formulas to compute the best and worst sum of SPEP.  
%The sets of SPE payoffs are not easy to compute in general, and we start with a simple example. We will conclude this section by comparing the SPE payoffs in the two variants.

\subsubsection {Full recall case}\label{sec:payoffs:FR}
%We now consider that $F_i=F$ for all $i \in [N]$, with support $[\omega_0, \omega_1]$. That is, $\omega_0= \inf \{x \in \RR : F(x)>0\}$ and $\omega_1=\sup \{ x \in \RR : F(x)<1\}$.
% Note that in this i.i.d. case we can assume that item represented by the random variable $X_i$ arrives at time $i$. Furthermore, we can assume that the arrivals just corresponds to $N$ samples of a random variable $X$ distributed according to $F$. 

%We also assume that there are only two decision makers in the market. 
We go back to the general case and study here the game with  full recall, that is, at time $t$, any of the values so far arrived that has not been selected before can be selected.
%any of the available items from $\{1, \dots, t\}$ that has not been selected before can still be selected. 

We introduce a two-player game $\Gamma^{FR}_n(a,b)$, where for each natural number $n$ and $1 \geq a \geq b \geq 0$,  $\Gamma^{FR}_n(a,b)$ is defined as $\Gamma^{FR}_n$  with $n$ samples to arrive and $a$ and $b$  two available values present at the beginning of the game. That is, we have a time period ``zero'' where players choose between getting $a,b$ or pass, before the sequential arrival of the samples.
We denote by $E_n^{FR}(a,b) \subset \RR_+^2$ the set of the SPEP of the game $\Gamma^{FR}_n(a,b).$ Note that the set of the SPEP of the game $\Gamma^{FR}_n$ is just the set $E_n^{FR}(0,0)$.
 
%and then it is enough to study $\E_n(a,b)$ for all $n$ and $a > b$ where $a$ and $b$ can be thought as the first and second best available items so far, respectively. \footnote{Note that as we consider a continuous random variable, $a=b$ with probability 0 and then we can consider w.l.o.g. that $a<b$.}

The next theorem states that the set of SPEP for the defined auxiliary game $\Gamma^{FR}_n(a,b)$ is contained in the   diagonal, i.e., at each SPE both players get the same payoff. We  see this  as a surprising result. 

 \begin{thm}\label{thm:SPEPcharact}
Consider  an instance of the game $\Gamma^{FR}_n(a,b)$, for $a,b$ real numbers such that $0\leq b \leq a\leq 1$ and $n \in \NN$. 
The set of SPE payoffs is contained in the   diagonal, that is $E_n^{FR}(a,b) \subset \{(u,u), u\in [0, 1]\}$. 

 Furthermore, if we define $\text{P}E_n^{FR}(a,b)$ the projection of $E_n^{FR}(a,b)$ to $\RR$, we have  $\min~\text{P}E_n^{FR}(a,b)=l_n(a,b)$ and  $\max \text{P}E_n^{FR}(a,b)=h_n(a,b)$, where $l_n$ and $h_n$ are defined recursively as follows: 
 \begin{itemize}
     \item[i)]$l_0(a,b)=h_0(a,b)=\frac{a+b}{2};$
     %\item $l_1(a,b)=h_1(a,b)=\frac{a}{2}+\frac{1+b^2}{4};$
     \item[ii)] for $n\geq 1$: 
     \[l_n(a,b)=L(a,\EE\left(X_{(n)} \vee b), d^-_n(a,b) \right) \text{ with  } d^-_n(a,b)= \EE_X\left( l_{n-1}(a \vee X, \text{med}[a,b,X])\right),\]
%\[m_n(a,b)=M(a,\EE\left(X_{(n)} \vee b), d_n \right) \text{ with  } d_n= \EE_X\left( m_{n-1}(a \vee X, \text{med}[a,b,X])\right), \text{ and }\]
\[h_n(a,b)=H(a,\EE\left(X_{(n)} \vee b), d_n^+(a,b) \right) \text{ with  } d_n^+(a,b)= \EE_X\left( h_{n-1}(a \vee X, \text{med}[a,b,X])\right),\]
where $X_{(n)}=\max\{X_1, \dots, X_n\}$, med denotes the median, and $L: \RR^3 \rightarrow \RR $ and $H:\RR^3 \rightarrow \RR$ are defined by:
  
  \[   
L(x,y,z) = 
     \begin{cases}
       z & \text{if } x \leq y ,\\
      \frac{1}{2}(x+y) &\text{if } x > y,\\ 
     \end{cases}
\;\; {\it and}\;\;  
H(x,y,z) = 
     \begin{cases}
      \frac{1}{2}(x+y) & \text{if } x > y \vee z,\\
      z  &\text{if }  x  \leq y \vee z.\\ 
     \end{cases}
\]
 \end{itemize}

%where $\text{med}[a,b,x]$ denotes the median between $a,b$ and $X$.
 \end{thm}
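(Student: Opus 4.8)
The plan is to argue by induction on $n$, establishing simultaneously the diagonal inclusion $E_n^{FR}(a,b)\subseteq\{(u,u)\}$ and the two recursive formulas for the endpoints of $\mathrm{P}E_n^{FR}(a,b)$. For the base case $n=0$, with no samples still to come and available values $a\ge b$, both players prefer $a$; the essentially unique equilibrium outcome has them both bid for $a$, tie, and the loser take $b$, so each receives $\tfrac12(a+b)$, matching $l_0=h_0=\tfrac{a+b}{2}$.

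For the inductive step I would first reduce the state to its top two values. A player who is alone takes the running maximum, and since full recall keeps every value available, in any SPE no value strictly below the current second highest is ever selected; hence a position is summarized by its two largest available values $(a,b)$. At stage zero the players choose among take-$a$, take-$b$, pass, but taking $b$ is weakly dominated by passing (the value $b$ survives, so waiting never costs and may gain), so the effective stage-zero game is the $2\times 2$ game with actions $A$ (take $a$) and $P$ (pass). Writing $y:=\EE(X_{(n)}\vee b)$ for the payoff of a player left alone against $b$ and the $n$ samples, and letting $d$ denote the continuation payoff after $(P,P)$ — which, once $X=X_1$ arrives, is the SPE value of $\Gamma^{FR}_{n-1}(a\vee X,\mathrm{med}[a,b,X])$ and hence, by the induction hypothesis, is symmetric and lies in $[d_n^-(a,b),d_n^+(a,b)]$ — the stage-zero payoffs are
\[
(A,A)\mapsto\big(\tfrac{a+y}{2},\tfrac{a+y}{2}\big),\quad (A,P)\mapsto (a,y),\quad (P,A)\mapsto (y,a),\quad (P,P)\mapsto (d,d).
\]

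The crux is then the equilibrium analysis of this matrix. Profile $(A,A)$ is an equilibrium iff $a\ge y$ (diagonal payoff $\tfrac{a+y}{2}$), profile $(P,P)$ is an equilibrium iff $a\le d$ (diagonal payoff $d$), while the off-diagonal profile $(A,P)$ is an equilibrium iff $y\ge a$ and $a\ge d$ (using the smallest continuation $d=d_n^-$ to deter the deviation of the player holding $a$). The whole theorem therefore hinges on the \textbf{key monotonicity lemma}: whenever $a\le y$ one has $a\le d_n^-(a,b)$, with strict inequality when $a<y$. Granting this, the requirement $a\ge d_n^-$ fails in the regime $a<y$, so $(A,P)$ is never an equilibrium off the diagonal and $E_n^{FR}(a,b)\subseteq\{(u,u)\}$ follows. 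Reading off the surviving diagonal equilibria then reproduces the formulas: when $a>y$ only $(A,A)$ can be worst (and $(P,P)$, if present, gives a larger value), so $l_n=\tfrac{a+y}{2}$; when $a\le y$ the lemma makes $(P,P)$ an equilibrium with $l_n=d_n^-$ — exactly $L(a,y,d_n^-)$. The same enumeration with the best continuation $d_n^+$, noting $(A,A)$ is selected precisely when $a$ exceeds both $y$ and $d_n^+$, yields $h_n=H(a,y,d_n^+)$.

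\textbf{Main obstacle.} The genuine difficulty is the key monotonicity lemma, namely that the two-player pass-continuation value dominates $a$ exactly in the regime where waiting alone beats grabbing $a$. The naive per-player deviation ``grab $a\vee X$ now'' guarantees only $\tfrac12(a\vee X)$, which is too weak, so I expect to prove the lemma via an auxiliary lower bound on $l_m(\cdot,\cdot)$ carried along in the same induction: one compares $\EE_X\, l_{n-1}(a\vee X,\mathrm{med}[a,b,X])$ with $a$ using $a\vee X\ge a$, monotonicity of $l_m$ in its arguments and in $m$, and the defining hypothesis $a\le\EE(X_{(n)}\vee b)$ to capture the gain from recall. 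Establishing the required monotonicity of $l_m$ and $h_m$ in $(a,b)$ and in $m$, plausibly through a coupling of the sample paths across the two instances, is the step I anticipate will demand the most care.
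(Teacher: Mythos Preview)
Your overall architecture---induction on $n$, reduction to the top two values, collapse to the $2\times 2$ stage game, and identification of the ``key monotonicity lemma'' $a\le y\Rightarrow a\le d$---matches the paper's proof exactly. Two points deserve correction or sharpening.

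\emph{Elimination of the action $b$.} Weak dominance of $b$ by $P$ (which is what your ``$b$ survives'' argument establishes) does not by itself eliminate $b$ from equilibrium play, so it does not justify restricting to the $2\times 2$ game. The paper instead shows that, when $a>b$, the action $b$ is \emph{strictly} dominated by the mixed action $\tfrac12 A+\tfrac12 P$ (using $d\ge b$, which holds because the second-best item in any continuation is at least $b$). The case $a=b$ is handled separately. You should also not omit the monotonicity of the stage-game equilibrium payoff set in the continuation value $d$: the paper verifies explicitly that $\min$ and $\max$ of the Nash payoffs of the $2\times 2$ game are nondecreasing in $d$, which is what licenses plugging in $d_n^-$ for $l_n$ and $d_n^+$ for $h_n$.

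\emph{The key lemma.} Here your plan diverges from the paper, and your proposed route (monotonicity of $l_m$ in its arguments and in $m$, plus coupling) is both vague and unlikely to close the gap: from $l_{n-1}(a\vee X,\mathrm{med}[a,b,X])\ge l_{n-1}(a,b)$ you would only get $d_n^-\ge l_{n-1}(a,b)$, and there is no reason this exceeds $a$. The paper does \emph{not} abandon the strategic lower bound you call ``naive''; it refines it. In the continuation game $\Gamma_{n-1}^{FR}(a\vee x,\mathrm{med}[a,b,x])$, let $\beta(x)=\EE\big[\mathrm{med}[a,b,x]\vee X_{(n-1)}\big]$ be the expected value of the second-best item. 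Player~1 plays: bid for $a\vee x$ if $a\vee x\ge\beta(x)$, otherwise wait to the end and take the second-best item. This guarantees at least $\tfrac12\big((a\vee x)+\beta(x)\big)$ in every case, hence $d\ge \tfrac12\EE[X\vee a]+\tfrac12\EE[\beta(X)]$. The proof is then completed by an elementary analytic inequality: for integrable $X,Y$, if $\EE[X\vee Y]\ge a$ then $\tfrac12\EE[X\vee a]+\tfrac12\EE[(X\wedge a)\vee Y]\ge a$ (with strict versions). Applying this with $Y=X_{(n-1)}\vee b$ and using $\EE[X\vee Y]=\EE[X_{(n)}\vee b]=y\ge a$ gives $d\ge a$ directly. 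So the ``grab now'' idea was not too weak---it only needed the conditional refinement ``grab if it beats the expected second-best, else wait for the second-best.''
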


Using the theorem above, we can prove the  following result which  fully characterizes the set of SPEP of the games $\Gamma^{FR}_n$ when $F$ is continuous (i.e. when the corresponding distribution is atomless). 

 \begin{thm}\label{thm:caract}
Assume $F$ is continuous. Then, for $n\geq 1$, the set $E^{FR}_n$ of SPE payoffs of the game $\Gamma^{FR}_n$ is the segment: 
 \[E^{FR}_n=\{\lambda (l_n,l_n)+(1-\lambda) (h_n,h_n), \lambda \in [0,1]\}, \text{ where  }  l_n=l_n(0,0) \text{ and } h_n=h_n(0,0).\]
 
That is, $E^{FR}_n$  is convex, contained in the   diagonal, and its extreme points are   $(l_n(0,0),l_n(0,0))$ and $(h_n(0,0),h_n(0,0))$, where $l_n(0,0)$ and  $h_n(0,0)$  are defined in the  statement of  Theorem~\ref{thm:SPEPcharact}.  \end{thm}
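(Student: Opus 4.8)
The plan is to deduce Theorem~\ref{thm:caract} from a stronger statement about the auxiliary games, proved by induction on $n$: for every $0\le b\le a\le 1$ the projection $\mathrm{P}E_n^{FR}(a,b)$ is the \emph{full} interval $[l_n(a,b),h_n(a,b)]$. Granting this, the theorem is the instance $a=b=0$: by Theorem~\ref{thm:SPEPcharact} the set $E_n^{FR}=E_n^{FR}(0,0)$ lies on the diagonal with $\min\mathrm{P}E_n^{FR}=l_n$ and $\max\mathrm{P}E_n^{FR}=h_n$, and the interval structure of the projection turns this into the announced segment $\{\lambda(l_n,l_n)+(1-\lambda)(h_n,h_n):\lambda\in[0,1]\}$. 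Since Theorem~\ref{thm:SPEPcharact} already supplies containment in this segment together with both endpoints, the entire content is to prove \emph{connectedness} of the projection, i.e. that every intermediate symmetric payoff is attained by some SPE. The base case $n=0$ is immediate because $l_0(a,b)=h_0(a,b)=(a+b)/2$ is a single point.

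For the inductive step I would first isolate the continuation-value engine. Assume the interval property at level $n-1$ for all pairs. After both players pass at the initial decision and the first sample realizes as $x$, the continuation is the game $\Gamma_{n-1}^{FR}(a\vee x,\mathrm{med}[a,b,x])$, whose SPEP projection is, by the induction hypothesis, the interval $[l_{n-1}(a\vee x,\mathrm{med}),h_{n-1}(a\vee x,\mathrm{med})]$. For each $\mu\in[0,1]$ one selects, measurably in $x$ (via a measurable-selection theorem applied to the SPE-payoff correspondence, whose values are these intervals and whose endpoints $l_{n-1},h_{n-1}$ are measurable), an SPE with payoff $w_\mu(x)=(1-\mu)\,l_{n-1}(a\vee x,\mathrm{med})+\mu\,h_{n-1}(a\vee x,\mathrm{med})$. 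Concatenating these continuation equilibria yields a subgame-perfect continuation (subgame perfection of the concatenation being automatic in this finite-horizon recursive setting), with expected continuation value $(1-\mu)d_n^-(a,b)+\mu\,d_n^+(a,b)$. Hence every $z\in[d_n^-(a,b),d_n^+(a,b)]$ is attainable as a continuation value; this is the step that produces a continuum of symmetric payoffs rather than finitely many.

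The second half of the step analyzes the initial simultaneous grab/pass decision, parameterized by the attainable continuation value $z$. Writing $y=\EE(X_{(n)}\vee b)$ for the solo value of a player whose opponent has grabbed $a$, the reduced stage game is
\[
\begin{array}{c|cc}
 & \text{grab} & \text{pass}\\\hline
\text{grab} & \big(\tfrac12(a+y),\tfrac12(a+y)\big) & (a,y)\\
\text{pass} & (y,a) & (z,z)
\end{array}
\]
and I would split into the two regimes defining $L$ and $H$. When $a>y$ one shows the admissible continuation values satisfy $z<a$, so grabbing is dominant, the unique equilibrium is $(\text{grab},\text{grab})$ with symmetric payoff $\tfrac12(a+y)$, and $l_n(a,b)=h_n(a,b)=\tfrac12(a+y)$ is a degenerate interval. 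When $a\le y$ one has $l_n=d_n^-$ and $h_n=d_n^+$, and the goal is to realize every $z\in[d_n^-,d_n^+]$ as a symmetric equilibrium payoff: the pure profile $(\text{pass},\text{pass})$ does this whenever it is an equilibrium (i.e. $z\ge a$), and the remaining low values of $z$ are covered by the symmetric mixed equilibria, using upper hemicontinuity of the Nash correspondence in $z$ to guarantee that these families connect up without gaps as $z$ sweeps the interval. Taking $a=b=0$ (so $a\le y$) gives Theorem~\ref{thm:caract} directly.

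The main obstacle is exactly this gap-freeness in the final step: for a \emph{fixed} $z$ the stage game is a coordination/anti-coordination game whose symmetric equilibrium payoffs form a finite set (as in the three-point set of the motivating example), so connectedness cannot come from a single stage game and must instead be extracted from the interplay between the continuum of admissible continuation values $z$, the mixed equilibria, and the continuity of the equilibrium correspondence; making this precise requires the monotonicity bounds relating $d_n^\pm(a,b)$ to $a$ and $y=\EE(X_{(n)}\vee b)$ that are implicit in the $L,H$ description. It is also here that continuity of $F$ is essential: atomlessness makes ties null, so that the $\mathrm{med}$-recursion and the solo value $y$ are exact and the averaging step genuinely sweeps an interval of continuation values, which is what upgrades the possibly finite SPEP set of the atomic case into a full segment.
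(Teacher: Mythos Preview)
Your inductive hypothesis—that $\mathrm{P}E_n^{FR}(a,b)$ equals the full interval $[l_n(a,b),h_n(a,b)]$ for \emph{every} pair $(a,b)$—is too strong and is in fact false. The case you dismiss as degenerate is exactly where it fails: when $y=\EE(X_{(n)}\vee b)<a$ it does \emph{not} follow that all admissible continuation values satisfy $z<a$. Proposition~\ref{prop:ineq} gives only $y\ge a\Rightarrow z\ge a$, not the converse, and the regime $y<a\le d_n^+(a,b)$ is perfectly possible; there both $(\text{grab},\text{grab})$ and $(\text{pass},\text{pass})$ are equilibria and $l_n(a,b)=\tfrac12(a+y)<h_n(a,b)=d_n^+(a,b)$. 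A concrete counterexample: $F$ uniform on $[0,1]$, $n=2$, $b=0$, $a=0.675$. Here $y=\EE(X_{(2)})=2/3<a$, while $d_2^-(a,0)=d_2^+(a,0)\approx 0.6766>a$ (indeed, $a$ lies in the interval $(2/3,a^*)$ computed in Section~\ref{sec:unif}). Since $E_1^{FR}(\cdot,\cdot)$ is everywhere a singleton, there is a \emph{unique} continuation value $z\approx 0.6766$, and the stage game with $z>a>y$ has exactly three Nash payoffs $\{\tfrac12(a+y),\text{mixed},z\}$. Thus $\mathrm{P}E_2^{FR}(0.675,0)$ is a three-point set, not an interval, and your selection $w_\mu(x)=(1-\mu)l_{n-1}+\mu h_{n-1}$ need not lie in $\mathrm{P}E_{n-1}^{FR}$.

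The paper sidesteps this by proving convexity only at $(a,b)=(0,0)$, through a different mechanism. Since bidding for a zero-valued item is strictly dominated, $E_n^{FR}(0,0)=\EE_X[E_{n-1}^{FR}(X,0)]$ is directly an Aumann integral with respect to the atomless law of $X$; such integrals are convex regardless of the shape of the integrand sets. Equivalently, set $\mu(\alpha)=\int_{[0,\alpha)} l_{n-1}(x,0)\,dF(x)+\int_{[\alpha,1]} h_{n-1}(x,0)\,dF(x)$: atomlessness makes $\mu$ continuous with $\mu(0)=h_n$ and $\mu(1)=l_n$, and the intermediate value theorem fills in the whole segment. This requires only that $l_{n-1}(x,0)$ and $h_{n-1}(x,0)$ are both attained in $\mathrm{P}E_{n-1}^{FR}(x,0)$ (Theorem~\ref{thm:SPEPcharact}), not that the latter is an interval. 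The continuum of payoffs comes from averaging over the continuous first sample, not from a continuum of continuation equilibria at fixed $x$.
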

 
 %\begin{remark} The proof shows that the above theorem extends to the cases where the c.d.f. $F$ is not assumed to be continuous, e.g. for discrete distributions. \end{remark}

\subsubsection{No recall case}\label{sec:payoff:NR}
We now consider the no recall variant, where players can only play in a take-it-or-leave-it fashion, without being able to select a sample arrived in the past. 
In this section, we assume that $F$ is continuous, which ensures that the set of SPEP is convex, and allows us to derive explicit recursive formulas for the support  function of this set in particular directions.

%We assume that the value of each item has c.d.f. $F$, with support $[\omega_0, \omega_1]$. That is, $\omega_0= \inf \{x \in \RR : F(x)>0\}$ and $\omega_1=\sup \{ x \in \RR : F(x)<1\}$. 
We  introduce here  the two-player  game $\Gamma^{NR}_n(a)$, where for each natural number $n$ and $a \in [0, 1]$, $\Gamma^{NR}_n(a)$ is defined as $\Gamma_n^{NR}$   with $n$ samples to arrive, but with   $a$ 
an available value present at the beginning. 
That is, we have a time period ``zero'' where players choose between getting $a$ or pass, before the sequential arrival of the samples.

Calling $E_n^{NR}(a) \subset \RR_{+}^2$ the set of SPEP of the game $\Gamma^{NR}_n(a)$, we have that the set of the SPEP of $\Gamma^{NR}_n$   is just $E^{NR}_n:=E_n^{NR}(0)=\EE_{a \sim F} (E_{n-1}^{NR}(a)),$ where 
\begin{eqnarray*}
\EE_{a \sim F} (E_{n-1}^{NR}(a))=  \left\{ \int_0^1 f(a) \mathrm{d}F(a), f:[0,1] \rightarrow \R^2 \text{   measurable with } f(a)\in E_{n-1}^{NR}(a) \text{ for each } a \right\}.
\end{eqnarray*}

Below we present a technical result, stating that the set of SPE payoffs for the no recall case is symmetric with respect to the diagonal, and when $F$ is continuous, it is also convex and compact. 

\begin{proposition}\label{prop:convexity}
For each natural number $n$, the set of the SPE payoffs $E^{NR}_n$ is is symmetric with respect to the diagonal. If $F$ is continuous, $E^{NR}_n$ is convex compact. 
\end{proposition}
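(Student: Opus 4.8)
The plan is to prove the two claims of Proposition~\ref{prop:convexity} separately, beginning with the symmetry, which is the more conceptual part, and then handling convexity and compactness under the continuity assumption.

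\textbf{Symmetry with respect to the diagonal.} First I would argue that the game $\Gamma^{NR}_n$ is symmetric under exchange of the two players: the action sets, the sequential arrivals $X_1,\ldots,X_n$, the tie-breaking rule (uniform coin toss), and the payoff definition are all invariant under swapping the roles of player~1 and player~2. Formally, if $\tau$ denotes the involution on strategy profiles that exchanges the two players' strategies, then $\sigma=(\sigma_1,\sigma_2)$ is a SPE if and only if $\tau(\sigma)=(\sigma_2,\sigma_1)$ is a SPE, and the payoff vector transforms by coordinate swap: if $\sigma$ yields $(u,v)$, then $\tau(\sigma)$ yields $(v,u)$. I would verify that this swap commutes with the subgame-perfection requirement by noting that every proper subgame after a history $h_t$ is mapped by $\tau$ to the isomorphic subgame after the swapped history, so the NE property is preserved subgame-by-subgame. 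Hence $(u,v)\in E^{NR}_n \iff (v,u)\in E^{NR}_n$, which is exactly symmetry of $E^{NR}_n$ across the diagonal. This argument needs no continuity of $F$, matching the statement.

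\textbf{Convexity and compactness when $F$ is continuous.} Here I would proceed by induction on $n$, exploiting the recursion $E^{NR}_n=\EE_{a\sim F}(E^{NR}_{n-1}(a))$ together with the auxiliary identity $E_n^{NR}(a)$ obtained by analyzing the extra period-zero decision on the value $a$. The base case is the one-sample game, whose SPEP set I would compute directly from the $2\times2$ payoff matrix (analogous to Table~\ref{tab:payoff:example}), checking by hand that it is convex and compact. For the inductive step, the key observation is that taking the $F$-expectation of a set-valued map, $\EE_{a\sim F}(E^{NR}_{n-1}(a))$, yields a convex set whenever $F$ is atomless: this is the set-valued analogue of Lyapunov's convexity theorem (the range of a nonatomic vector measure is convex), applied to the selections $f(a)\in E^{NR}_{n-1}(a)$. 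Convexity of each fiber $E^{NR}_{n-1}(a)$ is not even required for the expectation to come out convex, but I would nonetheless carry convexity and compactness of the fibers through the induction to control the selection integrals. Compactness of $E^{NR}_n$ I would get from boundedness (all payoffs lie in $[0,1]^2$ since the support of $F$ is in $[0,1]$) together with closedness, the latter following from closedness of the fibers and a standard measurable-selection/limit argument on the integrals $\int_0^1 f(a)\,\mathrm{d}F(a)$.

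\textbf{Main obstacle.} I expect the delicate point to be establishing convexity of $\EE_{a\sim F}(E^{NR}_{n-1}(a))$ rigorously: one must invoke the nonatomicity of $F$ (continuity) in the right form, ensuring that the integral of the set-valued map over the atomless measure $F$ is convex even though the fibers $E^{NR}_{n-1}(a)$ vary with $a$ and a priori could contain isolated points (as in the discrete example, where each fiber was a three-point set). The correct tool is the Aumann integral of a set-valued map with respect to a nonatomic measure, whose convexity is guaranteed by Lyapunov-type theorems; I would state the needed measurability of $a\mapsto E^{NR}_{n-1}(a)$ (as a closed-valued measurable multifunction) and then apply the convexity and compactness results for Aumann integrals. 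A secondary technical care is to justify that the SPEP correspondence $a\mapsto E^{NR}_{n-1}(a)$ is measurable with compact values, which I would obtain from the explicit recursive construction and the continuity of the maps defining the period-zero best responses.
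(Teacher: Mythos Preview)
Your proposal is correct and follows essentially the same approach as the paper: symmetry from invariance of the game under player swap, and convexity/compactness by induction using the Aumann integral of $a\mapsto E^{NR}_{n-1}(a)$ with respect to the atomless measure $F$. The paper cites Theorem~8.6.3 in \cite{AF09} (requiring nonempty closed values and compact graph, obtained from upper-hemicontinuity of Nash equilibrium payoffs in the auxiliary matrix games $G_n(a;d,e)$), which is the same Lyapunov/Aumann-type result you identify.
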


Although  the set of SPEP for the no recall case is convex and symmetric with respect to the diagonal, it will not be a subset of the diagonal. The recursive structure of the SPEP in the no recall case is more complex than the full recall case. However, in the main result of this section we give explicit recursive formulas to compute the sum of the SPE payoffs for the best and worst equilibria under the no recall setting. Recall that by best (resp. worst) equilibrium we mean a SPE which maximizes (minimizes) the sum of the payoffs of the  2 players. 
%$E^{NR}_n$ will be convex and symmetric with respect to the identity line, but $E^{NR}_n$ will not be a subset of the diagonal. We will  give recursive formulas to compute the sum of the SPE payoffs for the best and worst equilibria under the no recall setting. Here, by best (resp. worst) equilibrium we mean a SPE which maximizes (minimizes) the sum of the payoffs of the  2 players. 

Before presenting the result, we introduce some necessary notation. 
We  first define by induction (with $X \sim F$):
  $$c_1=\EE(X), \; \; {\rm and}\; \forall n>1, \; c_n=\EE(X \vee c_{n-1}) .$$
  Note that $c_n$ is the value of the decision problem in a standard prophet setting. 
  
On the other hand, we denote by $\alpha_n$ ($\beta_n$)   the smallest (highest) coordinate value of a point on $E_n^{NR}$ belonging to the diagonal, and by $\alpha'_n$ the smallest coordinate of a point belonging to $E_n^{NR}$. That is,
 $\alpha_n:= \min \{ x : (x,x) \in E^{NR}_n\}$, $\beta_n:= \max \{x : (x,x) \in E^{NR}_n \}$ and $\alpha'_n= \min \{ \min \{ x,y \} : (x,y) \in E^{NR}_n\}$ (see Figure~\ref{fig:def}). It is easy to see that: $\alpha'_1=\alpha_1=\beta_1= 1/2 \cdot \E(X).$

  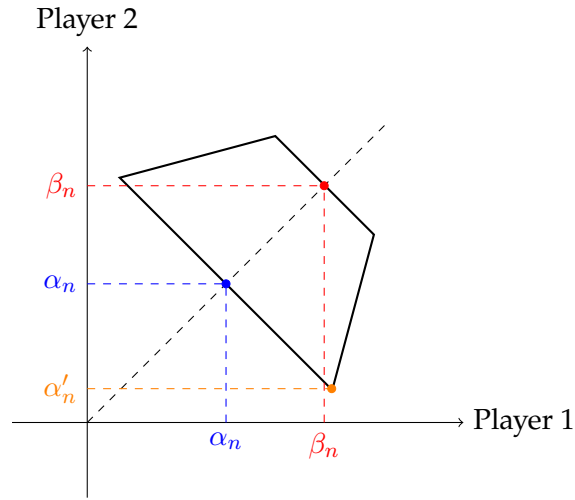
\begin{figure}[h]
 \centering 
 \begin{tikzpicture}
\draw[->] (-1, 0) -- (5, 0) node[right] {Player 1};
  \draw[->] (0, -1) -- (0, 5) node[above] {Player 2};
  \node at (2.5,2.5) (a)[trapezium, rotate=-45, trapezium angle=60, minimum width=40mm, draw, thick] {};
  \draw[scale=1, domain=0:4, dashed, smooth, name path=line, variable=\x] plot ({\x}, {\x});
\filldraw[blue] (1.845,1.845) circle (1.5pt) node[anchor=west] {};
\draw[dashed, blue] (1.845, 0) -- (1.845, 1.845) ;
\draw[dashed, blue] (0, 1.845) -- (1.845, 1.845) ;
\node[left, blue] at (0, 1.845) {$\alpha_n$};
\node[below, blue] at (1.845, 0) {$\alpha_n$};

\filldraw[red] (3.15,3.15) circle (1.5pt) node[anchor=west] {};
\draw[dashed, red] (3.15, 0) -- (3.15, 3.15); 
\draw[dashed, red] (0, 3.15) -- (3.15, 3.15);
\node[left, red] at (0, 3.15) {$\beta_n$};
\node[below, red] at (3.15, 0) {$\beta_n$};

\filldraw[orange] (3.25,0.45) circle (1.5pt) node[anchor=west] {};
\draw[dashed, orange] (0, 0.45) -- (3.25, 0.45);
\node[left, orange] at (0, 0.45) {$\alpha'_n$};
\end{tikzpicture}
 \caption{Representation of $\alpha_n$, $\beta_n$ and $\alpha'_n$  if the set $E_n^{NR}$  is given by the trapezoid.}%
   \label{fig:def}%
\end{figure}

\begin{thm}\label{thm:norecall} Assume $F$ is continuous. In the game $\Gamma^{NR}_n$, the following holds:
 
\begin{itemize}
\small
\item[a)] the {\bf worst payoff} a player can get at equilibrium  is $\alpha'_n$, where for $n\geq1$:
$$\alpha'_{n+1}=\frac{c_n+1}{2}-\int_{\alpha'_n}^{c_n} F(a) \mathrm{d}a -\frac{1}{2}\int_{c_n}^{1} F(a) \mathrm{d}a.$$

\item [b)] the sum of payoffs for the {\bf best SPE} is $2 \beta_n$, where for $n>1$:
$$2\beta_{n}=\int_{0}^{\alpha'_{n-1}} 2 \beta_{n-1} \mathrm{d}F(a) +\int_{\alpha'_{n-1}}^{\beta_{n-1}}  \max\{a+c_{n-1}, 2\beta_{n-1} \} \mathrm{d}F(a) +\int_{\beta_{n-1}}^{1} (a+c_{n-1}) \mathrm{d}F(a).$$

\item [c)] the sum of payoffs for the {\bf worst SPE}  is $2 \alpha_n$, where for $n>1$:
 \begin{eqnarray*}
 2\alpha_n&=& \int_{c_{n-1}}^{1} (a+c_{n-1})\mathrm{d}F(a) + \int_{\beta_{n-1}}^{c_{n-1}} \frac{4ac_{n-1}-2\beta_{n-1}(a+c_{n-1})}{c_{n-1}+a-2 \beta_{n-1}}\mathrm{d}F(a) + \int_{\alpha_{n-1}}^{\beta_{n-1}} 2 a \mathrm{d}F(a) \\ 
&+& \int_{\alpha_{n-1}'}^{\alpha_{n-1}}\min \{2 \alpha_{n-1}, a+c_{n-1}\}\mathrm{d}F(a) +  \int_{0}^{\alpha_{n-1}'}  2 \alpha_{n-1} \mathrm{d}F(a).
 \end{eqnarray*}
%where $\psi_n(a)=\frac{4ac_{n-1}-2\beta_{n-1}(a+c_{n-1})}{c_{n-1}+a-2 \beta_{n-1}} $ and $\xi_n(a)=\min \{2 \alpha_{n-1}, a+c_{n-1}\}$.
\end{itemize}
\end{thm}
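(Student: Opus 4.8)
The plan is to run a one-step backward induction based on the identity $E_{m+1}^{NR}=\EE_{a\sim F}\bigl(E_m^{NR}(a)\bigr)$ from the text, and to analyze the ``period-zero'' stage game of $\Gamma_m^{NR}(a)$. Since a player left alone faces the single-agent prophet problem with $m$ samples (value $c_m$), this stage game is the $2\times2$ game with payoff pairs $\left(\tfrac{a+c_m}{2},\tfrac{a+c_m}{2}\right)$ for (take,\,take), $(a,c_m)$ and $(c_m,a)$ for the two (take,\,pass) profiles, and $D=(D_1,D_2)$ for (pass,\,pass), where $D$ ranges over the continuation payoffs $E_m^{NR}$. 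Writing $m(a),s(a),t(a)$ for the minimal coordinate, the maximal sum, and the minimal sum over the Nash equilibria of this stage game, the set-valued identity together with the convexity and symmetry of $E_{m+1}^{NR}$ (Proposition~\ref{prop:convexity}) gives $\alpha'_{m+1}=\int_0^1 m(a)\,\mathrm{d}F(a)$, $2\beta_{m+1}=\int_0^1 s(a)\,\mathrm{d}F(a)$ and $2\alpha_{m+1}=\int_0^1 t(a)\,\mathrm{d}F(a)$; taking $m=n$ for (a) and $m=n-1$ for (b),(c), then simplifying (an integration by parts is needed only for (a)), yields the three stated formulas. Reducing the extremal sums to the diagonal uses that a convex symmetric set attains its maximal and minimal sum at a diagonal point, and reducing $\alpha'$ to a single coordinate uses symmetry.

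A preliminary fact I would record is that no player ever exceeds the monopoly value: $\gamma_m:=\max\{x:(x,y)\in E_m^{NR}\}\le c_m$. This follows by induction, since any Nash payoff is a convex combination of the stage-game entries, so a player's payoff is at most $\max(a,c_m,\gamma_m)$, and under the inductive hypothesis $\gamma_m\le c_m$ this integrates to $\EE(\max(X,c_m))=c_{m+1}$. This bound is exactly what makes ``take'' strictly dominant when $a>c_m$ and, more generally, controls which profiles can be equilibria.

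For $m(a)$ and $s(a)$ a short argument suffices. A player's pure maxmin value, compared against explicitly constructed equilibria, pins $m(a)$ to $\alpha'_m$, $a$, or $\tfrac{a+c_m}{2}$ on the ranges $a\le\alpha'_m$, $\alpha'_m\le a\le c_m$, $a\ge c_m$ respectively. For $s(a)$ the key remark is that the three ``someone takes'' profiles all have sum exactly $a+c_m$, so that $s(a)=\max(a+c_m,2\beta_m)$ precisely where both a (take,\,pass) and a (pass,\,pass) equilibrium exist, and equals a single one of the two terms outside that range. The unifying tool for the sums is the identity that any profile with take-probabilities $(p_1,p_2)$ and continuation $D$ yields expected sum $(a+c_m)+\pi\,(S_D-(a+c_m))$, where $\pi=(1-p_1)(1-p_2)$ and $S_D=D_1+D_2\in[2\alpha_m,2\beta_m]$; this expression is bilinear, so over the feasible corners it is optimized at $\pi\in\{0,1\}$ and $S_D\in\{2\alpha_m,2\beta_m\}$, subject to which corner is realizable as an equilibrium.

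I expect the genuine difficulty to lie entirely in $t(a)$ on the region $\beta_m<a\le c_m$, where the minimal-sum equilibrium is neither ``both take'' nor ``both pass'' but a mixed equilibrium, and where the maxmin bound $2a$ is not tight. Here I would take the symmetric mixed equilibrium with diagonal continuation $d$, solve its indifference condition for the take-probability, and show its per-player payoff $u(d)$ obeys $u'(d)=-\tfrac{(a-c_m)^2}{(2d-a-c_m)^2}<0$, so that the sum is minimized at the largest admissible continuation $d=\beta_m$, giving $t(a)=\tfrac{4ac_m-2\beta_m(a+c_m)}{c_m+a-2\beta_m}$; one then checks this is below $a+c_m$ (equivalent to $(a-c_m)^2>0$), and, via the bilinear identity together with the convexity of $E_m^{NR}$, that no asymmetric equilibrium does better. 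The remaining regions for $t(a)$ are easier: $2a$ on $\alpha_m<a\le\beta_m$ from the maxmin bound together with the (pass,\,pass) equilibrium at continuation $(a,a)$; $\min(2\alpha_m,a+c_m)$ on $\alpha'_m<a\le\alpha_m$ by comparing the (pass,\,pass) corner $2\alpha_m$ with the (take,\,pass) corner $a+c_m$; and $2\alpha_m$ on $a\le\alpha'_m$, where every continuation has its coordinates at least $\alpha'_m>a$, so no (take,\,pass) profile survives, both players must pass, and the minimal continuation sum $2\alpha_m$ is forced.
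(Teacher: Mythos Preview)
Your overall strategy—reduce to the one-shot game $G_m(a;d,e)$, classify its Nash equilibria, and integrate—is exactly the paper's; for parts (a) and (b) your piecewise formulas for $m(a)$ and $s(a)$ and their justifications (maxmin lower bound, the ``someone takes'' profiles all summing to $a+c_m$, the bilinear identity) are correct and match the paper's case split, only spelled out in a bit more detail.

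There is, however, a genuine gap in part (c) on the range $\beta_m<a<c_m$. Your claim that ``no asymmetric equilibrium does better'' than the diagonal continuation $(\beta_m,\beta_m)$ does \emph{not} follow from the bilinear identity together with convexity of $E_m^{NR}$. In the regime $d\vee e<a$ the mixed-equilibrium sum is $f(d)+f(e)$ with
\[
f(x)=\frac{2ac_m-x(a+c_m)}{a+c_m-2x}=\frac{a+c_m}{2}-\frac{(c_m-a)^2}{2(a+c_m-2x)},
\]
and this $f$ is strictly \emph{concave} on $(-\infty,a)$. By Jensen, for any off-diagonal $(d,e)$ with $d+e=2\beta_m$ one gets $f(d)+f(e)<2f(\beta_m)$; the inequality goes the wrong way for your purpose. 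Since the maximal-sum face $\{d+e=2\beta_m\}$ of $E_m^{NR}$ can contain off-diagonal points, the minimum of $f(d)+f(e)$ over $E_m^{NR}$ need not sit at $(\beta_m,\beta_m)$. Concretely, for $F=\mathrm{Unif}[0,1]$ and $m=2$ the selection that picks $(1/2,x)$ on $x\in[1/4,1/2]$ gives the point $(1/2,15/32)\in E_2^{NR}$, which has $d+e=31/32=2\beta_2$; at $a=1/2$ the corresponding equilibrium sum is $\tfrac12+f(15/32)=49/48$, strictly below $2f(\beta_2)=41/40$. So your formula $t(a)=\tfrac{4ac_m-2\beta_m(a+c_m)}{c_m+a-2\beta_m}$ is not established on this range by the argument you give. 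This is precisely the step the paper's own proof does not justify either (it simply writes down the piecewise expression for $2\alpha_n(a)$), so the gap is shared; but your stated reason for dismissing asymmetric continuations is incorrect as written.
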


\subsection{Comparison of the two model variants} \label{sec:comparison}
%\subsection{Comparing the SPE payoffs in the full recall and the no recall cases.}\label{sec:comparison}

We  now come back to the general case and no longer assume that $F$ is continuous, and want to compare the SPEP obtained by the  players   with and without recalling power.

We have that the best SPE payoff of a player under full recall is $h_n=h_n(0,0)$ defined recursively in Theorem  \ref{thm:SPEPcharact}. We now denote   by   $$\beta'_n=\max\{\max \{x,y\}: (x,y) \in E_n^{NR}\}$$
the best possible  SPEP of a player under no-recall. 

We know from the simple example presented in Section~\ref{sec:simple:example} that we may have: $\beta'_n>h_n$, that is the best possible SPE payoff for  a player may be strcitly higher under no recall than under full recall. Unsurprisingly this is not a general result, and we will give here one example when $\beta'_2<h_2$ and one example when $\beta'_2=h_2$.

%The goal of this section is to compare the expected payoff obtained by a player in equilibria with and without recalling power. In other words, we aim to study if it is always better, in terms of the payoffs of the players, to be in the full recall setting. First, we see that the answer to that question, if we consider the set of all equilibria, depends on the distribution. More specifically, we will expose three examples: one where having a recalling power gives always an advantage to a player; one where one player in the no recall case can reach the best SPE payoff of a player in the full recall case, and other where this advantage vanishes at all and then one player in the no recall case can obtain strictly better payoff than in the full recall case. That is, defining $\beta'_n=\max\{\max \{x,y\}: (x,y) \in E_n^{NR}\}$ as the highest payoff of a player following a SPE in the no recall case with $n$ arrivals, we expose an example where $h_2 > \beta'_2$, and example where $h_2=\beta'_2$ and one with $h_2<\beta'_2$.

Then, we will consider the maximal sum of the payoffs of the players at equilibrium. In the the full recall case, we know that the best SPE payoff is $(h_n,h_n)$ and the worst SPE payoff is $(l_n,l_n)$. In the no-recall case when $F$ is continuous, the SPE payoff set $E_n^{NR}$ is convex, compact and symmetric with respect to the diagonal, hence the best sum of payoffs is obtained at the symmetric equilibrium $(\beta_n,\beta_n)$. Theorem \ref{thm:comparison} will show that $l_n \geq \beta_n$, implying that when $F$ is continuous, at a symmetric equilibrium payoff, players are always better off under full recall than under no recall.

\begin{example} \rm An example with $h_2 > \beta'_2$. Consider the following discrete random variable: 

 \[   
X= 
     \begin{cases}
    1/10  & \text{with probability } 1/2,\\
     1/2 &\text{with probability } 1/2.\\ 
     \end{cases}
\]

We will show that the expected payoff of a player at equilibrium in the full recall case is always higher than the expected payoff of a player at equilibrium in the no recall case, when there are 2 samples of $X$ arriving sequentially. In other words, we will prove that $h_2>\beta'_2$.

Let us start with the full recall case. In this setting, the only SPE is to bid in the first stage if and only if $X=1/2$. Then, the expected payoff of a player playing SPE is given by:
\[
h_2=\frac{1}{10} \PP(X_{(2)}=1/10)+\left(\frac{1}{2} \cdot \frac{1}{2}+\frac{1}{2} \cdot \frac{1}{10}\right)\PP( \exists! i: X_i=1/10)+\frac{1}{2} \PP(X_{(2:2)}=1/2)=\frac{3}{10}.
\]

In the no recall case, Table~\ref{tab:payoff:example1} represents the matrix of expected payoff for the game if the first arrival is $x$.
  \begin{table}[h]
 \centering
    \setlength{\extrarowheight}{7pt}
    \begin{tabular}{cc|c|c|c|}
      & \multicolumn{1}{c}{} & \multicolumn{2}{c}{Player $2$}\\
      & \multicolumn{1}{c}{} & \multicolumn{1}{c}{$x$}  &  \multicolumn{1}{c}{$\emptyset$} \\\cline{3-4}
      \multirow{2}*{Player $1$}  & $x$ & $\left(\frac{1}{2} x + \frac{3}{20} , \frac{1}{2} x + \frac{3}{20}\right)$ & $\left( x, \frac{3}{10}\right)$ \\\cline{3-4}
    
       \multirow{2}*{}  & $\emptyset$ & $\left(\frac{3}{10},x\right)$ &  $\left( \frac{3}{20}, \frac{3}{20}\right)$ \\\cline{3-4}
    \end{tabular}
    
    \caption{\label{tab:payoff:example1}Payoffs matrix for the game with $n=2$ if $X_1=x$.}
 \end{table}
Note that if $x=1/10$, $(\emptyset, \emptyset)$ is the unique NE with payoff $(3/20,3/20)$, and if $x=1/10$, $(x,x)$ is the unique NE with payoff $(2/5,2/5)$. Then, the SPE payoff of each player is $1/2  \cdot 3/20 + 1/2 \cdot 4/20=11/40.$ This means that $\beta'_2=11/40$, and we then conclude that $h_2 > \beta'_2$.
 \end{example}

\begin{example} \rm An example with $h_2 = \beta'_2$. Let us now consider the game with two samples of $X$ arriving sequentially over time, with $X \sim \text{Unif}[0,1]$. 

Using Theorem~\ref{thm:SPEPcharact} we can easily obtain that $h_2=\EE(X)$, and then in this case we have $h_2=1/2$.

Regarding the no recall case, Table~\ref{tab:payoff:example} represents the expected payoffs matrix if the first arrival is $x$. We now compute the set of NE depending on the value of $x$:

\begin{itemize}
    \item[\textit{Case 1.}] If $x<1/4$, $(\emptyset,\emptyset)$ is the unique NE and the payoff is given by $(1/4,1/4)$.
    \item[\textit{Case 2.}] If $x>1/2$, $(x,x)$ is the unique NE with payoff $(1/2x+1/4, 1/2x+1/4)$.
    \item[\textit{Case 3.}] If $x \in [1/4,1/2]$, $(x,\emptyset)$ and $(\emptyset,x)$ are the pure NE with payoff $(x,1/2)$ and $(1/2,x)$, respectively, and there is also a mixed NE with payoff $(3/4-1/(8x),3/4-1/(8x))$.
\end{itemize}

Therefore we have that,

  \[   
E_1^{NR}(x) = 
     \begin{cases}
     \{(1/4,1/4) \}& \text{if } x < 1/4\\
       \{(x,1/2),(1/2,x), (3/4-1/(8x),3/4-1/(8x)) \}  &\text{if }  x \in [1/4,1/2] \\
        \{(1/2x+1/4, 1/2x+1/4)  \}  &\text{if } x > 1/2. \\
     \end{cases}
\]

Note that the maximum possible expected payoff for one player is obtained if he passes for every $x \in [1/4,1/2]$, and the payoff obtained is given by
\[\beta'_2= \int_0^{1/4} \frac{1}{4} \mathrm{d}x +\int_{1/4}^{1/2} \frac{1}{2} \mathrm{d}x + \int_{1/2}^1 \frac{x}{2}+\frac{1}{4}  \mathrm{d}x =\frac{1}{2},\]

concluding that $h_2=\beta'_2$.
\end{example}

 \noindent{\bf Symmetric equilibrium payoffs.}
 From the examples above, we conclude that it is not true that at  equilibrium players always take advantage of the recalling power. However, restricting the set of SPE in the no recall case to the symmetric SPE, holds that the expected revenue of a player in the full recall case is always at least the one obtained in the no recall case. We state this result formally in Theorem~\ref{thm:comparison}.
 
 \begin{thm}\label{thm:comparison} Assume that $F$ is continuous. Let $(u,u)\in E_n^{FR}$ be a SPE payoff under full recall, and $(v,v)\in E_n^{NR}$ be a {\it symmetric} SPE payoff under no recall. Then $u\geq v$.
 \end{thm}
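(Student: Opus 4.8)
Since $F$ is continuous, Theorem~\ref{thm:caract} gives that $E_n^{FR}$ is the diagonal segment from $(l_n,l_n)$ to $(h_n,h_n)$, so every full-recall SPE payoff $(u,u)$ satisfies $u\ge l_n$. By Proposition~\ref{prop:convexity}, $E_n^{NR}$ is convex and symmetric about the diagonal, hence its diagonal points form the segment from $(\alpha_n,\alpha_n)$ to $(\beta_n,\beta_n)$, and every \emph{symmetric} no-recall SPE payoff $(v,v)$ satisfies $v\le\beta_n$. Thus the theorem reduces to the single scalar inequality $l_n\ge\beta_n$, which I will prove by induction on $n$.

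\textbf{Induction and pointwise reduction.} The base case $n=1$ holds since $l_1=\beta_1=\tfrac12\EE(X)$. Specialising Theorem~\ref{thm:SPEPcharact} at $a=b=0$ (using $0\le\EE(X_{(n)})$ and $\mathrm{med}[0,0,X]=0$) yields $l_n=\EE_X[\,l_{n-1}(X,0)\,]$. Writing this and the formula for $2\beta_n$ in Theorem~\ref{thm:norecall}(b) as integrals in the first arrival $a$ against $\mathrm dF$, it suffices to prove the pointwise bound $2\,l_{n-1}(a,0)\ge\psi(a)$, where $\psi(a)$ is the integrand of $2\beta_n$: equal to $2\beta_{n-1}$ on $[0,\alpha'_{n-1}]$, to $\max\{a+c_{n-1},2\beta_{n-1}\}$ on $[\alpha'_{n-1},\beta_{n-1}]$, and to $a+c_{n-1}$ on $[\beta_{n-1},1]$. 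This splits into two claims: (i) $2\,l_{n-1}(a,0)\ge a+c_{n-1}$ for all $a$, and (ii) $l_{n-1}(a,0)\ge\beta_{n-1}$ for $a\le\beta_{n-1}$. Granting both, one checks $\psi(a)\le\max\{a+c_{n-1},2\beta_{n-1}\}$ on $[0,\beta_{n-1}]$ and $\psi(a)=a+c_{n-1}$ on $[\beta_{n-1},1]$, so (i) and (ii) give $2\,l_{n-1}(a,0)\ge\psi(a)$ everywhere; integrating against $\mathrm dF$ then yields $2\,l_n\ge2\beta_n$.

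\textbf{Welfare lower bound (claim (i)).} I will prove the stronger two-variable invariant $2\,l_m(a,b)\ge a+\EE(X_{(m)}\vee b)$ for all $m\ge0$ and $0\le b\le a\le1$, by induction on $m$. It is an equality for $m=0$. In the competitive branch of $L$ (when $a>\EE(X_{(m)}\vee b)$) one has $l_m(a,b)=\tfrac12(a+\EE(X_{(m)}\vee b))$ and the invariant is again an equality; in the cooperative branch $l_m(a,b)=\EE_X[\,l_{m-1}(a\vee X,\mathrm{med}[a,b,X])\,]$, and applying the invariant at level $m-1$ inside the expectation, together with the conditioning $X_{(m)}=X\vee X'_{(m-1)}$ on the first sample (with $X'_{(m-1)}$ the maximum of the remaining $m-1$ samples), reduces the claim to the elementary pointwise inequality $(a\vee X)+(y\vee\mathrm{med}[a,b,X])\ge a+(y\vee X\vee b)$, valid for all $X,y\ge0$ and checked on the three ranges $X\le b$, $b\le X\le a$, $X\ge a$. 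Taking $b=0$ gives $2\,l_{n-1}(a,0)\ge a+\EE(X_{(n-1)})\ge a+c_{n-1}$, since $\EE(X_{(m)})\ge c_m$; this is (i).

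\textbf{Floor bound (claim (ii)) and the main obstacle.} By the induction hypothesis $l_{n-1}(0,0)=l_{n-1}\ge\beta_{n-1}$, so (ii) follows once $l_{n-1}(a,0)\ge l_{n-1}(0,0)$ on $[0,\beta_{n-1}]$. Two structural facts frame this. First, $l_m(a,b)$ is nondecreasing in $b$: an easy induction, since $\mathrm{med}[a,b,X]$ and $\EE(X_{(m)}\vee b)$ increase with $b$, and crossing the $L$-threshold as $b$ grows can only raise the value. Second, and crucially, $l_m(a,b)$ is \emph{not} monotone in $a$: at $a=\EE(X_{(m)}\vee b)$ the worst equilibrium switches from the cooperative profile to the competitive one (both players contesting $a$), and the value drops. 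The main obstacle is thus to rule out this dip on the relevant range. I first note that any equilibrium welfare is bounded by the prophet-team welfare, so $2\beta_{n-1}\le\EE(X_{(n-1)}+X_{(2:n-1)})\le2\EE(X_{(n-1)})$, whence $\beta_{n-1}\le\EE(X_{(n-1)})$ and the whole range $a\le\beta_{n-1}$ lies in the cooperative regime $a\le\EE(X_{(n-1)})$, where $l_{n-1}(a,0)=d^-_{n-1}(a,0)$. It then remains to show that $a\mapsto d^-_{n-1}(a,0)=\EE_X[\,l_{n-2}(a\vee X,a\wedge X)\,]$ is nondecreasing there. This is the delicate step, because the induced continuation state $(a\vee X,a\wedge X)$ may itself enter a competitive (jump) regime at the previous level; the argument proceeds by a careful induction showing that, along the cooperative branch, raising the best available value—propagated through $(a\vee X,a\wedge X)$—never lowers the worst-equilibrium continuation value (monotonicity in the second coordinate handles the $X\ge a$ part, and the competitive-branch value is itself increasing in its first coordinate, so the only danger is the downward jump, which one shows cannot fall below $l_{n-1}(0,0)$). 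Establishing this monotonicity-on-the-cooperative-regime is the crux; with it, (i) and (ii) give $2\,l_{n-1}(a,0)\ge\psi(a)$ pointwise, and integration against $\mathrm dF$ closes the induction.
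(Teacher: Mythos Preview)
Your overall strategy coincides with the paper's: reduce to $l_n\ge\beta_n$ and close the induction by bounding $l_{n-1}(a,0)$ pointwise against the integrand of $2\beta_n$. For claim~(i) the paper takes a different route (Lemma~\ref{lem:bound:SPEP}): it exhibits a concrete strategy---bid for the current best available item once it exceeds $c_n$, else wait---and bounds its payoff directly, separately in the cases $a\ge c_n$ and $a<c_n$. Your two-variable invariant $2\,l_m(a,b)\ge a+\EE(X_{(m)}\vee b)$, proved via the pointwise inequality $(a\vee X)+(y\vee\mathrm{med}[a,b,X])\ge a+(y\vee X\vee b)$, is a correct and arguably cleaner alternative.

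The gap is in claim~(ii). Here the paper does not argue that $l_{n-1}(\cdot,0)$ is nondecreasing on $[0,\beta_{n-1}]$; instead it carries the stronger inductive hypothesis $l_n(a,0)\ge\beta_n(a)$ for \emph{all} $a$ (with $\beta_n(a)$ exactly your integrand) and, in the cooperative regime, writes
\[
l_{n+1}(a,0)=\EE_X\bigl[l_n(a\vee X,a\wedge X)\bigr]\ \ge\ \EE_X\bigl[l_n(X,0)\bigr]\ \ge\ \EE_X\bigl[\beta_n(X)\bigr]=\beta_{n+1},
\]
attributing the first inequality to ``monotonicity of $l_n$ in both components.'' Your concern is well founded---$l_m(\cdot,b)$ is \emph{not} monotone in its first argument (already for the Uniform law, the paper's own formulas give $l_2(\cdot,0)$ a downward jump of $(1-a)^3/6$ at the competitive threshold)---so that one-line justification, read literally, is as incomplete as yours. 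What both arguments actually require is only the specific comparison $l_m(a\vee X,a\wedge X)\ge l_m(X,0)$, where \emph{both} coordinates increase together; this is strictly weaker than coordinatewise monotonicity. Your last paragraph does not establish it: unwinding ``$d^-_{n-1}(a,0)$ nondecreasing in $a$'' lands on precisely the same inequality one level down, so the induction is not closed. To complete either proof one needs an independent argument for $l_m(a\vee X,a\wedge X)\ge l_m(X,0)$; strengthening your invariant in claim~(i), or a direct game-theoretic bound, are natural candidates.
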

 
 The proof in Appendix~\ref{app:comparison} will use the following technical lemma,  which gives a lower bound for a SPEP of a player in the full recall case. 
 
 \begin{lemma}\label{lem:bound:SPEP}
 Let $a$ be a positive real number such that $a \leq \EE\left(X_{(n+1)}\right)$. If $\gamma_n^{FR}$ denotes the expected payoff of one player in some SPE in $\Gamma_n^{FR}(a \vee X, a \wedge X)$, then 
 \[
 \gamma_n^{FR} \geq \frac{a+c_{n+1}}{2},
 \]
 where $c_n$ is the value of the decision problem in the no recall case with one decision-maker and $n$ arrivals.
 \end{lemma}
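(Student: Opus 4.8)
Since by Theorem~\ref{thm:SPEPcharact} every SPE of the auxiliary game is symmetric and its payoff lies above the worst SPE payoff $l_n$, the claim will follow once I bound the worst case. Writing $A=a\vee X$ and $B=a\wedge X$, it therefore suffices to prove, by induction on $n$, that
\[
\phi_n(a):=\EE_X\big[l_n(a\vee X,\,a\wedge X)\big]\;\ge\;\frac{a+c_{n+1}}{2}\qquad\text{whenever } a\le \EE(X_{(n+1)}).
\]
Indeed $\gamma_n^{FR}\ge\phi_n(a)$, because for every realization of $X$ the (symmetric) SPE payoff of $\Gamma_n^{FR}(A,B)$ is at least $l_n(A,B)$. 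For $n=0$ one has $l_0(A,B)=\tfrac{A+B}{2}$, so $\phi_0(a)=\EE_X\big[\tfrac{(a\vee X)+(a\wedge X)}{2}\big]=\tfrac{a+\EE X}{2}=\tfrac{a+c_1}{2}$, which settles the base case with equality.

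For the inductive step I will exploit two features of the recursion defining $l_n$. First, $l_{n-1}$ is nondecreasing in each argument, a monotonicity I will verify directly from the recursion. Second, for $A\ge B$ the clamp identity $\text{med}[A,B,X']=\min(A,\max(B,X'))\ge\min(A,X')=A\wedge X'$ holds pointwise in $X'$; combined with monotonicity this gives
\[
d^-_n(A,B)=\EE_{X'}\big[l_{n-1}(A\vee X',\,\text{med}[A,B,X'])\big]\ \ge\ \EE_{X'}\big[l_{n-1}(A\vee X',\,A\wedge X')\big]=\phi_{n-1}(A).
\]
Substituting into $l_n=L(A,\EE(X_{(n)}\vee B),d^-_n(A,B))$ then yields, in the \emph{recall} branch $A\le\EE(X_{(n)}\vee B)$, the bound $l_n(A,B)\ge\phi_{n-1}(A)$, while in the \emph{grab} branch $A>\EE(X_{(n)}\vee B)$ we have the exact value $l_n(A,B)=\tfrac12\big(A+\EE(X_{(n)}\vee B)\big)$.

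Now I integrate over $X$. On the set where $A=a\vee X\le\EE(X_{(n)})$, both branches give $l_n(A,B)\ge\tfrac12(A+c_n)$: in the grab branch because $\EE(X_{(n)}\vee B)\ge\EE(X_{(n)})\ge c_n$, and in the recall branch by applying the induction hypothesis to $\phi_{n-1}(A)$, which is licit precisely because $A\le\EE(X_{(n)})$. The region that resists this clean treatment is the recall branch with $A>\EE(X_{(n)})$: there the induction hypothesis is unavailable for $\phi_{n-1}(A)$, but the defining inequality of that region forces $\EE(X_{(n)}\vee B)\ge A>\EE(X_{(n)})$, hence $B=a\wedge X$ is large, and I will retain this information by estimating $d^-_n(A,B)\ge\EE_{X'}[l_{n-1}(A\vee X',B)]$ rather than discarding $B$. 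Assembling the regional estimates and invoking the defining recursion $c_{n+1}=\EE(X\vee c_n)$ together with the hypothesis $a\le\EE(X_{(n+1)})$ should then upgrade the crude bound $\tfrac12(\EE[a\vee X]+c_n)$ to the desired $\tfrac12(a+c_{n+1})$.

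The genuine difficulty, I expect, is exactly this last integration. Replacing $\EE(X_{(n)}\vee B)$ by $c_n$ is too lossy once $a>c_n$: it only yields $\tfrac12(\EE[a\vee X]+c_n)$, which undershoots $\tfrac12(a+c_{n+1})$ in that range. Hence the argument must keep the finer $B$-dependence both in the grab branch and in the delicate recall sub-region, and it is precisely there that the constraint $a\le\EE(X_{(n+1)})$ is consumed. By contrast the monotonicity of $l_{n-1}$ and the median clamp identity are routine; balancing the regional contributions against $c_{n+1}=\EE(X\vee c_n)$ is the crux.
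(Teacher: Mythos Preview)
Your proposal is incomplete: you correctly isolate the problematic region---the recall branch with $A=a\vee X>\EE(X_{(n)})$---and explicitly label it ``the crux,'' but you do not resolve it. The fallback you suggest, namely retaining the $B$-dependence via $d^-_n(A,B)\ge\EE_{X'}[l_{n-1}(A\vee X',B)]$, produces quantities $l_{n-1}(A\vee X',B)$ that are not of the form $\phi_{n-1}(\cdot)$, so the induction hypothesis still does not apply to them and you give no alternative mechanism for bounding them. The closing sentences (``Assembling the regional estimates\ldots should then upgrade\ldots'') are a hope, not an argument. As written, the induction does not close.

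The paper's proof takes an entirely different, non-inductive route that sidesteps this obstacle. Instead of tracking the $l_n$ recursion, it uses the elementary fact that at any SPE a player receives at least what he can unilaterally guarantee, and then exhibits an explicit first-stage strategy in $\Gamma_n^{FR}(a\vee X,a\wedge X)$: bid for $a\vee X$ whenever $a\vee X\ge c_n$, and pass otherwise. The payoff of this strategy is bounded below by $\tfrac{1}{2}(a+c_{n+1})$ via a direct case split on $a\ge c_n$ versus $a<c_n$, using only pointwise inequalities and the identity $c_{n+1}=\EE(X\vee c_n)$. There is no recursion on $l_n$, no regional bookkeeping, and no need to control the range $A>\EE(X_{(n)})$ separately. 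This is both shorter and structurally simpler than the inductive program you outlined; if you wish to salvage the inductive approach, you would need a genuinely new idea for the unresolved region, whereas the security-level argument closes in a few lines.
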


%In the full recall case, if $n=2$, it holds that $h_2=l_2=\EE(X)=\mu$.

%Recall that, for the no recall case, we denote by $\alpha_2$ ($\beta_2$) the smallest (highest) coordinate value of a SPE payoff belonging to the identity line and by $\alpha_2'$ the smallest payoff of a player playing a SPE. 

%On one hand, we have that
%\begin{eqnarray*}
%2 \beta_2&=&\int_0^{\mu/2} \mu \mathrm{d}F(a)+  \int_{\mu/2}^1 a+\mu \mathrm{d}F(a) \\
%&=&\int_0^1 \mu \mathrm{d}F(a)+ \int_{\mu/2}^1 a \mathrm{d}F(a) \\
%&<& 2\mu.
% \end{eqnarray*}
 
% On the other hand, by definition we know that $\alpha_2 \leq \beta_2$ and therefore, putting all together we obtain that $\alpha_2 < h_2$ and $\beta_2 < h_2$. That means that, if we look at the sum of payoffs of players in equilibria, in the full recall case is obtained always a higher payoff than in the no recall case. This is a difference with the simple discrete example in Section \ref{sec:simple:example}, since there, playing an asymmetric equilibrium in the no recall case the sum of payoffs obtained is equal to the one obtained in the full recall case. 
 
%If we know consider the payoff of each player but not the sum, is still true that one player can obtain more payoff than playing in the full recall case? 
 
%To answer that question, let us consider $\beta'_2=\max\{\max \{x,y\}: (x,y) \in E_2^{NR}\}.$ 

\subsection{Efficiency of equilibria}\label{sec:eff}

The goal of this section is to study how efficient are the SPE  payoffs. To this end, we define as usual the Price of Anarchy and Price of Stability, and we introduce what we call the Prophet Ratio of the game. Given an instance of a game, the  first two notions refers to the ratio between the maximal sum of payoffs obtained by players under any feasible strategy and the sum of payoffs for the worst and best SPEs, respectively. 
On the other hand, we define the Prophet Ratio of an instance of the problem as the ratio between the optimal Prophet value of the problem (that is, the expected sum   of the two best values) and the sum of payoffs for the best SPE. We call this quantity Prophet Ratio because we are comparing the best sum of payoffs obtained by playing a SPE strategy with what a prophet would  do if he knows all the values of the samples in advance. 

Next, we formally introduce these definitions.

\begin{definition}
Consider an instance $\Gamma^{FR}_n$ or $\Gamma^{NR}_n$, where $n$ samples from a distribution $F$ arrive sequentially over time. Denote by $\Sigma$  the set of all feasible strategy pairs and by SPE    the set of subgame-perfect equilibria. We call:
\begin{itemize}
    \item[a)] Price of Anarchy of this game instance-- and we denote it by $\PA_n(F)$--  to the following ratio
     \[
     \PA_n(F):= \frac{ \max_{ \sigma \in \Sigma}\gamma^1(\sigma)+\gamma^2(\sigma)}{\min_{\sigma \; SPE} \gamma^1(\sigma)+\gamma^2(\sigma)},
     \]
    % where $\Sigma$ represents the set of all feasible strategy pairs and SPE  is the set of subgame-perfect equilibria.
       \item[b)] Price of Stability of this game instance-- and we denote it by $\PS_n(F)$-- to the following ratio
     \[
     \PS_n(F):= \frac{ \max_{ \sigma \in \Sigma}\gamma^1(\sigma)+\gamma^2(\sigma)}{\max_{\sigma \; SPE} \gamma^1(\sigma)+\gamma^2(\sigma)},
     \]
    \item[c)] Prophet Ratio of this game instance-- and we denote it by $\PR_n(F)$-- to the following ratio
     \[
     \PR_n(F):= \frac{\EE(X_{(1:n)}+X_{(2:n)})}{\max_{\sigma \; SPE} \gamma^1(\sigma)+\gamma^2(\sigma)},
     \]
     where $X_{(1:n)}$ and $X_{(2:n)}$ represent the first and second order statistics from the sequence of samples $\{X_i\}_{i \in [n]}$.
\end{itemize}
\end{definition}

Clearly, by definition it holds that for each $n$ and $F$
\[
\min\{\PA_n(F),\PR_n(F)\}\geq \PS_n(F)  \geq 1. 
\]

Notice that $\PS_n(F)/\PR_n(F)= \frac{ \max_{ \sigma \in \Sigma}\gamma^1(\sigma)+\gamma^2(\sigma)}{\EE(X_{(1:n)}+X_{(2:n)})}$  is usually  called competitive ratio for the two-sample optimal selection problem. 

For each $n$, we define the Price of Anarchy, Price of Stability and Prophet Ratio of our competitive selection problems  as the worst case ratio over all possible value distributions $F$. That is:
$$\PA_n:= \max_F \PA_n(F), \hspace{0.4cm} \PS_n:= \max_F \PS_n(F),\hspace{0.4cm} \PR_n:= \max_F \PR_n(F).$$

In what follows, we study this quantities for each of the model variants, and at the end of the section, we consider the case where the number of arrivals is two, and we present a tight bound for the ratios in Theorem~\ref{prop:boundeff}.

\subsubsection{Full recall case } Note that in this case, the maximal feasible sum of payoffs obtained by the players   is simply  the expected value of the two best samples (they can wait until the end of the horizon and pick the best and second best values). Thus, here, the notions of Price of Stability and Prophet Ratio are equivalent.

By Theorem~\ref{thm:caract}, the sum of payoffs for the worst SPE is given by $2l_n$ and for the best SPE is $2h_n$, and therefore given an instance of the game we have   in this setting:
\[
\PA^{FR}_n(F)= \frac{\EE(X_{(1:n)}+X_{(2:n)})}{2 l_n} 
\; \text{ and }\;
 \PS^{FR}_n(F)=\PR^{FR}_n(F)=  \frac{\EE(X_{(1:n)}+X_{(2:n)})}{2 h_n}.\]

Notice that if $n=2$, that is, we have only two arrivals, each sample will be eventually picked by a player, so   that $\PA^{FR}_2(F)=\PS^{FR}S_2(F)=\PR^{FR}_2(F)=1$ for every value distribution $F$. 

On the other hand, if $n$ goes to infinity, then we also have that both the Price of Anarchy and Price of Stability goes to 1. Then, the interesting question is what happen with these ratios when $n$ is finite and greater than 2. 

Although we have a general characterization of the ratios for any value of $n$ and distribution $F$, these quantities are not always easy  to compute for any $n$ even if we fix the distribution $F$.

%However, if we fix $n=3$ and $F=\text{Unif}[0,1]$ it is easy to do the computations and we obtain that $\PA_3(F)=1215/1214\approx 1.000823$ and $\PS_3(F)\approx 1.0008$.
%\begin{itemize}
%\item $\text{PoA}_n(F):=\frac{\text{maximal feasible sum of payoffs}}{\text{sum of payoffs for the worst SPE}}= \frac{\EE(X^1+x^2)}{2l_n}$

%\vspace{0.5cm}

%\item $\text{PoS}_n(F):=\frac{\text{maximal feasible sum of payoffs}}{\text{sum of payoffs for the best SPE}}=\frac{\EE(X^1+x^2)}{2h_n}$
%\end{itemize}
%\vspace{0.6cm}

%If $F=U[0,1]$:
%\vspace{0.3cm}

%$\text{PoA}_2(F)=\text{PoS}_2(F)=1$
%\vspace{0.3cm}

%$\text{PoA}_3(F)=\frac{1215}{1214} \approx 1.000823$

%\vspace{0.3cm}

%$\text{PoS}_3(F) \approx 1.0008$

\subsubsection{No recall case} 
If $n=2$, picking the two best samples is a feasible strategy since player one can get $X_1$ and player two $X_2$, and then $\PS_2^{NR}(F)=\PR_2^{NR}(F)$. However, for $n\geq 3$, as soon as the support of $F$ has at least three points, picking almost surely the two best samples is no longer feasible and $\PR_n^{NR}(F)>\PS_n^{NR}(F).$

Note that in this case, the maximal feasible strategy is the same as the strategy of one player selecting two values among $n$ in the classical online selection problem. The following Lemma gives us a recursive formula for the expected sum of payoffs of the maximal feasible strategy.

\begin{lemma}\label{lem:maxfeasible}
Assume we are in the no recall case with $n$ arrivals following a distribution $F$ with mean $m$. Let $X$ denote a random variable with law $F$, and $c_n$ the value of the decision problem in the no recall case with one decision-maker and $n$ arrivals. Then, the expected maximal feasible sum of payoffs $s_n$ satisfies
\begin{itemize}
\item[a)] $s_1= m$ if $n=1$,
    \item[b)] $s_2=2 m$ if $n=2$,
    \item[c)] $s_{n}=\PP(X \geq x_{n-1}-c_{n-1}) \EE(X+c_{n-1} | X \geq s_{n-1}-c_{n-1})+ s_{n-1} \PP(X < s_{n-1}-c_{n-1})$ if $n>2.$
\end{itemize}
\end{lemma}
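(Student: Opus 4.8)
The plan is to recognize that the maximal feasible sum of payoffs in the two-player no-recall game coincides with the value of a single-controller online selection problem, and then to obtain the recursion by a one-step dynamic programming argument. Concretely, I would first argue that
\[
\max_{\sigma \in \Sigma}\bigl(\gamma^1(\sigma)+\gamma^2(\sigma)\bigr)=s_n,
\]
where $s_n$ is the largest expected sum a single decision-maker can collect when he observes $X_1,\dots,X_n$ sequentially with no recall and may keep at most one value per period and at most two values overall. The inequality $\max_{\sigma\in\Sigma}(\gamma^1+\gamma^2)\le s_n$ holds because at each period only the single item $X_t$ is available and each player leaves the market after collecting one value, so any strategy pair collects at most two values, at most one per period, and hence cannot beat the optimal online two-selection rule. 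For the reverse inequality I would use that both players share the same public history: any centralized two-selection plan can be implemented by a strategy pair in which a fixed protocol designates, at each period, which still-present player bids, so that ties never occur and the two chosen values are collected exactly.

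Next I would settle the base cases. For $n=1$ there is a single item, so the controller collects it and $s_1=\EE(X)=m$. For $n=2$ both items should be taken to maximize the sum (values are nonnegative), giving $s_2=\EE(X_1+X_2)=2m$.

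For the recursive step $n>2$ I would condition on the first observation $X_1=x$ and invoke the principle of optimality. Because the samples are i.i.d. and recall is forbidden, the continuation problem after period $1$ depends only on the number of remaining periods and on how many selections remain, so its value is $c_{n-1}$ when one selection is left and $s_{n-1}$ when two are left. The controller therefore faces a binary choice at period $1$: collect $X_1$, obtaining $x+c_{n-1}$, or pass, obtaining $s_{n-1}$; the optimal continuation value is $\max(x+c_{n-1},\,s_{n-1})$. Taking expectation over $X$ and splitting at the threshold $x^{*}=s_{n-1}-c_{n-1}$ (collect if and only if $X\ge x^{*}$) yields
\[
s_n=\EE\bigl[\max(X+c_{n-1},\,s_{n-1})\bigr]=\PP(X\ge s_{n-1}-c_{n-1})\,\EE\bigl(X+c_{n-1}\mid X\ge s_{n-1}-c_{n-1}\bigr)+s_{n-1}\,\PP(X<s_{n-1}-c_{n-1}),
\]
which is the stated formula (the symbol $x_{n-1}$ in part (c) is evidently a typo for $s_{n-1}$).

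I expect the main obstacle to be the careful justification of the reduction to the single-controller problem, specifically verifying that the tie-breaking rule does not create an obstruction: one must confirm that independently played strategies can reproduce any coordinated two-selection plan without ties (which uses the public observability of histories) and that no strategy pair can collect more than the best online two-selection, since only one item is available per period. Once this equivalence is established, the dynamic programming recursion and the base cases are routine; the only minor checks are that the threshold $s_{n-1}-c_{n-1}$ is well defined (it is nonnegative, since for $n>2$ one has $n-1\ge 2$ and the two-selection controller can always match the single-selection controller and then keep one further nonnegative value, so $s_{n-1}\ge c_{n-1}$) and that the formula stays valid in the degenerate case where the threshold exceeds $1$, in which the first term vanishes.
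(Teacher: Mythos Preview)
Your proposal is correct and follows exactly the approach the paper indicates. The paper does not give a formal proof of this lemma; it merely states, just before the lemma, that ``the maximal feasible strategy is the same as the strategy of one player selecting two values among $n$ in the classical online selection problem,'' and leaves the recursion as a routine consequence. Your argument makes this precise: the reduction to the single-controller two-selection problem (upper bound by feasibility, lower bound by implementing any centralized plan via a pre-agreed protocol that avoids ties), the base cases, and the one-step Bellman recursion conditioning on $X_1$ with continuation values $c_{n-1}$ or $s_{n-1}$. You also correctly spot the typo $x_{n-1}\to s_{n-1}$ in part~(c).
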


Using Theorem~\ref{thm:norecall} together with Lemma~\ref{lem:maxfeasible} we have:
\[
\PA^{NR}_n(F)=\frac{s_n}{2\alpha_n}, \hspace{0.6cm} \PS^{NR}_n(F)=\frac{s_n}{2\beta_n} \hspace{0.6cm} \text{and} \hspace{0.6cm} \PR^{NR}_n(F)=\frac{\EE(X_{(1:n)}+X_{(2:n)})}{2\beta_n}.
\]

%Recall that for each $n$ and $F$
%\[
%\PA^{NR}_n(F)\geq \PS^{NR}_n(F)  \geq 1, 
%\]
%and 
%\[\PR^{NR}_n(F) \geq \PS^{NR}_n(F)  \geq 1.\]

If $n=1$ the ratios are equal to 1 and if we take $n$ going to infinity, we also obtain that the ratios goes to 1. Then, as in the full recall case, the interesting cases are the ones in between. 

Recall that from Theorem~\ref{thm:comparison} we have that for every $n$ and every continuous distribution $F$, $h_n  \geq\beta_n$ and therefore $\PR_n^{FR}(F) \geq \PR_n^{NR}(F)$. However, the comparison is not direct for PoA and PoS as the numerators are different in the full recall and the no recall cases.
\subsubsection{Two arrivals case}
To conclude this section, we study the efficiency of SPE when we fix  the number of arrivals at two and we look at the worst case ratios over $F$. In the case with full recall all ratios are 1 and the question is trivial, so we consider the no recall case here. In other words, we consider the game $\Gamma_2^{NR}$ and we want to study how bad it may be to play the best or worst SPE, in terms of the sum of payoffs obtained, compared with the maximal feasible  sum of payoffs. 

%Note that as we are considering only two arrivals, this  maximal feasible sum of payoffs simply is the sum of the expected value of the samples, that is two times the expectation of $X$. 

%We now compute the worst case $\PS_2$ and $\PA_2$ over all possible c.d.f. $F$ with $\omega_0=0$ and $\omega_1=1$. 

%\begin{remark}
  %Note that for the full recall case, if $n=2$, both the price of anarchy and price of stability are equal to 1 for every %distribution $F$ so  this question is trivial. 
%\end{remark}

We obtain the following result, which states that both the Price of Anarchy and Price of Stability are upper bounded by $4/3$, and that this bound tight. 

\begin{thm} \label{prop:boundeff}
  If $n=2$, under the no recall case it holds that for every distribution $F$,
  \[
 \PS^{NR}_2(F) \leq 4/3 \hspace{0.6cm} \text{and } \hspace{0.6cm} 
 \PA^{NR}_2(F) \leq 4/3.
  \]
Furthermore, this bound is tight for both the price of stability and price of anarchy. 
\end{thm}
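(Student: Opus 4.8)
The plan is to reduce the two-period game $\Gamma_2^{NR}$ to a single simultaneous-move $2\times 2$ game played at stage $1$, solve it explicitly as a function of the realized first value, and then bound the two efficiency ratios through a pointwise estimate on the welfare lost at each realization. Write $m=\EE(X)$. First I would record the (forced) stage-$2$ continuation payoffs: a player left alone optimally takes $X_2$ and gets $m$ in expectation, whereas if both players survive to stage $2$ they both bid and each gets $\EE(X_2)/2=m/2$. Substituting these into stage $1$, conditionally on $X_1=x$, yields the matrix with entries $\bigl(\tfrac{x+m}{2},\tfrac{x+m}{2}\bigr)$ when both take, $(x,m)$ and $(m,x)$ when exactly one takes, and $(m/2,m/2)$ when both pass.

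Next I would solve this game in $x$. Comparing best responses gives three regimes: for $x\ge m$ taking is dominant and the unique NE has sum $x+m$; for $x<m/2$ passing is dominant and the unique NE has sum $m$; and for $m/2\le x<m$ there are two asymmetric pure equilibria, each of sum $x+m$, together with a symmetric mixed equilibrium of sum $3m-m^2/x$. Since $(x+m)-(3m-m^2/x)=(x-m)^2/x\ge 0$, the mixed equilibrium is the worst and the pure ones the best in each subgame. Because a SPE of $\Gamma_2^{NR}$ is exactly a measurable selection, for each $x$, of a NE of this stage-$1$ game (the stage-$2$ continuation being forced), the minimal and maximal SPE sums of payoffs are obtained by integrating the worst, resp.\ best, equilibrium sum against $F$; this recovers the quantities $2\alpha_2$ and $2\beta_2$ of Theorem~\ref{thm:norecall} and, crucially, does so for an \emph{arbitrary} $F$ on $[0,1]$ with no continuity hypothesis (so Proposition~\ref{prop:convexity} is not needed here). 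By Lemma~\ref{lem:maxfeasible}(b) the maximal feasible sum is $s_2=2m$, hence $\PA_2^{NR}(F)=2m/2\alpha_2$ and $\PS_2^{NR}(F)=2m/2\beta_2$.

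For the upper bound I would introduce the per-value loss $\ell(x)$ of the worst equilibrium, namely $\ell(x)=x$ on $[0,m/2)$, $\ell(x)=(x-m)^2/x$ on $[m/2,m)$, and $\ell(x)=0$ on $[m,1]$. The key observation is the pointwise bound $\ell(x)\le m/2$ on $[0,m)$: on $[0,m/2)$ it is immediate, and on $[m/2,m)$ it reduces to $(x-m)^2\le mx/2$, i.e.\ $x^2-\tfrac52 mx+m^2\le 0$, whose roots are $m/2$ and $2m$, so it holds on $[m/2,2m]\supset[m/2,m)$. Consequently the total worst-case loss is at most $\tfrac m2\,\PP(X<m)\le m/2$, giving $2\alpha_2\ge 2m-m/2=3m/2$ and therefore $\PA_2^{NR}(F)\le 2m/(3m/2)=4/3$. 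Since the worst SPE sum never exceeds the best SPE sum, $\PS_2^{NR}(F)\le\PA_2^{NR}(F)\le 4/3$, which yields both inequalities.

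Finally, for tightness I would note that equality in $\ell\le m/2$ forces all mass toward $x=m/2$, which is incompatible with $\EE(X)=m$ unless $m\to 0$; this dictates the construction. I would take $F_\epsilon$ placing mass $1-\epsilon$ at $c_\epsilon=\tfrac{\epsilon}{1+2\epsilon}$ and mass $\epsilon$ at $1$, for which $m_\epsilon=\tfrac{\epsilon(2+\epsilon)}{1+2\epsilon}$ and $c_\epsilon<m_\epsilon/2$. As $c_\epsilon<m_\epsilon/2$, the stage-$1$ game at the low atom has the unique equilibrium $(\emptyset,\emptyset)$, so the best and worst SPE coincide and lose exactly $c_\epsilon$ on that atom; the loss ratio $(1-\epsilon)c_\epsilon/2m_\epsilon=\tfrac{1-\epsilon}{2(2+\epsilon)}$ tends to $1/4$, whence $\PA_2^{NR}(F_\epsilon)=\PS_2^{NR}(F_\epsilon)\to 4/3$. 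The main obstacle is not any single computation but getting the equilibrium selection right across the three regimes and isolating the clean bound $\ell(x)\le m/2$; the tightness argument is then essentially forced by the equality case, the only subtlety being that one must send $m\to 0$ so that the compensating high mass still fits inside $[0,1]$.
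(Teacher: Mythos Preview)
Your argument is correct and follows the same core route as the paper: both proofs hinge on the pointwise inequality $(x-m)^2/x\le m/2$ on $[m/2,m]$ together with $x\le m/2$ on $[0,m/2)$, and both tightness constructions put mass $1-\varepsilon$ on a low atom strictly below $m/2$ and mass $\varepsilon$ at $1$. The differences are organizational rather than substantive. You package the estimate as a single loss function $\ell(x)\le m/2$ and deduce $\PS_2^{NR}\le\PA_2^{NR}\le 4/3$ in one stroke, whereas the paper treats $\PS$ and $\PA$ by two separate (but equivalent) computations. More notably, you solve the stage-$1$ game directly and never invoke Theorem~\ref{thm:norecall}, so your bounds and your tightness example apply verbatim to atomic $F$; the paper instead routes through the formulas of Theorem~\ref{thm:norecall}, which were derived under a continuity assumption, and accordingly smooths its tight family by mixing with a uniform. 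Your version is therefore a bit cleaner and self-contained, but the underlying mechanism is identical.
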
    
% For $n=k=2$ 
% \begin{itemize} 
%     \item $\sigma_{T^2}$ guarantees $\alpha'_2$, at least what $\sigma_{T^1}$ guarantees. 
%     \item $\alpha'_n \ge \max_l \gamma(\sigma_{T^2})$ Does the equality holds? Look for uniform
%     \item $\sigma_{T^2}$ guarantees much more that $T^2$
%     \item The worst equilibrium is $(\sigma_{T^2}, \text{ P2 BR to } \sigma_{T^2})$
%     \item There is a better equilibrium
%     \item Price of Anarchy / Stability / Prophet Ratio 
% \end{itemize}

%\newpage

 \subsection{Example: Uniform distribution} \label{sec:unif}
 
In this section, we apply the results obtained in the former sections to the particular case where the samples are from a random variable uniformly distributed on [0,1].

 \subsubsection{Computation of SPEP}
 We start by the computation of the SPEP. To this end, we divide the analysis according to whether we are under the full recall or no recall case. For the former, We compute $E_3^{FR}$ to illustrate how to apply Theorem \ref{thm:SPEPcharact}. For the latter, we implement the recursive formulas for $\alpha_n$, $\beta_n$ and $\alpha'_n$.
 
 \paragraph{Full recall case.} We now compute the set of SPEP of the game $\Gamma^{FR}_3$, when we have three samples arriving from a uniform $[0,1]$ distribution.  
 %and each of them has a value following uniform $[0,1]$ distribution. That is, we will  compute what we called   $E_3(0,0)$. Notice that this is equivalent to compute $\EE_a(E_2(a,0))$, where $a$ represents the value of the first arrived item and then $a\sim \text{Unif}[0,1].$
 
Given $ 1 \geq a \geq b \geq 0$, by Theorem~\ref{thm:SPEPcharact} it holds that $l_0(a,b)=h_0(a,b)=(a+b)/2$ and 
\[ 
E_0^{FR}(a,b)= \left\{ \left( (a+b)/2, (a+b)/2\right) \right\}.
\]
Let us now compute a closed-form expression for $E_1^{FR}(a,b)$.

Using the notation introduced in Theorem \ref{thm:SPEPcharact}:

%Note that in this case, if both players pass at stage 0, each player will get a payoff  of $(a+b)/2$ if $X_1\leq b$, and of $(a+X_1)/2$ if $X_1\geq b$. 

\begin{eqnarray*}
d_1^-(a,b)=d_1^+(a,b)=\EE\left[\frac{a+b}{2}\indi_{X< b}+\frac{a+X}{2}\indi_{X\geq b} \right]= b(a+b)/2 +(1-b) (a/2+(1+b)/4)=\frac{1}{2}a+\frac{1}{4}+\frac{1}{4}b^2.
\end{eqnarray*}

% \begin{equation*}
%     \int\limits_{0}^b \frac{1}{2}a+\frac{1}{2}b \mathrm{d}x + \int\limits_{b}^1 \frac{1}{2}x+\frac{1}{2}a \mathrm{d}x=\frac{b^2+2a+1}{4}.
% \end{equation*}
%In particular, if $a=b$ we obtain that the lower right cell in the payoff matrix Table~\ref{tab:payoff_matrix_2} is  $\left( \frac{(a+1)^2}{4}, \frac{(a+1)^2}{4}\right)$.

If $X \sim \text{Unif}[0,1]$ and $k \in [0,1]$, we have:
\begin{eqnarray*}
    \EE(X \vee k)&=& k \PP(X \leq k)+ \EE(X|X >k) \PP(X>k) \\
&=&  \frac{1+k^2}{2}.
\end{eqnarray*}

Using that $\frac{1+2a+b^2}{4}=\frac{1}{2}\left(a+ \frac{1+b^2}{2} \right)$, we deduce 
that 
\[ l_1(a,b)=L\left(a, \frac{1+b^2}{2},\frac{1+2a+b^2}{4} \right)=h_1(a,b)=H\left(a, \frac{1+b^2}{2},\frac{1+2a+b^2}{4} \right)=\frac{1+2a+b^2}{4}.\] 
By Theorem~\ref{thm:SPEPcharact} we obtain

  \[
  E_1^{FR}(a,b)=\left\{\left(\frac{a}{2}+\frac{1+b^2}{4},\frac{a}{2}+\frac{1+b^2}{4}\right)\right\}.\]
  
%   \paragraph{Case $a=b$.} If $a=1$, it is easy to show that $\EE_1(a,a)=\{(1,1)\}$. If $a<1$, due to $\frac{1+a^2}{2}=a+ \frac{(a-1)^2}{2}+a>a$ and $\frac{(a+1)^2}{4}=\frac{(a-1)^2}{4}+a$, we have that the unique SPE is  $(\emptyset,\emptyset)$ and the expected payoff is given by $\left( \frac{(a+1)^2}{4}, \frac{(a+1)^2}{4}\right)$. 
%   Thus, we conclude that if $a=b$,
%   \[
%   \E_1(a,b)=\left\{ \left( \frac{(a+1)^2}{4}, \frac{(a+1)^2}{4}\right) \right\}.
%   \]
  
%   Taking into account that putting $b=a$ in  $\frac{a}{2}+\frac{1+b^2}{4}$ we obtain just $\frac{(a+1)^2}{4}$, we conclude that for all $1\geq a \geq b \geq 0,$
%   \[ \E_1(a,b)=\left\{\left(\frac{a}{2}+\frac{1+b^2}{4},\frac{a}{2}+\frac{1+b^2}{4}\right)\right\}.\]
  
To compute $E_2^{FR}(a,b)$, we need to compute $\EE(X_{(2)}\vee b)$. The expected value of the maximum between $n$  $\text{Unif}[0,1]$  random variables and a constant $k \in [0,1]$ is given by:
  %By the same argument than in the case above, we have that 
  %if $a$ is larger than $b$, 
  %players pure strategies are $a$ or pass. If both choose $a$, the expected payoff for each of them is $a/2+\EE(X_{(2)}\vee b)/2,$ where $X_{(2)}$ is the maximum between two random variables uniformly distributed in $[0,1].$ If one chooses $a$ and the other passes, the former obtained $a$ and the later $\EE(X_{(2)}\vee b)$. If both choose pass, they both obtain the expectation of the SPE payoff of the game $\Gamma_1( \bar{a}, \bar{b})$, with $\bar{a}=\EE(X \vee a)$ and $\bar{b}=(a \wedge X) \vee b$, i.e., $\EE_X(\bar{a}/2+(1+\bar{b}^2)/4);$ where $X\sim \text{Unif}[0,1]$.
    %If we denote by $\EE(X_{(n)}\vee k)$ 
\begin{eqnarray*}
    \EE(X_{(n)} \vee k)&=& k \PP(X_{(n)} \leq k)+ \EE(X_{(n)}|X_{(n)} >k) \PP(X_{(n)}>k) \\
    &=& k^{n+1}+\frac{n}{n+1}(1-k^{n+1}) \\
    &=& \frac{n+k^{n+1}}{n+1},
\end{eqnarray*}
  and therefore $\EE(X_{(2)}\vee b)=(2+b^{3})/3$.
  %and $\bar{a}=(1+a^2)/2.$
  
  %Since  $E_1^{FR}(a,b)$ is a singleton for all $a$ and $b$, the expected payoff of players if both play ``pass'' (i.e. $d$) does not depend on which equilibrium is played in the continuation game, and thus it is given by
  We have:
  \begin{eqnarray*}
   %\EE_X(\bar{a}/2+(1+\bar{b}^2)/4)&=& 
   d_2^-(a,b)=d_2^+(a,b)=\EE_X\left(\frac{a \vee X}{2}+\frac{1+((a \wedge X) \vee b)^2}{4}\right) 
   &=&\frac{1+a^2}{2}+\frac{b^3-a^3}{6}.
  \end{eqnarray*}

We obtain:

\[ l_2(a,b)=\begin{cases} (1+a^2)/2+(b^3-a^3)/6 & \text{if } a \leq (2+b^3)/3 \\
a/2+(2+b^3)/6 & \text{if } a \> (2+b^3)/3 \end{cases}\]
\[ h_2(a,b)=\begin{cases} (1+a^2)/2+(b^3-a^3)/6 & \text{if } a \leq \max\{(2+b^3)/3, (1+a^2)/2+(b^3-a^3)/6`\} \\
a/2+(2+b^3)/6 & \text{if } a > \max\{(2+b^3)/3, (1+a^2)/2+(b^3-a^3)/6`\}\end{cases}\]

%  \[   
%E_2^{FR}(a,b) = 
%     \begin{cases}
%     \{(x,x) \}& \text{if } a \leq (2+b^3)/3\\
%       \{(x,x),(y,y), (z,z) \}  &\text{if }  a \in \left((2+b^3)/3, (1+a^2)/2+(b^3-a^3)/6\right) \\
%        \{ (z,z) \}  &\text{if } a \geq (1+a^2)/2+(b^3-a^3)/6, \\
%     \end{cases}
%\]
%where $x=(1+a^2)/2+(b^3-a^3)/6, \ y=(dc-2ac+ad)/(2d-a-c)$ and $z=a/2+(2+b^3)/6$.
In particular, $l_2(0,0)=h_2(0,0)=1/2$ so that $E_2^{FR}=E_2^{FR}(0,0)=\{(1/2,1/2)\}$.
%\begin{itemize}
%    \item[(i)] If $\frac{2+b^3}{3}>a$, then $\E_2(a,b)=\left\{ \left(\frac{1+a^2}{2}+\frac{b^3-a^3}{6},\frac{1+a^2}{2}+\frac{b^3-a^3}{6} \right) \right\}$.
%    \item[(ii)] If $\frac{1+a^2}{2}+\frac{b^3-a^3}{6}>a>\frac{2+b^3}{3}$, then 
%    {\scriptsize
%\begin{align*}
%\E_2(a,b)=\left\{\left(\frac{a}{2}+\frac{2+b^3}{6},\frac{a}{2}+\frac{2+b^3}{6} \right),\left( \frac{dc-2ac+ad}{2d-a-c}, \frac{dc-2ac+ad}{2d-a-c}\right), \left(\frac{1+a^2}{2}+\frac{b^3-a^3}{6},\frac{1+a^2}{2}+\frac{b^3-a^3}{6} \right) \right\}.
%\end{align*}
%}
%   \item[(iii)] If $a>\frac{1+a^2}{2}+\frac{b^3-a^3}{6}$, then $\E_2(a,b)=\left\{ \left(\frac{a}{2}+\frac{2+b^3}{6},\frac{a}{2}+\frac{2+b^3}{6} \right) \right\}$.
%\end{itemize}

% Now that we have characterized the set $\E_2(a,b)$, we can easily compute $\E_3(0,0)$, that is, the set of the SPE payoffs of the original game with 3 arrivals.  
%To this end, we compute $\EE_a(\E_2(a,0))$, where $a$ represents the value of the first arrived item and then $a\sim \text{Unif}[0,1].$
Now, we compute  $E_3^{FR}$. We have 
\[ l_2(a,0)=\begin{cases} (1+a^2)/2-a^3/6 & \text{if } a \leq 2/3 \\
a/2+1/3 & \text{if } a > 2/3 \end{cases}\]
\[ h_2(a,0)=\begin{cases} (1+a^2)/2-a^3/6 & \text{if } a \leq a^* \\
a/2+1/3 & \text{if } a > a^*\end{cases},\]
where $a^*$ is the unique root of $a=(1+a^2)/2-a^3/6$.
We deduce that
\[ l_3(0,0)= d_3^-(0,0)=\EE[l_2(X,0)]=\int_{0}^{2/3} ( (1+a^2)/2 - a^3/6) \mathrm{d}a+\int_{2/3}^{1} (a/2+1/3)\mathrm{d}a= \frac{607}{972}\approx 0.6244, \] 

\[h_3(0,0)=d_3^+(0,0)=\EE[h_2(X,0)]=\int_{0}^{a*} ( (1+a^2)/2-a^3/6) \mathrm{d}a+\int_{a^*}^{1} (a/2+1/3) \mathrm{d}a 
 \approx 0.6245.\]
We conclude that 
 \[
 E_3^{FR}(0,0)=\left\{ (u,u) : u \in [l_3(0,0),h_3(0,0)]\right\}.
 \]

What is played at equilibrium? In the best equilibrium, both players pick $X_1$ if and only if $X_1\geq a^*$, whereas in the worst equilibrium, both players pick $X_1$ if and only if $X_1\geq 2/3$. Competition  induces  the players to pick $X_1$ with relatively low values in the worst equilibrium, and this  decreases the sum of expected payoffs.

\paragraph{No recall case.}
Now, we turn to reduce the formulas obtained in Theorem~\ref{thm:norecall} for the particular case where the %0items' 
values are uniformly distributed in the interval $[0,1]$.
Recall that $c_1=\EE(X)$, $c_n=\EE(X \vee c_{n-1})$ for $n>1$ and $X \sim F$. To obtain the expressions of $\beta_n$ and $\alpha_n$ for the uniform case, we use the following technical result.

 \begin{lemma} \label{lem:ineq:norecall}
 If $\alpha'_n <a< \beta_n$, then $a+c_n \geq 2 \beta_n$.
 \end{lemma}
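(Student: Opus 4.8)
The plan is to strengthen the statement to a clean scalar inequality and prove it by induction on $n$. Since $a\mapsto a+c_n$ is increasing, the claim ``$a+c_n\ge 2\beta_n$ for all $a\in(\alpha'_n,\beta_n)$'' is equivalent to its limiting case $\alpha'_n+c_n\ge 2\beta_n$. Writing $\delta_n:=\alpha'_n+c_n-2\beta_n$, it therefore suffices to prove $\delta_n\ge 0$ for every $n$, and I would do this by induction.

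Before the inductive step I would record two simplifications of the recursions in Theorem~\ref{thm:norecall}. First, integrating by parts and using $c_n=\EE(X\vee c_{n-1})=c_{n-1}F(c_{n-1})+\int_{c_{n-1}}^{1}a\,\mathrm{d}F(a)$ gives the identity $\int_{c_{n-1}}^{1}F(a)\,\mathrm{d}a=1-c_n$, which turns part~(a) into the compact form $\alpha'_n=\tfrac{c_{n-1}+c_n}{2}-\int_{\alpha'_{n-1}}^{c_{n-1}}F(a)\,\mathrm{d}a$. Second, the induction hypothesis $\delta_{n-1}\ge 0$ says exactly that $a+c_{n-1}\ge 2\beta_{n-1}$ for every $a\ge\alpha'_{n-1}$; hence the maximum appearing in part~(b) equals $a+c_{n-1}$ throughout $[\alpha'_{n-1},\beta_{n-1}]$, and the recursion for the best symmetric equilibrium collapses to $2\beta_n=\int_{0}^{\alpha'_{n-1}}2\beta_{n-1}\,\mathrm{d}F(a)+\int_{\alpha'_{n-1}}^{1}(a+c_{n-1})\,\mathrm{d}F(a)$. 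There is no circularity here: proving $\delta_n\ge 0$ only invokes $\delta_{n-1}\ge 0$, i.e.\ the lemma one level down.

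With $A:=\alpha'_{n-1}$ and $C:=c_{n-1}$, I would then subtract these expressions and simplify. Using the integration-by-parts identity $\int_{A}^{C}a\,\mathrm{d}F(a)+\int_{A}^{C}F(a)\,\mathrm{d}a=CF(C)-AF(A)$ together with $c_n=CF(C)+\int_{C}^{1}a\,\mathrm{d}F(a)$, all bulk terms cancel and one is left with the strikingly simple recursion $\delta_n=\tfrac{c_n-c_{n-1}}{2}+\delta_{n-1}\,F(\alpha'_{n-1})$. Both summands are nonnegative: $c_n=\EE(X\vee c_{n-1})\ge c_{n-1}$, so the first term is $\ge 0$, while $F\ge 0$ and the induction hypothesis give the second. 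This closes the induction, the base case being $n=1$, where $\alpha'_1=\beta_1=\tfrac12\EE(X)$ and $c_1=\EE(X)$, so $\delta_1=\tfrac12\EE(X)\ge 0$.

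The only delicate part is the algebraic collapse in the last step: a priori $\delta_n$ is a combination of several integrals against $F$ over the thresholds $\alpha'_{n-1}$, $c_{n-1}$ and the tail, and it is not obvious that everything telescopes. The two integration-by-parts identities above are exactly what make it do so, and the resulting recursion then lets the monotone sequence $c_n$ and the nonnegativity of $\delta_{n-1}$ do all the work. I note finally that this argument uses only the continuity of $F$ (to justify the recursions of Theorem~\ref{thm:norecall} and to discard the single $F$-negligible endpoint when dropping the maximum), so it applies to the uniform distribution in particular, which is the case needed here.
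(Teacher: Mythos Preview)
Your proof is correct and follows the same overall strategy as the paper: both reduce the lemma to the scalar inequality $\alpha'_n+c_n\ge 2\beta_n$ and prove it by induction on $n$, using the induction hypothesis to simplify the recursion for $2\beta_n$ in Theorem~\ref{thm:norecall}(b).

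The difference is in how the inductive step is closed. The paper works only with the uniform distribution: it computes the integrals in Theorem~\ref{thm:norecall} explicitly (e.g.\ $\alpha'_{n+1}=\tfrac12{\alpha'_n}^2+\tfrac12 c_n-\tfrac14 c_n^2+\tfrac14$), bounds $2\beta_{n+1}\le c_n+\tfrac12{\alpha'_n}^2+\tfrac12$ using the induction hypothesis, and then checks that the resulting inequality $c_{n+1}\ge\tfrac14(c_n+1)^2$ holds because $c_{n+1}=(1+c_n^2)/2$ for the uniform law. You instead keep $F$ generic, use integration by parts and the identity $\int_{c_{n-1}}^1 F=1-c_n$ to obtain the exact recursion
\[
\delta_n=\frac{c_n-c_{n-1}}{2}+\delta_{n-1}\,F(\alpha'_{n-1}),
\]
from which $\delta_n\ge 0$ is immediate since $c_n\ge c_{n-1}$ and $\delta_{n-1}\ge 0$. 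This is genuinely more general (it proves the lemma for every continuous $F$, not just the uniform) and also cleaner, since the telescoping removes the need for any distribution-specific computation. The paper's route, on the other hand, is slightly more concrete in the uniform context and makes the role of the identity $c_{n+1}=(1+c_n^2)/2$ visible.
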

\begin{proof}
It is enough to show that $\alpha_n'+c_n \geq 2 \beta_n$.
We prove that by induction on $n$.

Notice that $E_1^{NR}=\{(1/4,1/4)\}$ and thus $\beta_1=\alpha'_1=1/4$. On the other hand, $c_1=\EE(X)=1/2,$ and putting all together we have that
\[
\alpha'_1+c_1 \geq 2 \beta_1,
\]
and then the Lemma holds for $n=1.$

Let us assume now that the inequality holds for $n$ and we prove that it also holds for $n+1,$ that is:
\[
\alpha'_{n+1}+c_{n+1} \geq 2 \beta_{n+1}.
\]

By Theorem~\ref{thm:norecall}, we have that 
 \begin{eqnarray*}
 2\beta_{n+1}&=& 2 {\alpha'_n} \beta_n  + \int_{\alpha'_n}^{\beta_n}  \max\{a+c_n, 2\beta_n \} \mathrm{d}a +(1-\beta_n)c_n + \frac{1}{2}(1-\beta_n^2) \\
 &=&2 {\alpha'_n} \beta_n +(\beta_n-\alpha_n')c_n + \frac{1}{2}(\beta_n^2-{\alpha_n'}^2)+(1-\beta_n)c_n + \frac{1}{2}(1-\beta_n^2)\\
 &=&2 {\alpha'_n} \beta_n +c_n (1-\alpha_n')+\frac{1}{2}(1-{\alpha'_n}^2) \\
 &\leq& ({\alpha_n'}+c_n)\alpha_n'+c_n (1-\alpha_n')+\frac{1}{2}(1-{\alpha'_n}^2) \\ 
 &=& c_n+ \frac{1}{2} {\alpha_n'}^2+\frac{1}{2},
 \end{eqnarray*}
 where the second equality and the inequality follow by the induction hypothesis.
 
 On the other hand, from Theorem~\ref{thm:norecall}, we obtain
 \[
 \alpha_{n+1}'=\frac{1}{2}{\alpha_n'}^2+\frac{1}{2}c_n-\frac{1}{4} c_n^2+\frac{1}{4}.
 \]
Then, $\alpha'_{n+1}+c_{n+1} \geq 2 \beta_{n+1}$ if and only if
% Note that $c_n+ \frac{1}{2} {\alpha_n'}^2+\frac{1}{2} \leq \frac{1}{2}{\alpha_n'}^2+\frac{1}{2}c_n-\frac{1}{4} c_n^2+\frac{1}{4}+c_{n+1}$ if and only if 
 \[
 \frac{1}{4}(c_n+1)^2 \leq c_{n+1},
 \]
 and the last holds due to $c_{n+1}=\EE(X \vee c_n)=(1+c_n^2)/2.$
 
 Therefore, we have that $ 2 \beta_{n+1} \leq \alpha'_{n+1}+c_{n+1},$ and the proof is completed. 
\end{proof}

Using Theorem~\ref{thm:norecall} together with Lemma~\ref{lem:ineq:norecall}, we obtain that the sum of payoffs for the best SPE for $n>1$ is 
\begin{eqnarray*}
2\beta_{n}&=& \int_{0}^{\alpha'_{n-1}} 2 \beta_{n-1} \mathrm{d}a + \int_{\alpha'_{n-1}}^{1} (a+c_{n-1}) \mathrm{d}a \\
&=&2 \beta_{n-1} \alpha'_{n-1}+ c_{n-1}(1-\alpha'_{n-1})+\frac{1}{2}(1-{\alpha_{n-1}'}^2).
\end{eqnarray*}

Notice that by definition, $\beta_n \ge \alpha_n$ for all $n$, and therefore, by Lemma~\ref{lem:ineq:norecall} we conclude that if $\alpha_n' < a< \alpha_n$ then $a+c_n \geq 2 \alpha_n$. Thus, applying Theorem~\ref{thm:norecall}, the sum of payoffs for the worst SPE for $n>1$ is 
\begin{eqnarray*}
2\alpha_n=\frac{1}{2}+\frac{5}{2} c_{n-1}^2+3 \beta_{n-1}^2+c_{n-1}-6c_{n-1}\beta_{n-1}+\alpha_{n-1}^2-4(c_{n-1}-\beta_{n-1})^2 \ln(2).
\end{eqnarray*}

If the samples are uniformly distributed in the interval $[0,1]$, the recursive formulas given in Theorem \ref{thm:norecall} are easy to implement numerically,

In Table~\ref{tab:alphas:beta} we expose the values of  $\alpha'_n, \alpha_n$ and $\beta_n$ for $n$ from 1 to 4 and $F=\text{Unif}[0,1]$, whereas in Figure~\ref{fig:alp:bet} we compare the sum of payoffs for the best and worst SPE for $n$ up to 10. %this three values for $n$ up to 10. 

\begin{table}[ht]
\begin{center}
\begin{tabular}{|c|c|c|c|c|}
\hline 
 & $\alpha'_n$  & $\alpha_n$ & $\beta_n$ \\ 
\hline \hline
n=1 & 0.25 & 0.25 & 0.25  \\
\hline
n=2 & 0.4688 & 0.4759 & 0.4844\\ 
\hline 
n=3 & 0.5747 & 0.5803 & 0.5881 \\ 
\hline 
n=4 & 0.6419 & 0.6465 & 0.6533 \\
\hline
\end{tabular} 
\end{center}
\caption{Values of  $\alpha'_n, \alpha_n$ and $\beta_n$ for one, two, three and four arrivals and values uniformly distributed in $[0,1]$.}
\label{tab:alphas:beta}
\end{table}

 \begin{figure}[h]
 \centering
 \includegraphics[width=1.1 \textwidth]{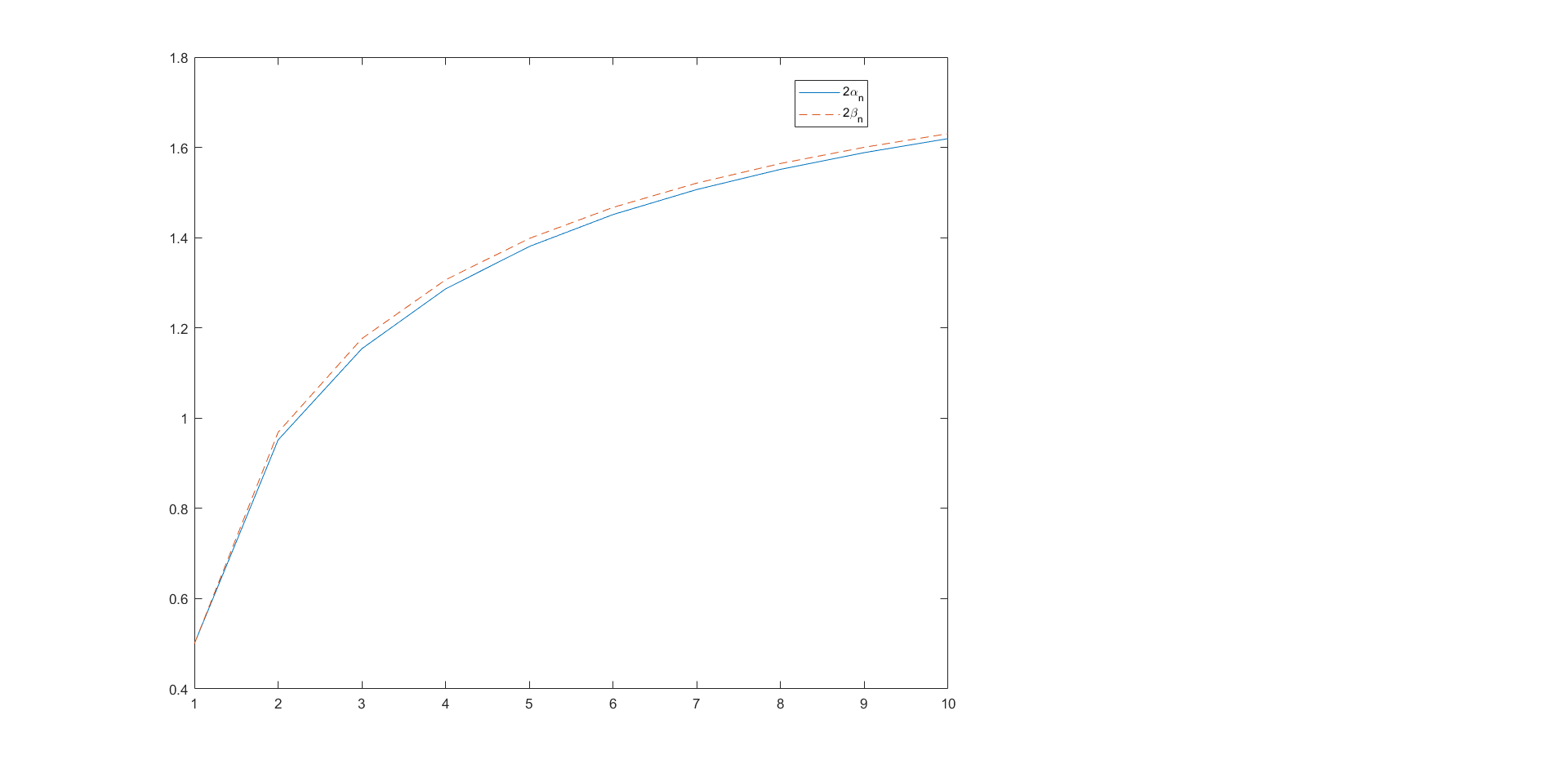}
 \caption{Values of sum of payoff for the best SPE (dashed line) and for the worst SPE (continuous line) for up to 10 arrivals, no recall  and values uniformly distributed in $[0,1]$.} \label{fig:alp:bet}
 \end{figure}

\subsubsection{Comparison of model variants}

%Once the set of SPEP have been computed for both variants, the question is how do they compare. 
We know by Theorem~\ref{thm:comparison} that $\alpha_n \leq \beta_n \leq l_n \leq h_n$ for every $n$, meaning that having full recall is advantageous to the players. We quantify this difference for small values of $n$ numerically, 
%due to the computation of $l_n$ and $h_n$ seems hopeless even numerically when $n$ is large. 

In Table~\ref{tab:l:h:alpha:beta} we present the values of $\alpha_n, \beta_n , l_n$ and $h_n$ for $n \leq 5$ arrivals. The values corresponding to the no recall case were computed using the formulas obtained in the former section, whereas for the full recall case, for $n=1,2,3$ the values were obtained explicitely and for $n=4,5$ numerically via discretization. 

From the values of Table~\ref{tab:l:h:alpha:beta} we can see that under the full recall case, each player has an advantage compare to the no recall case of between 3$\%$ and 5$\%$ for $n=2$, between $6\%$ and $7.6\%$ for $n=3$, between $6.9\%$ and $8\%$ for $n=4$ and between $7\%$ and $8\%$ for $n=5$. 
%We conclude this section about the uniform distribution with a   the values of $l_n$, $h_n$, $\alpha_n$ and $\beta_n$ for low $n$. Here having full recall is advantageous to the players.

\begin{table}[ht]

\begin{center}
\begin{tabular}{|c|c|c|c|c|c|}
\hline 
 & $l_n$ & $h_n$ & $\alpha_n$ & $\beta_n$ \\ 
\hline  \hline 
n=1 & 0.25 & 0.25 & 0.25 & 0.25  \\
\hline
n=2 & 0.5 & 0.5 & 0.4759 & 0.4844\\ 
\hline 
n=3 & 0.6244 & 0.6245 &  0.5803 & 0.5881 \\ 
\hline 
n=4 & 0.6989 & 0.699 & 0.6465 & 0.6533 \\
\hline
n=5 & 0.7484 & 0.7486 & 0.6932 & 0.6991  \\
\hline
\end{tabular} 
\end{center}
\caption{Values of  $l_n, h_n, \alpha_n$ and $\beta_n$ for up to 5 arrivals and values uniformly distributed in $[0,1]$.}
\label{tab:l:h:alpha:beta}
\end{table}

\subsubsection{Efficiency of equilibria}

We represent in Table~\ref{tab:multicol} the values of the ratios $\PA_n(F), \PS_n(F)$ and $\PR_n(F)$ for $X \sim\text{Unif}[0,1]$ and $n$ up to 5 in both settings.  We notice that the ratios are close to 1, i.e.  equilibria are close to be efficient in both model variants, being more efficient in the full recall case. We highlight here that these ratios are a measure of efficiency of equilibria for each game, but not between the two different games. That is to say, it is not correct to say that one setting is better than the other by comparing the PoA or PoS, because for each game, these ratios are computed in a different way (the numerator are different). For such a comparison, we could use the value of PR, but comparing the PR for both problems is equivalent to compare $\beta_n$ and $h_n$, which is a comparison we already made. 

%We give  below the 3 ratios for the uniform distribution on $[0,1]$ with and without   recall. We can notice that the ratios are close to 1, i.e.  equilibria are   close to be efficient, and having full recall give a significant advantage to the players, with ratios closer to 1 than in the no recall case. 

\begin{table}[ht]

\begin{center}
\begin{tabular}{|c|c|c|c|c|c|c|}
\hline 
 & \multicolumn{2}{c|}{PoA$_n$(F)} & \multicolumn{2}{c|}{PoS$_n$(F)} & \multicolumn{2}{c|}{PR$_n$(F)}\\ 
\hline \hline 
 & Full Recall  & No Recall & Full Recall  & No Recall & Full Recall  & No Recall\\ 
\hline 
n=2 & 1 & 1.0507 & 1 & 1.0323 & 1 & 1.0323 \\ 
\hline 
n=3 & 1.000823 & 1.0299 & 1.0008 & 1.0161 & 1.0008 & 1.0627\\ 
\hline 
n=4 & 1.00157 & 1.0212 & 1.00143 & 1.0105 & 1.00143 & 1.0714 \\ 
\hline 
n=5 & 1.0021 & 1.0164 & 1.00187 & 1.0077 & 1.00187 & 1.0728 \\ 
\hline 
\end{tabular} 
\end{center}
\caption{Efficiency of equilibria up to 5 arrivals from Unif[0,1] distribution.}
\label{tab:multicol}

\end{table}

%\textcolor{red} {J: table to be completed}

Another question we address here is the number of arrivals that gives the worst gaps,   in the no recall case for which the numerics are easier. We obtain  that, for both the Price of Anarchy and Price of Stability, the ratios reach their maximum when $n=2$, see Figures~\ref{fig:PoA} and \ref{fig:PoS}.  This result is somehow intuitive: as we are in the no recall case, there exists a positive probability of getting nothing and then the smaller the number of arrivals, the more likely this seems to happen. 
%In Figures~\ref{fig:PoA} and \ref{fig:PoS} are represented $\PA_n(F)$ and $\PS_n(F)$ as a function of $n$ when %$F=\text{Unif}[0,1],$ where we can see that the maximum-- in both cases-- is reached at $n=2$.
Regarding the Prophet Ratio, we also compute it as a function of $n$  (see Figure~\ref{fig:PR}), and we obtained that the maximum is reached when $n=5$. Here, the result is more surprising.

%{\color{blue} We remark here that we did not do the same analysis for the recall case because the discretization method does not converge when $n$ is not small, and then the computation of the corresponding ratios seems hopeless. }
%\textcolor{red} {J: why not with recall?}

\begin{figure}[h]
     \centering
     \begin{subfigure}[b]{0.45\textwidth}
         \centering
         \includegraphics[width=\textwidth]{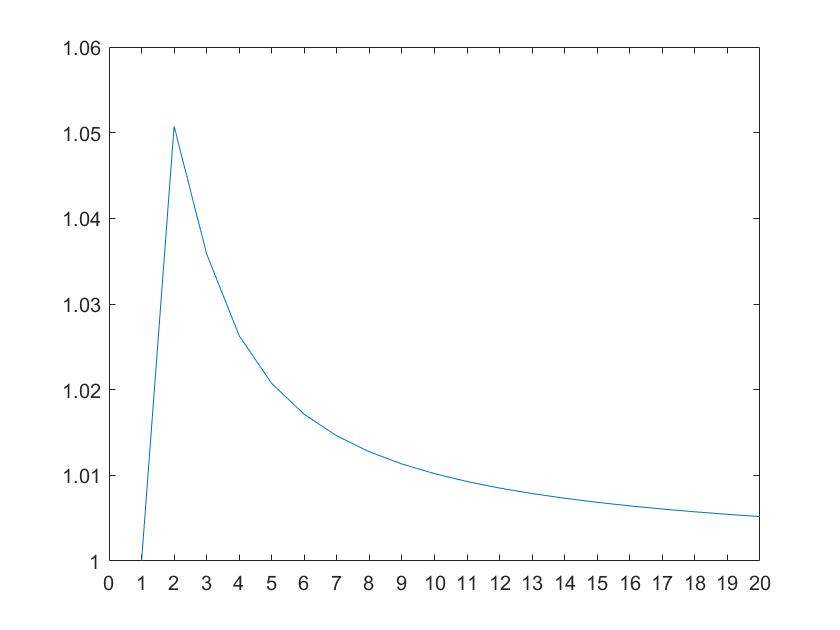}
         \caption{$\PA^{NR}_n(F)$.}
         \label{fig:PoA}
     \end{subfigure}
     \hfill
     \begin{subfigure}[b]{0.45\textwidth}
         \centering
         \includegraphics[width=\textwidth]{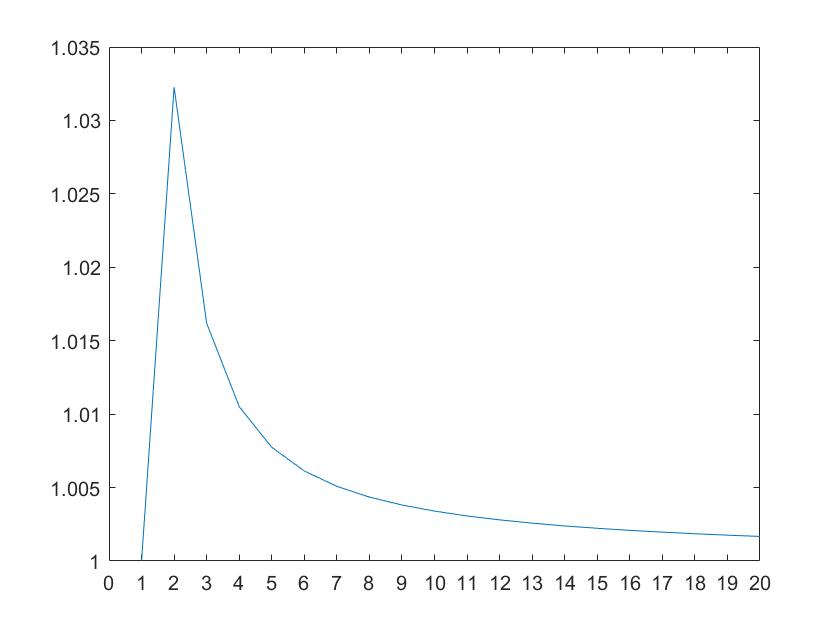}
         \caption{$\PS^{NR}_n(F)$}
         \label{fig:PoS}
     \end{subfigure}
     \hfill
     \begin{subfigure}[b]{0.45\textwidth}
         \centering
         \includegraphics[width=\textwidth]{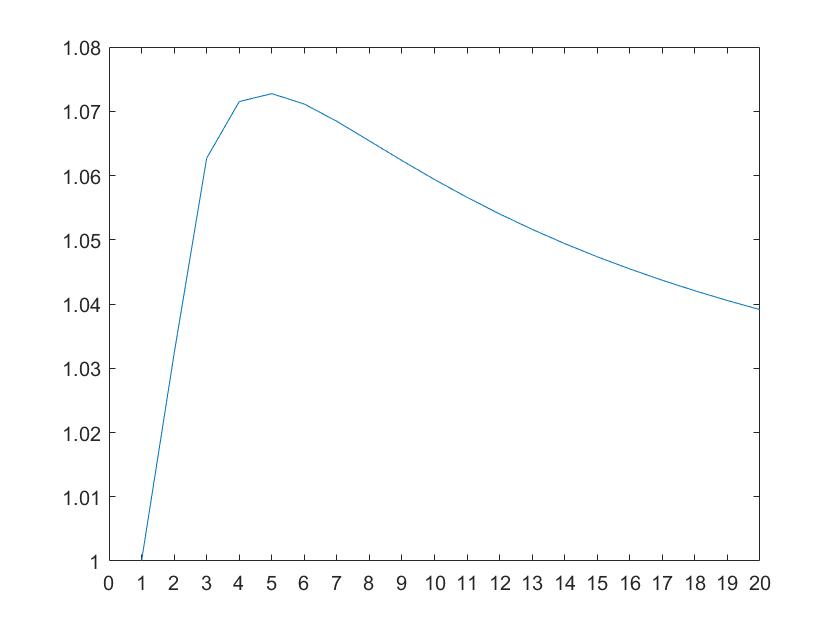}
         \caption{$\PR^{NR}_n(F)$}
         \label{fig:PR}
     \end{subfigure}
        \caption{Price of anarchy, price of stability and prophet ratio as function of $n$ when $F=\text{Unif}[0,1]$ in the no recall case.}
        \label{fig:ratios}
\end{figure}

\section{Proofs}\label{sec:proofs}

In this section we provide the proofs omitted in Section~\ref{sec:results}.
\subsection{Omitted proofs from Section~\ref{sec:payoffs}}
\paragraph{Full recall case.}
In this section, we prove Theorems~\ref{thm:SPEPcharact} and \ref{thm:caract}. Before that, we need some preliminary  results stating properties of the game $\Gamma^{FR}_n(a,b)$ defined in Section~\ref{sec:payoffs}. Recall that this game is defined as $\Gamma_n^{FR}$ with $n$ arrivals but with two initial values $a \geq b$ already present in the market. 

At first, in order to study the SPEP of $\Gamma_n^{FR}$, note that this game has the same SPEP as the game which is identical to  $\Gamma_n^{FR}$ and terminates at the first time a player gets an item or at stage $n$. 
If the game terminates because one player did get an item, the payoff of the other player is the value of the one-player continuation problem, and the payoff of the players if none of them did stop strictly before  stage $n$ is the expected payoff given by the pair of strategies "bid for the best available item". 
Indeed, it is easy to check that in both these situations the SPEP in the continuation games are unique and correspond to the payoffs of this auxiliary game. 
In the following, we assume that $\Gamma_n^{FR}$ is the auxiliary game. In this game, an history is just a sequence of values $(X_1,...,X_t)$  and the strategy of player $i$ at time $t<n$ is a measurable map $\sigma_{i,t}$ from histories into the probabilities over $\{\emptyset\} \cup \{1,...,t\}$. 
We use the same identification for the game $\Gamma^{FR}_n(a,b)$.   

Given an history $h=(x_1,...,x_t)$ of length $t$, let  $\Gamma_n(h)$ denote the subgame of $\Gamma^{FR}_{n+t}$ starting at stage $t$ after observing $h$ in which the two players are still present and $E_n(h)$ the set of SPEP of this game.  

In the following, variables $(X,X_1,X_2,....)$ will denote independent variables with distribution $F$, and $X_{(t)}=\max(X_1,...,X_t)$.

Without loss of generality, we assume that $\PP(X>0)>0$ (otherwise the set of equilibrium payoffs is reduced to $\{(0,0\}$).  

We state now without proofs some properties of SPEP which follow easily from usual arguments in dynamic game theory.
Recall that if $x \in [0,1] \rightarrow S(x)$ is a  set-valued map, $\EE[S(X)]$ is defined by
\[ \EE[S(X)]= \{ \EE[f(X)] \,|\, \text{$f$ measurable such that  $\forall x\in [0,1], f(x) \in S(x)$} \}. \]

\begin{lemma}\label{lemma:propertiesSPE}
The following properties hold:
\begin{enumerate}
\item $E_n=E_n(0,0)$
\item If $h=(x_1,...,x_t)$ is an history of length $t \geq 2$, then $E_n(h)= E_n(a,b)$ where $a$ and $b$ denote the first and second largest items in $h$ respectively.
\item $(x,y) \in E_n(a,b)$ with $a\geq b$ if and only if there exists $(d,e) \in \EE[ E_{n-1}(a\vee X,med[a,b,X])]$ such that $(x,y)$ is a mixed Nash equilibrium payoff of the finite game $G_n(a,b;d,e)$ with payoff matrix:
 \begin{table}[h]
 \centering
    \setlength{\extrarowheight}{7pt}
    \begin{tabular}{cc|c|c|c|}
      & \multicolumn{1}{c}{} & \multicolumn{2}{c}{Player $2$}\\
      & \multicolumn{1}{c}{} & \multicolumn{1}{c}{$a$}  & \multicolumn{1}{c}{$b$} & \multicolumn{1}{c}{$\emptyset$} \\\cline{3-5}
      \multirow{2}*{Player $1$}  & $a$ & $\left(\frac{a+ \EE(X_{(n)} \vee b)}{2} , \frac{a+ \EE(X_{(n)} \vee b)}{2} \right)$ & $(a,b)$ & $\left( a, \EE(X_{(n)} \vee b)\right)$ \\\cline{3-5}
      & $b$ & $(b,a)$ & $\left(\frac{b+ \EE(X_{(n)} \vee a)}{2} , \frac{b+ \EE(X_{(n)} \vee a)}{2} \right)$ & $(b, \EE(X_{(n)} \vee a))$ \\\cline{3-5}
       \multirow{2}*{}  & $\emptyset$ & $\left( \EE(X_{(n)} \vee b),a\right)$ &   $( \EE(X_{(n)} \vee a),b)$  &  $\left( d, e\right)$ \\\cline{3-5}
    \end{tabular}
    \caption{\label{tab:payoffs:general0} Payoff matrix of $G_n(a,b;d,e)$.}
 \end{table}
\item Similarly, $(\sigma_1,\sigma_2)$ is a pair of first-stage strategies of some SPE in $\Gamma_n^{FR}(a,b)$  with payoff $(x,y)$ in $\Gamma_n^{FR}(a,b)$ if and only if there exists $(d,e) \in \EE[ E_{n-1}(a\vee X,med[a,b,X])]$ such that $(\sigma_1,\sigma_2)$ is a mixed Nash equilibrium of the matrix game $G_n(a,b;d,e)$ with payoff $(x,y)$.
\end{enumerate}
\end{lemma}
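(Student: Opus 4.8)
The plan is to prove the four assertions together by induction on the number $n$ of remaining arrivals (equivalently, by backward induction through the game tree), using the one-shot deviation principle and working inside the auxiliary game described above, in which play freezes as soon as one player selects an item --- after which the lone survivor's continuation value is the deterministic optimal-stopping value $\EE(X_{(m)}\vee v)$ for $m$ remaining arrivals and best available value $v$ --- and at stage $n$ the last-stage convention fixes the payoff. The base case is $n=0$: two players present with top two available values $a\geq b$ are forced by the convention to each receive $\tfrac{a+b}{2}$, so $E_0(h)=\{(\tfrac{a+b}{2},\tfrac{a+b}{2})\}$ depends on the history only through $(a,b)$, which is item (2) at level $0$.

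The crux of the inductive step is to collapse the one-shot game played at a given history --- whose actions are ``select any currently available item'' or ``pass'' --- down to the $3\times3$ game $G_n(a,b;d,e)$ of Table~\ref{tab:payoffs:general0}, whose only selection actions are ``take $a$'' and ``take $b$''. I would show that selecting any available item of value $c<b$ is \emph{strictly dominated by passing}: against an opponent who selects any item, passing leaves the deviator alone with the best remaining item, worth at least $\EE(X_{(m)}\vee b)\geq b>c$; against an opponent who also passes, the continuation is reached, and since a player can always guarantee himself at least $\tfrac{a+b}{2}\geq b$ by racing for $a$ at the next stage, the security-level lower bound on any continuation equilibrium payoff again exceeds $c$. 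Crucially this uses full recall, which keeps the item selectable and only improves the passer's options. Hence no Nash equilibrium of the true stage game ever places positive weight on a below-second-best selection, the stage game reduces exactly to $G_n(a,b;d,e)$, and in particular the set of equilibrium payoffs depends on the history only through $(a,b)$, giving item (2) at level $n$.

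It remains to identify the continuation state and apply the deviation principle. If both players pass, the next sample $X$ is observed, producing the history $(h,X)$ of length $\geq3$, whose two largest values are $a\vee X$ and $\mathrm{med}[a,b,X]$; by the induction hypothesis its SPEP set is $E_{n-1}(a\vee X,\mathrm{med}[a,b,X])$, so as $X$ varies the admissible pass/pass payoffs $(d,e)$ range over $\EE[E_{n-1}(a\vee X,\mathrm{med}[a,b,X])]$. The one-shot deviation principle for finite-horizon games then yields both item (3) and item (4): a profile is an SPE iff its continuation is an SPE --- contributing some such $(d,e)$ --- and its first-stage profile is a Nash equilibrium of $G_n(a,b;d,e)$. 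For the ``$\Leftarrow$'' direction I would glue a first-stage Nash equilibrium to a measurable family of continuation SPE realizing a given $(d,e)$ and to optimal lone play; for ``$\Rightarrow$'' I would read off $(d,e)$ from the continuation of a given SPE. Item (1) is then the special case $a=b=0$: two worthless initial items are strictly dominated as selections (passing is strictly better since $\PP(X>0)>0$), so $\Gamma_n^{FR}$ and $\Gamma_n^{FR}(0,0)$ share the same SPEP.

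The step I expect to be most delicate is precisely this stage-game reduction, together with the attendant measurability bookkeeping: one must check carefully that below-second-best selections are strictly dominated by passing (the security-level estimate, which genuinely relies on recall), and that every point of the set-valued expectation $\EE[E_{n-1}(\cdots)]$ is realized by an SPE of the continuation through a \emph{measurable} selector, so that the expectation of the set-valued continuation map is the correct object. Both are standard in finite-horizon dynamic games, but they are where the clean recursive form of $G_n(a,b;d,e)$ and the sufficient statistic $(a,b)$ are actually earned.
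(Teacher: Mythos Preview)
Your proposal is correct and follows essentially the same approach as the paper, which explicitly states the lemma ``without proofs'' and only remarks that item (1) follows from strict dominance of bidding for a zero-value item while items (2)--(4) ``can be proven by induction on $n$ using the recursive structure of SPE, the one-shot deviation principle and measurable selection arguments.'' You have simply fleshed out that sketch: your backward induction, the strict-dominance elimination of selections below the second-best value (via the security-level bound that genuinely uses recall), the identification of the continuation state as $(a\vee X,\mathrm{med}[a,b,X])$, and the measurable-selection bookkeeping are exactly the ingredients the paper names, so there is nothing to add.
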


The first point follows from the fact that any strategy in any subgame of $\Gamma_n(0,0)$ which bids with positive probability for one of the two initial items with value $0$ is strictly dominated by the modified strategy which waits until the last stage and bids for the best available item whenever the initial strategy bids for an item with value zero. Therefore such strategies are not played in any equilibrium in $\Gamma_n(0,0)$ and the result follows easily. 
The other points can be proven by induction on $n$ using the recursive structure of SPE, the one-shot deviation principle and measurable selection arguments.

In the following proposition, we prove that if a player prefers to pass instead of take $a$ given that the other player takes $a$, then he prefers to pass instead of take $a$ if the other player passes. 
\begin{proposition}\label{prop:ineq}
Let $a>b$, $(d,e) \in \EE[ E_{n-1}(a\vee X,med[a,b,X])]$ and $c=\EE[X_{(n)}\vee b]$.  If $c \geq a$ then $d \geq a$ and $e \geq a$. Similarly, if $c>a$ then $d>a$ and $e>a$.
\end{proposition}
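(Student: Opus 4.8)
The plan is to reduce the Proposition to a single uniform lower bound on \emph{all} SPE payoffs of the auxiliary games: for every $m$ and $a'\ge b'\ge 0$, every $(x,y)\in E_m^{FR}(a',b')$ satisfies $x,y\ge \tfrac12\big(a'+\EE[X_{(m)}\vee b']\big)$, the payoff the two players would share by fighting over the top item $a'$. Call this statement $(\star_m)$. I would prove $(\star_m)$ by induction on the number $m$ of remaining samples and then read off the Proposition at the top level. Note that $(\star_m)$ is what lets me treat $d$ and $e$ separately, which is essential here since Theorem~\ref{thm:SPEPcharact} (symmetry of SPEP) is not yet available.

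For the base case, $E_0^{FR}(a',b')=\{(\tfrac{a'+b'}{2},\tfrac{a'+b'}{2})\}$ and $\EE[X_{(0)}\vee b']=b'$, so $(\star_0)$ holds with equality. For the inductive step I invoke Lemma~\ref{lemma:propertiesSPE}(3): any $(x,y)\in E_m^{FR}(a',b')$ is a mixed Nash payoff of $G_m(a',b';d',e')$ for some $(d',e')\in\EE[E_{m-1}(a'\vee X,\text{med}[a',b',X])]$. Writing $c'=\EE[X_{(m)}\vee b']$, the key preliminary step is to show the continuation payoffs satisfy $d',e'\ge\tfrac{a'+c'}{2}$. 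This follows from $(\star_{m-1})$ applied pointwise in $X$, integrated, together with the pointwise inequality
\[
(a'\vee X)+\big(Y_{(m-1)}\vee\text{med}[a',b',X]\big)\ \ge\ a'+\big(X\vee Y_{(m-1)}\vee b'\big),
\]
where $Y_{(m-1)}$ is the maximum of the $m-1$ future samples. Indeed $(\star_{m-1})$ gives $d'\ge\tfrac12\EE[\text{LHS}]$, and taking expectations turns the RHS into $a'+c'$. Proving this pointwise inequality is the main technical obstacle, though it is elementary: using $\text{med}[a',b',X]=\min(a',\max(b',X))$, I split into $X\ge a'$ (where $\text{med}=a'$) and $X<a'$ (where $\text{med}=\max(b',X)$), and in each subcase the two sides are directly comparable.

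With $d',e'\ge\tfrac{a'+c'}{2}$ secured, I finish the inductive step by a short best-response argument in $G_m(a',b';d',e')$. At equilibrium player $1$'s payoff is at least the value of any pure deviation against player $2$'s mixed strategy $(p,q,r)$. If $a'\ge c'$, deviating to ``take $a'$'' yields $p\tfrac{a'+c'}{2}+(1-p)a'\ge\tfrac{a'+c'}{2}$; if $a'\le c'$, deviating to ``pass'' yields at least $(p+q)c'+r\tfrac{a'+c'}{2}\ge\tfrac{a'+c'}{2}$, using $\EE[X_{(m)}\vee a']\ge c'$ and $d'\ge\tfrac{a'+c'}{2}$. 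Either way $x\ge\tfrac{a'+c'}{2}$, and the identical argument for player $2$ (with $e'$ in place of $d'$) gives $y\ge\tfrac{a'+c'}{2}$, establishing $(\star_m)$.

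The Proposition is then the case $m=n-1$. Given $(d,e)\in\EE[E_{n-1}(a\vee X,\text{med}[a,b,X])]$, applying $(\star_{n-1})$ pointwise in $X$, integrating, and using the same pointwise inequality gives $d,e\ge\tfrac12\big(a+\EE[X_{(n)}\vee b]\big)=\tfrac{a+c}{2}$. Consequently $c\ge a$ forces $d,e\ge a$, and $c>a$ forces $d,e>a$, which is exactly the claim.
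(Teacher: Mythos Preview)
Your proof is correct. The pointwise inequality you isolate is the same as the paper's Lemma~\ref{lem:tech:ineq} (in the form established there before the hypothesis $c\ge a$ is applied), and the uniform lower bound $(\star_m)$ is exactly the content of the paper's Lemma~\ref{lem:ineq:dn}; both approaches then combine these two ingredients to conclude. Where they diverge is in how $(\star_m)$ is obtained. The paper argues directly, exhibiting a multi-stage strategy for player~$1$ in $\Gamma_{m}^{FR}(a',b')$: bid for $a'$ at once if $a'\ge\beta:=\EE[b'\vee X_{(m)}]$, and otherwise wait until the last stage and ``get at least the second-best item,'' which it asserts guarantees $\beta$ in expectation against any opponent. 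You instead induct on $m$, first pushing $(\star_{m-1})$ through the integral to obtain $d',e'\ge\tfrac{a'+c'}{2}$, and then using a single one-stage best-response in the matrix game $G_m(a',b';d',e')$. Your route is arguably the more robust one: in the paper's ``wait'' branch the second-best among $a',b',X_1,\dots,X_m$ equals $a'\vee X_{(2:m)}$ when $X_{(m)}>a'$, which can be strictly smaller than $b'\vee X_{(m)}$, and an opponent who delays and grabs a large $X_t$ can drive player~$1$'s payoff below $\beta$. Your one-stage deviation never needs to track the opponent across multiple stages---the inductive bound on $d'$ does that work---so it sidesteps this issue entirely.
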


By the symmetry of the game, it is enough to prove the inequality for $d$,  then the same arguments will hold for $e$. 

To prove the Proposition, we need the following two lemmas. 

\begin{lemma}\label{lem:tech:ineq}
If $X, Y$ are integrable random variables and $a$ a real number such that $\EE(X \vee Y)\geq a$ (resp. $>a$), then it holds that $\frac{1}{2}\EE(X \vee a)+ \frac{1}{2} \EE((X \wedge a) \vee Y)\geq a$ (resp. $>a$).
\end{lemma}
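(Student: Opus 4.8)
The plan is to deduce the stated inequality between expectations from an almost-sure, pointwise inequality between the integrands, invoking the hypothesis only at the very last step. Precisely, I would first establish the purely deterministic claim that for every pair of real numbers $x,y$ (and the fixed constant $a$),
\[
(x\vee a) + \big((x\wedge a)\vee y\big) \;\geq\; a + (x\vee y).
\]
Granting this, I substitute $x = X(\omega)$ and $y = Y(\omega)$, note that all the random variables appearing are integrable because $X,Y$ are and because taking a maximum or minimum with a constant is $1$-Lipschitz, and take expectations to obtain
\[
\tfrac12\,\EE(X\vee a) + \tfrac12\,\EE\big((X\wedge a)\vee Y\big) \;\geq\; \tfrac12 a + \tfrac12\,\EE(X\vee Y).
\]
The assumption $\EE(X\vee Y)\geq a$ then makes the right-hand side at least $a$, which is exactly the desired conclusion.

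To prove the pointwise claim I would split on the position of $x$ relative to $a$. If $x\leq a$, then $x\vee a = a$ and $x\wedge a = x$, so the left-hand side equals $a + (x\vee y)$ and the inequality holds with equality. If $x > a$, then $x\vee a = x$ and $x\wedge a = a$, so the claim reduces to $x - a \geq (x\vee y) - (a\vee y)$. This last inequality is precisely the statement that the map $t\mapsto t\vee y$ is nondecreasing and $1$-Lipschitz: its increment over the interval $[a,x]$ is nonnegative and cannot exceed $x-a$. This settles the pointwise inequality in both cases.

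The strict version follows from the same chain with no extra work: if $\EE(X\vee Y)>a$, then the final step strictly raises the lower bound past $a$, so both parts of the statement are obtained at once. I do not anticipate any serious obstacle; the only point requiring care is the pointwise inequality, and the crux is recognizing the monotone, $1$-Lipschitz character of $t\mapsto t\vee y$, which reduces the $x>a$ case to a one-line estimate. A secondary and entirely routine matter is confirming integrability of $X\vee a$ and $(X\wedge a)\vee Y$ so that the expectations are well defined, which is immediate since each is sandwiched between integrable bounds.
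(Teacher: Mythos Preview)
Your proposal is correct and follows essentially the same approach as the paper: both establish the pointwise inequality $(x\vee a) + ((x\wedge a)\vee y) \geq a + (x\vee y)$ almost surely, take expectations, and then invoke the hypothesis $\EE(X\vee Y)\geq a$. The only cosmetic difference is that the paper first proves the inequality at $a=0$ (namely $x\vee y \leq x^{+} + (y\vee(x\wedge 0))$) via a four-case split and then shifts by $a$, whereas your two-case argument using the $1$-Lipschitz monotonicity of $t\mapsto t\vee y$ handles general $a$ directly and is arguably cleaner.
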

\begin{proof}

Let us first prove that if $x,y$ are real numbers, then
\begin{equation}\label{ineq:lemma}
x \vee y \leq x^++(y \vee (x \wedge 0)).
\end{equation}

Indeed, we show that for the four possible cases:
\begin{itemize}
    \item [(a)] If $x \vee y=x \geq 0$, the left hand side in \eqref{ineq:lemma} is $x$ and the right hand side is $x+\max\{y,0\}$ which is at least $x$ and the inequality holds. 
    \item [(b)] If $x \vee y=x < 0$, the the left hand side in \eqref{ineq:lemma} is $x$ and the right hand side is $0+\max\{y,x\}=x$ and \eqref{ineq:lemma} follows.  
    \item [(c)] If $x \vee y=y \geq 0$, the left hand side in \eqref{ineq:lemma} is $y$ and the right hand side is $x^++y$ which is at least $y$ and the inequality holds. 
    \item [(d)] If $x \vee y=y < 0$,   the left hand side in \eqref{ineq:lemma} is $y$ and the right hand side is $0+y$ and \eqref{ineq:lemma} follows.
\end{itemize}

Then, defining the random variable $Z=X \vee Y - X^+-(Y \vee (X \wedge 0)))$ and using the inequality above, \begin{equation}\label{ineq:rv}
    X \vee Y \leq X^++(Y \vee (X \wedge 0)) \ \ \ \ \text{a.s.}
\end{equation} 
Take $X'=X-a$ and $Y'=Y-a$, then by \eqref{ineq:rv} we have that 
\begin{equation}\label{ineq:rv_2}
    X' \vee Y' \leq X'^++(Y' \vee (X' \wedge 0)) \ \ \ \ \text{a.s.}
\end{equation}
But
\begin{itemize}
    \item [i)]$X' \vee Y' = (X-a) \vee (Y-a)= (X \vee Y)-a$,
    \item [ii)] $X'^+=(X-a)^+=(X \vee a)-a$, and
    \item [iii)] $Y' \vee (X' \wedge 0)= (Y-a) \vee (X-a \wedge 0)=Y\vee(X \wedge a)-a,$
\end{itemize} 
and thus \eqref{ineq:rv_2} means that 
\begin{equation*}
    (X \vee Y)-a \leq (X \vee a)-a +Y\vee(X \wedge a)-a \ \ \ \text{a.s.,}
\end{equation*}
which is equivalent to 
\begin{equation*}
    (X \vee Y)+a \leq (X \vee a) +Y\vee(X \wedge a) \ \ \ \text{a.s.}
\end{equation*}
In particular, the later implies that
\begin{equation*}
    \EE(X \vee Y)+a \leq \EE(X \vee a) +\EE(Y\vee(X \wedge a)),
\end{equation*}
and due to $ \EE(X \vee Y) \geq a$, we conclude that 
\begin{equation*}
    2a \leq \EE(X \vee a) +\EE(Y\vee(X \wedge a)),
\end{equation*}
and the result follows. The case with strict inequalities is similar.
\end{proof}

\begin{lemma}\label{lem:ineq:dn}
Under the same notation and assumptions as in Proposition \ref{prop:ineq}, we have 
\[ d \geq \frac{1}{2} \EE(X \vee a)+ \frac{1}{2}\EE(\beta(X)),\]
where  $X$ is a random variable with distribution $F$,  $\beta(x)= \EE[med[a,b,x]  \vee X_{(n-1)}]$ and $X_{(n-1)}=\max(X_1,..,X_{n-1})$ for some i.i.id sequence $X_1,...,X_{n-1}$ of samples of $X$.
\end{lemma}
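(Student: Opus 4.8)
The plan is to reduce the statement to a single monotonicity property of the lower value function $l_m$ and then prove that property by induction on the number of samples. First I would record the reduction. By Lemma~\ref{lemma:propertiesSPE} together with the characterization of $E_{n-1}$ from Theorem~\ref{thm:SPEPcharact} (available for $n-1$ samples), each set $E_{n-1}(a',b')$ lies on the diagonal and its minimal coordinate equals $l_{n-1}(a',b')$. Hence any $(d,e)\in\EE[E_{n-1}(a\vee X,\mathrm{med}[a,b,X])]$ is of the form $d=\EE[u(X)]$ for a measurable selection with $u(x)\ge l_{n-1}(a\vee x,\mathrm{med}[a,b,x])$, so $d\ge\EE_X\big[l_{n-1}(a\vee X,\mathrm{med}[a,b,X])\big]$. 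Since $\beta(x)=\EE[X_{(n-1)}\vee\mathrm{med}[a,b,x]]$, it then suffices to establish the pointwise bound $l_{n-1}(a\vee x,\mathrm{med}[a,b,x])\ge\tfrac12(a\vee x)+\tfrac12\beta(x)$ and integrate in $x$.

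This pointwise bound is an instance of the following claim, which I would prove on its own: for every $m\ge0$ and every ordered pair $a'\ge b'\ge0$,
\[
l_m(a',b')\ \ge\ \tfrac12\,a'+\tfrac12\,\EE\big[X_{(m)}\vee b'\big].
\]
Applying it with $a'=a\vee x$, $b'=\mathrm{med}[a,b,x]$ and $m=n-1$ gives exactly the needed inequality, because $\EE[X_{(n-1)}\vee b']=\beta(x)$ and the parameters remain ordered ($\mathrm{med}[a,b,x]\le a\vee x$). For $m=0$ the claim holds with equality, since $l_0(a',b')=\tfrac12(a'+b')$ and $X_{(0)}\vee b'=b'$.

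For the induction step I would unfold $l_m(a',b')=L\big(a',c'_m,d^-_m(a',b')\big)$ with $c'_m:=\EE[X_{(m)}\vee b']$ and split on the two branches of $L$. If $a'>c'_m$ then $l_m(a',b')=\tfrac12(a'+c'_m)$ and the claim holds with equality. If $a'\le c'_m$ then $l_m(a',b')=d^-_m(a',b')=\EE_X\big[l_{m-1}(a'\vee X,\mathrm{med}[a',b',X])\big]$, and applying the inductive hypothesis inside the expectation gives
\[
d^-_m(a',b')\ \ge\ \tfrac12\,\EE[a'\vee X]+\tfrac12\,\EE\big[X_{(m-1)}\vee\mathrm{med}[a',b',X]\big].
\]
Here I would use the identity $\mathrm{med}[a',b',X]=(X\wedge a')\vee b'$ (valid since $a'\ge b'$) and associativity of $\vee$ to write, with $W:=X_{(m-1)}\vee b'$, the second term as $\EE[(X\wedge a')\vee W]$ and to recognize $c'_m=\EE[X\vee W]$ after combining the two expectations over independent i.i.d.\ samples. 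The remaining inequality $\tfrac12\EE[X\vee a']+\tfrac12\EE[(X\wedge a')\vee W]\ge\tfrac12 a'+\tfrac12\EE[X\vee W]$ is what closes the induction.

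The main obstacle is precisely this last inequality, which I would settle via the contraction property of $t\mapsto t\vee w$. Writing $\EE[X\vee a']=a'+\EE[(X-a')^+]$, it is equivalent to $\EE[(X-a')^+]\ge\EE\big[(X\vee W)-((X\wedge a')\vee W)\big]$; and since $t\mapsto t\vee W$ is nondecreasing and $1$-Lipschitz, one has the pointwise bounds $0\le (X\vee W)-((X\wedge a')\vee W)\le X-(X\wedge a')=(X-a')^+$, which yield the result after taking expectations. This is a sharper, pointwise cousin of Lemma~\ref{lem:tech:ineq}: that lemma only delivers the bound $\ge a'$, whereas I need the stronger $\ge\tfrac12 a'+\tfrac12\EE[X\vee W]$, so I would prove the contraction bound directly rather than quote it. With the displayed claim established, the reduction in the first paragraph then completes the proof.
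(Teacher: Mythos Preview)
Your argument is mathematically sound, and the inductive claim $l_m(a',b')\ge\tfrac12\,a'+\tfrac12\,\EE[X_{(m)}\vee b']$ together with the contraction bound $(X\vee W)-((X\wedge a')\vee W)\le (X-a')^+$ are both correct and cleanly close the induction. However, the route is genuinely different from the paper's, and there is one structural point you must make explicit.

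The paper does not go through $l_{n-1}$ at all. It argues strategically: writing $d=\EE[f(X)]$ with $f(x)$ a SPE payoff of player~1 in $\Gamma_{n-1}^{FR}(a\vee x,\mathrm{med}[a,b,x])$, it lower-bounds $f(x)$ by the payoff of a concrete deviation for player~1 (bid immediately for $a\vee x$ if $a\vee x\ge\beta(x)$, otherwise wait until the last stage). This uses nothing about $E_{n-1}$ beyond the definition of Nash equilibrium, so Lemma~\ref{lem:ineq:dn} stands on its own and can then be fed into Proposition~\ref{prop:ineq} and the inductive proof of Theorem~\ref{thm:SPEPcharact} without any forward reference.

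Your route, by contrast, \emph{uses} Theorem~\ref{thm:SPEPcharact} at level $n-1$ to obtain $f(x)\ge l_{n-1}(a\vee x,\mathrm{med}[a,b,x])$. Since that theorem is itself proved via Proposition~\ref{prop:ineq}, which relies on the present lemma, invoking it here is circular under the paper's current organization. The circularity is not fatal---everything can be repackaged as a single induction on $m$ in which one simultaneously establishes (i) $E_m(\cdot,\cdot)$ lies on the diagonal with minimum $l_m$, and (ii) $l_m(a',b')\ge\tfrac12 a'+\tfrac12\EE[X_{(m)}\vee b']$---but your parenthetical ``available for $n-1$ samples'' only hints at this. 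You should state the joint induction scheme explicitly; otherwise the proof, as a drop-in replacement for the paper's, does not stand alone. What your approach buys in exchange is a sharper, purely analytic pointwise inequality on $l_m$ that avoids any delicate strategic bookkeeping.
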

\begin{proof}
By assumption, $d=\EE[f(X)]$ for some measurable function $f$ such that $f(x)$ is the expected payoff of player 1 in some SPE in $\Gamma_{n-1}^{FR}(a\vee x, med[a,b,x])$. We will obtain a lower bound for $f(x)$ by providing a lower bound for the payoff associated to a particular strategy in $\Gamma_{n-1}^{FR}(a\vee x, med[a,b,x])$. 

To this end, suppose that Player 1 plays the following strategy:
\begin{enumerate}
 %   \item Observe the realization of $X$, namely $x$.
    \item  If $x \vee a \geq \beta(x)$, bid for $x \vee a$.
    \item If $x \vee a < \beta(x)$, wait until the end and get at least the second best item.
\end{enumerate}
Define $\Omega_1:= \{ x \in [0,1] \,:\, x \vee a \geq \beta(x) \}$, $\Omega_2:= \{ x \in [0,1] \,:\, x \vee a < \beta(x) \}$. Since the second best item in $\Gamma_{n-1}^{FR}(a\vee x, med[a,b,x])$ has expected value $\beta(x)$, we deduce that the payoff of player $1$, independently of the strategy of player $2$, is at least $\frac{1}{2}(x\vee a + \beta(x))$ if $x\in \Omega_1$ and  $\beta(x)$ if $x\in \Omega_2$. We deduce that 
\[ d = \EE[ f(X)] \geq  \EE \left[ \frac{1}{2}(X\vee a + \beta(X))\indi_{X\in \Omega_1}+\beta(X)\indi_{X\in \Omega_2} \right] .\]
Note that the last term is at least $\EE[ \frac{1}{2}( X \vee a+ \beta(X))\indi_{X\in \Omega_2}]$ because $x \vee a < \beta(x)$ for $x\in \Omega_2$.
Therefore, we conclude that 
\begin{eqnarray*}
    d &\geq& \EE \left[ \frac{1}{2}(X\vee a + \beta(X))\indi_{X\in \Omega_1}+\frac{1}{2}(X\vee a + \beta(X))\indi_{X\in \Omega_2} \right] \\
    &=& \EE \left[\frac{1}{2}(X\vee a + \beta(X))\right] \\
\end{eqnarray*}
 and the result follows. 
\end{proof}

\begin{proof}[Proof of Proposition \ref{prop:ineq}]
Take $X$ with distribution $F$ independent of some i.i.d sequence $(X_1,...,X_{n-1})$ of variables distributed according to $F$, and $Y=X_{(n-1)} \vee b$. Then $\EE(X \vee Y)= c \geq a$, and applying Lemma~\ref{lem:tech:ineq} we have
\[
\frac{1}{2}\EE(X \vee a)+ \frac{1}{2} \EE((X \wedge a) \vee X_{(n-1)} \vee b)\geq a.
\]
On the other hand, by independence and since $med[a,b,x]=(x\vee a)\wedge b$, we have 
\[ \EE((X \wedge a) \vee X_{(n-1)} \vee b) = \EE[\beta(X)].\]
By Lemma~\ref{lem:ineq:dn} $d \geq \frac{1}{2}\EE(X \vee a)+ \frac{1}{2} \EE(\beta(X)$. Putting all together we obtain $d \geq a$ and the proof is completed. 
The case with the strict inequality is similar.
\end{proof}

The next result allows to reduce the analysis of $G_n(a,b;d,e)$ to a smaller matrix game.

\begin{lemma}\label{lem:dominance}
Let $a\geq b$ in $[0,1]$, $n$ a positive natural number, and $(d,e) \in \EE[ E_{n-1}(a\vee X,med[a,b,X])]$. The game $G_n(a,b;e,d)$ has the same Nash equilibrium payoffs as the game with matrix:
   \begin{table}[h]
 \centering
    \setlength{\extrarowheight}{7pt}
    \begin{tabular}{cc|c|c|c|}
      & \multicolumn{1}{c}{} & \multicolumn{2}{c}{Player $2$}\\
      & \multicolumn{1}{c}{} & \multicolumn{1}{c}{$a$}  &  \multicolumn{1}{c}{$\emptyset$} \\\cline{3-4}
      \multirow{2}*{Player $1$}  & $a$ & $\left(\frac{a+ \EE(X_{(n)} \vee b)}{2} , \frac{a+ \EE(X_{(n)} \vee b)}{2} \right)$ & $\left( a, \EE(X_{(n)} \vee b)\right)$ \\\cline{3-4}
    
       \multirow{2}*{}  & $\emptyset$ & $\left( \EE(X_{(n)} \vee b),a\right)$ &  $\left( d, e \right)$ \\\cline{3-4}
    \end{tabular}
    
    \caption{\label{tab:payoff:matrix:reduced}Reduced payoffs matrix for $G_n(a,b;d,e)$.}
 \end{table}
\end{lemma}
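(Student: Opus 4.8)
The plan is to show that the pure action $b$ is weakly dominated by $\emptyset$ for each player, so it may be discarded without altering the set of Nash equilibrium payoffs. The only facts I need are the three inequalities $\EE(X_{(n)}\vee b)\geq b$, $\EE(X_{(n)}\vee a)\geq a\geq b$, and $d,e\geq b$. The first two are immediate; for the third, since $(d,e)\in\EE[E_{n-1}(a\vee X,med[a,b,X])]$ we may write $d=\EE[f_1(X)]$ and $e=\EE[f_2(X)]$ with $(f_1(x),f_2(x))\in E_{n-1}(a\vee x,med[a,b,x])$. In $\Gamma_{n-1}^{FR}(a\vee x,med[a,b,x])$ the two initial items have values $a\vee x\geq med[a,b,x]\geq b$, and by never bidding until the last stage each player guarantees at least the value of the second highest present item, hence at least $med[a,b,x]$. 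As an equilibrium payoff cannot fall below a player's security level, $f_i(x)\geq med[a,b,x]\geq b$, and integrating yields $d,e\geq b$.

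Next I would verify, column by column in Table~\ref{tab:payoffs:general0}, that $\emptyset$ weakly dominates $b$ for player $1$. Against $a$ the two payoffs are $b$ versus $\EE(X_{(n)}\vee b)\geq b$; against $b$ they are $\tfrac{1}{2}\bigl(b+\EE(X_{(n)}\vee a)\bigr)$ versus $\EE(X_{(n)}\vee a)$, the latter being larger because $\EE(X_{(n)}\vee a)\geq b$; against $\emptyset$ they are $b$ versus $d\geq b$. The same holds for player $2$ using $e\geq b$. This immediately gives one inclusion: any Nash equilibrium of the reduced game of Table~\ref{tab:payoff:matrix:reduced} is a Nash equilibrium of $G_n(a,b;d,e)$ with the same payoff, since reintroducing the weakly dominated action $b$ can never create a profitable deviation.

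For the reverse inclusion I would split on whether $a>b$ or $a=b$. When $a>b$ I expect $b$ to be not merely weakly but \emph{strictly} dominated by a mixture $\lambda\,a+(1-\lambda)\,\emptyset$ for small $\lambda>0$: the strict inequality in each column follows from $a>b$ together with $\EE(X_{(n)}\vee a)\geq a>b$ and $\EE(X_{(n)}\vee b)\geq b$, and a single admissible $\lambda$ exists because each column imposes either no constraint or an open constraint on $\lambda$. A strictly dominated action is used in no equilibrium, so every equilibrium of $G_n(a,b;d,e)$ already lives in the reduced game and the payoff sets coincide. When $a=b$ one has the identity $\EE(X_{(n)}\vee a)=\EE(X_{(n)}\vee b)$, and I would argue that $b$ can enter an equilibrium only when the opponent plays $\emptyset$ with probability one and the continuation value equals $a$; in that configuration bidding for $b$ yields exactly the same payoff pair as bidding for $a$, so the equilibrium payoff is realized by a reduced-game equilibrium as well.

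The main obstacle is precisely this reverse inclusion. Because $b$ is only weakly dominated, one cannot invoke a generic ``elimination of dominated strategies preserves equilibrium payoffs'' principle, which is false for weak domination; the argument must instead exploit the indifference conditions forced at any equilibrium that charges $b$ with positive probability, combined with the structural identity $\EE(X_{(n)}\vee a)=\EE(X_{(n)}\vee b)$ in the boundary case $a=b$, to establish payoff-equivalence with an equilibrium of the $2\times 2$ game.
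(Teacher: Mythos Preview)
Your proposal is essentially correct and lands on the same core idea as the paper for the case $a>b$: both show that $b$ is \emph{strictly} dominated by a mixture of $a$ and $\emptyset$. The paper simply takes the mixture $\tfrac12 a+\tfrac12\emptyset$ and checks the three columns directly (note that $\lambda=\tfrac12$ always works, since column~$b$ reduces to $a>b$), so your ``small $\lambda>0$'' is an unnecessary detour. Your preliminary step---that $\emptyset$ alone weakly dominates $b$ and hence every reduced-game equilibrium survives in the full game---is correct and gives the easy inclusion cleanly, though the paper does not isolate it.

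The genuine difference is in the case $a=b$. The paper does \emph{not} argue via indifference conditions and payoff matching; instead it splits on whether $c:=\EE(X_{(n)}\vee a)$ exceeds $a$. If $c>a$, it invokes Proposition~\ref{prop:ineq} (proved separately) to obtain the \emph{strict} inequalities $d>a$ and $e>a$, so $\emptyset$ strictly dominates both $a$ and $b$ and the unique equilibrium is $(\emptyset,\emptyset)$. If $c=a$, the two bidding actions are literally payoff-equivalent and $b$ can be deleted. Your route avoids Proposition~\ref{prop:ineq} by using only the weak bound $d,e\geq b=a$ that you already established, and then arguing that any equilibrium in which $b$ is charged must have the opponent on $\emptyset$ with probability one and $d=a$ (respectively $e=a$), after which one replaces the $b$-mass by $a$-mass to land in the reduced game with identical payoffs. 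That argument can be made rigorous---the best-response inequality for $b$ versus $\emptyset$, together with $c>a$ and $d\ge a$, forces exactly the configuration you describe---but your proposal only sketches it. The paper's route is shorter here precisely because the strict inequality from Proposition~\ref{prop:ineq} kills $b$ outright; your route is more self-contained but needs the extra case analysis written out.
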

\begin{proof}
We first consider the case $a>b$. 

For both players, the strategy $b$ is strictly dominated by mixed strategy consisting on playing $a$ with probability $1/2$ and pass with probability $1/2$.
Due to the symmetry of the payoffs' matrix, it is enough to show that for one player (let us say player 1) the expected payoff if he plays $a$ with probability $1/2$ and passes with probability $1/2$ is strictly higher that the expected payoff he obtains if he plays $b$.
Note that because  $a > b$ we have that:
\begin{itemize}
    \item[i-] $\frac{1}{2} \frac{a+ \EE(X_{(n)} \vee b)}{2} + \frac{1}{2} \EE(X_{(n)} \vee b) = \frac{a}{4} + \frac{3}{4}  \EE(X_{(n)} \vee b) > b, $
    \item[ii-] $\frac{1}{2} a + \frac{1}{2}  \EE(X_{(n)} \vee a) > \frac{b+ \EE(X_{(n)} \vee a)}{2},$ and
    \item[iii-] $ \frac{1}{2} a + \frac{1}{2} d > \frac{1}{2}b+ \frac{1}{2}b=b ,$
    \end{itemize}
    where the last inequality follows because $d \ge b$. Indeed, in any SPE, player $1$ obtains at least the second largest item (otherwise he could deviate to the strategy which does not bid until stage $n$), which is not smaller than $b$ in $\Gamma_n^{FR}(a,b)$.
We conclude that $b$ is a  strictly dominated strategy for both players (due to the symmetry of the game), and the result follows.

If $a=b$, we consider two subcases. If $\EE(X_{(n)} \vee a)>a$, then Proposition \ref{prop:ineq} implies that $d>a$ and $e>a$, so that the strategies $a$ and $b$ are strictly dominated by $\emptyset$. There is a unique Nash equilibrium$(\emptyset,\emptyset)$ with payoff $(d,e)$. Since the same holds for the matrix game given in table  \ref{tab:payoff:matrix:reduced}, the result follows.

If $\EE(X_{(n)} \vee a)=a$, the actions $a$ and $b$ are equivalent in the sense that they induce the same payoffs. Eliminating $b$ leads to a matrix game with the same Nash equilibrium payoffs. 
\end{proof}

Denoting $c=\EE(X_{(n)} \vee b)$, the payoff matrix introduced in Lemma \ref{lem:dominance} has the particular form exposed in Table~\ref{tab:payoff:cn}. Thus, it is enough to compute the NE of this matrix game where $(d,e)$ are parameters which correspond to the expected continuation equilibrium payoffs for players.

 \begin{table}[h]
 \centering
    \setlength{\extrarowheight}{4pt}
    \begin{tabular}{cc|c|c|c|}
      & \multicolumn{1}{c}{} & \multicolumn{2}{c}{Player $2$}\\
      & \multicolumn{1}{c}{} & \multicolumn{1}{c}{$a$}  & \multicolumn{1}{c}{$\emptyset$} \\\cline{3-4}
      \multirow{2}*{Player $1$}  & $a$ & $\left(\frac{1}{2}(a+c),\frac{1}{2}(a+c) \right)$ &  $\left( a, c\right)$ \\\cline{3-4}
    
       \multirow{2}*{}  & $\emptyset$ & $\left( c,a\right)$  &  $\left( d, e\right)$ \\\cline{3-4}
    \end{tabular}
    \caption{\label{tab:payoff:cn} General form of the reduced payoffs matrix for $G_n(a,b;d,e)$.}
  \end{table}

We are now ready to prove Theorem~\ref{thm:SPEPcharact}. 

\begin{proof}[Proof of Theorem~\ref{thm:SPEPcharact}]

Let us first analyze the game $G_n(a,b;d,d)$ with $(d,d)\in \EE[ E_{n-1}(a\vee X,med[a,b,X])]$, i.e. the case of continuation payoffs which belong to the diagonal in $\RR^2$, given in Table~\ref{tab:payoff:cn:final} below where $c=\EE(X_{(n)} \vee b)$. 

 \begin{table}[h]
 \centering
    \setlength{\extrarowheight}{4pt}
    \begin{tabular}{cc|c|c|c|}
      & \multicolumn{1}{c}{} & \multicolumn{2}{c}{Player $2$}\\
      & \multicolumn{1}{c}{} & \multicolumn{1}{c}{$a$}  & \multicolumn{1}{c}{$\emptyset$} \\\cline{3-4}
      \multirow{2}*{Player $1$}  & $a$ & $\left(\frac{1}{2}(a+c),\frac{1}{2}(a+c) \right)$ &  $\left( a, c\right)$ \\\cline{3-4}
    
       \multirow{2}*{}  & $\emptyset$ & $\left( c,a\right)$  &  $\left( d, d\right)$ \\\cline{3-4}
    \end{tabular}
    \caption{\label{tab:payoff:cn:final} Payoffs matrix for $G_n(a,b;d,d)$.}
  \end{table}

Let us compute the mixed Nash equilibria of this game. Using Proposition \ref{prop:ineq}, $c>a \Rightarrow d>a$  and $c\geq a \Rightarrow d\geq a$. We have that:
\begin{itemize}
\item[(a)] If $a> c \vee d$, then $(a,a)$ is the unique NE with payoff $\left(\frac{1}{2}(a+c),\frac{1}{2}(a+c)\right)$.
\item[(b)] If $d>a>c$, there are two pure NE $(a,a)$ and $(\emptyset, \emptyset)$ and a symmetric mixed equilibrium in which both agents play $a$ with probability $\frac{2(d-a)}{2d-a-c} $ and pass with probability $\frac{a-c}{2d-a-c}.$ Furthermore, the equilibrium payoffs are: 
\[\left(\frac{1}{2}(a+c), \frac{1}{2}(a+c)\right) , (d,d)  \text{ and } \left( \frac{dc-2ac+ad}{2d-a-c}, \frac{dc-2ac+ad}{2d-a-c}\right), \text{ respectively.}\]
\item[(c)] If $a <c$, then $(\emptyset, \emptyset)$ is the unique NE, and $(d,d)$ is the expected payoff. 
\item[(d)] If $d=a>c$ or $d>a=c$, there are two pure NE $(a,a)$ and $(\emptyset, \emptyset)$ with payoffs: 
\[\left(\frac{1}{2}(a+c), \frac{1}{2}(a+c)\right)  \text{ and } (d,d) .\]
\item[(e)] If $d=a=c$, then any profile is a NE with payoff $(d,d)$.
\end{itemize}
  
From the analysis above we conclude that the set of NE payoffs of the game is contained in the diagonal. 

We now prove by induction on $n$ that $E_n(a,b)$ is also a subset of the diagonal, that is:
\[   E_n(a,b) \subset \{ (u,u) : u \in \RR_+\}. \]
Ar first, $E_0(a,b)=\{ ( \frac{a+b}{2}, \frac{a+b}{2})\}$, so the statement is correct for $n=0$.

Let us assume that the statement is correct for $E_n(a,b)$ for all pairs $(a,b)$.
Then, it implies that $\EE[ E_n( a\vee X, med[a,b,X])]$ is also a subset of the diagonal. Therefore, by point 3) of Lemma \ref{lemma:propertiesSPE} and the above analysis, we deduce that $E_{n+1}(a,b)$ is a subset of the diagonal. 

The first statement of the theorem is proved. 
  
Let us now show that $\min \text{P}E_n(a,b)=l_n(a,b)$ and  $\max \text{P}E_n(a,b)=h_n(a,b)$, where $\text{P}E_n(a,b)$ is the projection of $E_n(a,b)$ to its first coordinate in $\RR$.
  
Note that from the analysis done above, we have that if $a>c \vee d$ or $a < c$ or $a=c=d$, there is only one NE payoff but in the other cases we have multiple equilibrium payoffs. When playing a SPE, the expected payoff of the players when both pass depends on which equilibrium is played in the following stages of the game. It is known that, in general, we cannot assume that if ``the worst'' or ``the best'' equilibrium is played at each stage, that results in the worst or best equilibrium of the game. However, below we show that this is indeed true for the game we are considering and then the result will follow just computing the expected payoff corresponding to the best and worst equilibrium. 

To this end, it is enough to prove that the expected payoff of a player is increasing in $d$. 
  
Define $\Omega=\{(a,c,d) \in \RR^3 : ( c>a \Rightarrow d>a) \text{ and } ( c \geq a \Rightarrow d \geq a)   \} $ and consider the multivalued function $\psi: \Omega \rightrightarrows \RR$ defined by
    
  \[   
\psi(a,c,d) = 
     \begin{cases}
       d & \text{if } c>a \text{ or } a=c=d\\
      \left\{ \frac{a+c}{2},d\right\} &\text{if } c<a<d \text{ or }  c=a<d\text{ or } c<a=d \\
      \frac{a+c}{2} & \text{if } c\vee d < a.
      
     \end{cases}
\]
Note that if $d>a>c$, then  
\[ 
d>\frac{dc-2ac+ad}{2d-a-c}> \frac{1}{2} (a+c),
 \]
and thus  $d$ and $\frac{1}{2} (a+c)$ are respectively the "best" and the "worst" equilibrium payoffs for player $1$ in $G_n(a,b;d,e)$. Therefore, $\psi(a,c,d)$ represents the "best" and "worst"  NE payoffs for player $1$ in the game represented by Table~\ref{tab:payoff:cn:final}.

We say that the multivalued function is non-decreasing in $d$ if for each $a,c,d_1,d_2$ such that $d_1 < d_2$ and $(a,c,d_i) \in \Omega$ for $i=1,2$, holds that $\min \psi (a,c,d_1) \leq \min \psi(a,c,d_2)$ and $\max \psi(a,c,d_1) \leq \max \psi(a,c,d_2)$. Let us see that $\psi$ is non-decreasing in $d$. To show that, fix $a$, $c$ and take $d_1,d_2$ such that $d_1 < d_2$ and $(a,c,d_1) \in \Omega$, $(a,c,d_2) \in \Omega$. We have the following cases:
\begin{itemize}
    \item[(a)] If $c>a$, then $\psi(a,c,d_i)=\{d_i\}$ for $i=1,2$ and the result is obvious.  
    \item[(b)] If $c<a \leq d_1$, then $c<a<d_2$ and $\frac{a+c}{2} < d_1<d_2$. Thus, $\min \psi(a,c,d_1)=\frac{a+c}{2}= \min \psi(a,c,d_2)$ and $\max \psi(a,c,d_1)=d_1 \leq d_2= \max \psi(a,c,d_2).$
    \item[(c)] If $a>c \vee d_2$, then $\psi(a,c,d_1)=\{\frac{a+c}{2}\}=\psi(a,c,d_2)$ and the result is obvious.
    \item[(d)] If  $c \vee d_1 < a \leq  d_2$, then $\psi(a,c,d_1)=\{\frac{a+c}{2}\}$ and  $\psi(a,c,d_2)=\left\{ \frac{a+c}{2},d_2\right\}$, obtaining $\min\psi(a,c,d_1)=\max\psi(a,c,d_1) \leq \min \psi(a,c,d_2) \leq \max \psi(a,c,d_2)$.
    \item[(e)] If $a=c \leq d_1$, then $\psi(a,c,d_1)=\{a\}$ or $\{a,d_1\}$ and $\psi(a,c,d_2)=\{a,d_2\}$, and the result follows.
\end{itemize}
Thus, we conclude that $\psi$ is non-decreasing in $d$.
The monotonicity property we proved for $\psi$ means that playing a ``better'' equilibrium in the continuation game gives a ``better'' equilibrium for $\Gamma_n^{FR}(a,b)$ and playing a ``worse'' equilibrium one gives a ``worse'' equilibrium for $\Gamma_n^{FR}(a,b)$.

Precisely, define $L: \Omega \rightarrow \RR $ and $H:\Omega \rightarrow \RR$ by:
\[   
L(a,c,d) = 
     \begin{cases}
       d & \text{ if } a < c ,\\
      \frac{1}{2}(a+c) & \text{ if } a \geq c\\ 
     \end{cases}
 \; \; {\it and}\;\;    
H(a,c,d) = 
     \begin{cases}
      \frac{1}{2}(a+c) & \text{if } a > c \vee d,\\
      d  &\text{if }  a \leq c \vee d\\ 
     \end{cases}
\]
$L(x,y,z)$ and $H(x,y,z)$ represent the lowest and highest equilibrium expected payoffs of player $1$ in $G_n(a,b;d,d)$. 
As we aforementioned, we are interested in computing the expected payoff corresponding to the best and the worst NE of the game $\Gamma_n^{FR}(a,b)$, which correspond to the extremes values of the set $\E_n(a,b)$ denoted $l_n(a,b)$ and $h_n(a,b)$. 
Using the monotony of $\psi$ (and thus of $L$ and $H$) with respect to $d$, $l_n(a,b)$ is the lowest equilibrium payoff of $G_n(a,b;d,d)$, i.e. $L(a,\EE[X_{(n)}\vee b],d)$, when $d$ is the expected continuation payoff obtained by playing the lowest equilibrium in every continuation game, that is:  
\[ d=\min \EE[ E_{n-1}(a\vee X,med[a,b,X])] = \EE_X\left[ l_{n-1}(a \vee X,\text{med}[a,b,X])\right].\] 
where $l_0(a,b)=\frac{a+b}{2}$. This completes the proof for $l_n$ and a the same analysis applies for $h_n(a,b)$.
\end{proof}

\begin{remark}
Note that we did not use any assumption on the distribution $F$ except that it is supported by $[0,1]$, therefore Theorem~\ref{thm:SPEPcharact} holds when considering discrete distributions.
\end{remark}
Using Theorem~\ref{thm:SPEPcharact}, we now prove Theorem~\ref{thm:caract}.

 \begin{proof}[Proof of Theorem~\ref{thm:caract}]
 By Theorem~\ref{thm:SPEPcharact}, we know that $E^{FR}_n \subset \{ (u,u) : l_n \leq u \leq h_n\}$.
 
Furthermore, 
 $E^{FR}_n= \left\{ \int_{[0,1]} f(x) \mathrm{d}F(x) : f(x) \in \E_{n-1}(x,0) \right\},$ and then $(l_n,l_n)$ and $(h_n,h_n)$ belongs to $E^{FR}_n$ (we obtain them just taking $f(x)=(l_{n-1}(x,0),l_{n-1}(x,0))$ and $f(x)=(h_{n-1}(x,0),h_{n-1}(x,0))$, respectively).
To obtain the result is then enough to prove that the set $E_n$ is convex. To this end, let us define the function $\mu: [0, 1] \rightarrow \RR$ by 
\[
\mu(\alpha)=\int_{[0\alpha)} l_{n-1}(x,0) \mathrm{d}F(x)+\int_{[\alpha,1]} h_{n-1}(x,0) \mathrm{d}F(x).
\]
Notice that $\mu$ is continuous since $F$ is atomless, $\mu(0)=h_{n}$ and $\mu(1)=l_{n}$, then using the intermediate value theorem all values between $l_n$ and $h_n$ are taken by $\mu$. But $(\mu(\alpha),\mu(\alpha))$ belongs to $E_n$ for every $\alpha$ by choosing $f(x)$ equal to $(l_{n-1}(x,0),l_{n-1}(x,0))$ on $[0,\alpha$ and to $(h_{n-1}(x,0),h_{n-1}(x,0))$ on $[\alpha,1]$. Therefore $E_n$ is convex.
 \end{proof}

\paragraph{No recall case.} 
We pass now to the no recall case and the goal is to prove Theorem~\ref{thm:norecall}. 

Recall that the game $\Gamma_n^{NR}(a)$ is defined as $\Gamma_n^{NR}$, but there are $n+1$ stages: at the first stage, the players can bid for an item with value $a$, and the items for the next $n$ stages are randomly drawn as in $\Gamma_n^{NR}$. 

At first, in order to study the SPEP of $\Gamma_n^{NR}$, note that this game has the same SPEP as the game which is identical to  $\Gamma_n^{NR}$ but terminates at the first time a player gets an item (or after stage $n$). 
If the game terminates because one player did get an item, the payoff of the other player is the value of the one-player continuation problem. As for the full recall case, it is easy to check that if after a player gets an item, the SPEP in the continuation game are unique and correspond to the payoffs of this auxiliary game. 
In the following, we assume that $\Gamma_n^{NR}$ is the auxiliary game. In this game, an history is just a sequence of values $(X_1,...,X_t)$  and the strategy of player $i$ at time $t<n$ is a measurable map $\sigma_{i,t}$ from histories into the probabilities over $\{\emptyset\} \cup \{1,...,t\}$. 
We use the same identification for the game $\Gamma^{NR}_n(a)$.   

Given an history $h=(x_1,...,x_t)$ of length $t$, let  $\Gamma^{NR}_n(h)$ denote the subgame of $\Gamma^{NR}_{n+t}$ starting at stage $t$ after observing $h$ in which the two players are still present and $E_n^{NR}(h)$ the set of SPEP of this game.  

In the following, variables $(X,X_1,X_2,....)$ will denote independent variables with distribution $F$, and $X_{(t)}=\max(X_1,...,X_t)$.

Without loss of generality, we assume that $\PP(X>0)>0$ (otherwise the set of equilibrium payoffs is reduced to $\{(0,0\}$).  

As for the full recall case, we state without proofs some properties of SPEP which follow easily from usual arguments in dynamic game theory.

\begin{lemma}\label{lemma:propertiesSPE2}
The following properties hold:
\begin{enumerate}
\item $E_0^{NR}=\{(0,0)\}$ and for $n\geq 1$, $E_n^{NR}=E^{NR}_n(0)=\EE[E_{n-1}^{NR}(X)]$.
\item If $h=(x_1,...,x_t)$ is an history of length $t \geq 1$, then $E_n^{NR}(h)= E_n^{NR}(a)$ where $a$ denotes the largest item in $h$.
\item $(x,y) \in E^{NR}_n(a)$ if and only if there exists $(d,e) \in E^{NR}_{n}$ such that $(x,y)$ is a mixed Nash equilibrium payoff of the finite game $G_n(a;d,e)$ with payoff matrix:
 \begin{table}[h]
 \centering
    \setlength{\extrarowheight}{4pt}
    \begin{tabular}{cc|c|c|c|}
      & \multicolumn{1}{c}{} & \multicolumn{2}{c}{Player $2$}\\
      & \multicolumn{1}{c}{} & \multicolumn{1}{c}{$a$}  & \multicolumn{1}{c}{$\emptyset$} \\\cline{3-4}
      \multirow{2}*{Player $1$}  & $a$ & $\left(\frac{1}{2}(a+c_n),\frac{1}{2}(a+c_n) \right)$ &  $\left( a, c_n\right)$ \\\cline{3-4}
    
       \multirow{2}*{}  & $\emptyset$ & $\left( c_n,a\right)$  &  $\left( d, e\right)$ \\\cline{3-4}
    \end{tabular}
    \caption{\label{tab:payoff_no_recall0};Payoffs matrix of $G_n(a;d,e)$.}
  \end{table}
where $c_n$ denotes the value of the decision problem with $n$ stages in a standard prophet setting.
\item Similarly, $(\sigma_1,\sigma_2)$ is a pair of first-stage strategies of some SPE in $\Gamma_n^{NR}(a)$  with payoff $(x,y)$ if and only if there exists $(d,e) \in E^{NR}_{n}$ such that $(\sigma_1,\sigma_2)$ is a mixed Nash equilibrium of the matrix game $G_n(a;d,e)$ with payoff $(x,y)$.
\end{enumerate}
\end{lemma}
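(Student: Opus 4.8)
The plan is a backward induction on the number $n$ of remaining draws. Since $\Gamma_n^{NR}$ has finite horizon, the one-shot deviation principle characterises its subgame-perfect equilibria exactly, and the four properties then come out as the standard consequences of the recursive structure of SPE, exactly as in the full recall case treated just after Lemma~\ref{lemma:propertiesSPE}. For point~1 the base case $E_0^{NR}=\{(0,0)\}$ is immediate, as with no arrival left both players get $0$. For the recursion I would condition on the first drawn value: once $X_1=X$ is revealed and both players are still present, the continuation is by definition $\Gamma_{n-1}^{NR}(X)$ (the value $X$ now on the table, followed by $n-1$ fresh draws), so averaging the continuation payoffs over $X$ gives $E_n^{NR}=\EE[E_{n-1}^{NR}(X)]$. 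The equality $E_n^{NR}=E_n^{NR}(0)$ then follows from a weak-dominance argument identical in spirit to the one used for the first point of Lemma~\ref{lemma:propertiesSPE}: bidding for the period-zero value $0$ yields payoff $0$, the same as never bidding, so it is weakly dominated by passing and is used in no equilibrium; deleting that worthless period turns $\Gamma_n^{NR}(0)$ into $\Gamma_n^{NR}$.

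Point~2 rests on the defining feature of the no recall case, namely that every value that is passed becomes permanently unavailable. Hence, along a history $h=(x_1,\dots,x_t)$ in which both players are still present, the only payoff-relevant data are the number $n$ of draws still to come and the value $a$ that is currently available for selection; the earlier draws cannot influence any continuation payoff. The continuation game is therefore strategically equivalent to $\Gamma_n^{NR}(a)$, which is exactly the assertion $E_n^{NR}(h)=E_n^{NR}(a)$.

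For points~3 and~4 I would apply the one-shot deviation principle to the first period of $\Gamma_n^{NR}(a)$. If both players pass, play enters $\Gamma_n^{NR}$ with both present, whose SPE realise precisely the payoffs $(d,e)\in E_n^{NR}$; if a player wins $a$, his opponent is left alone for $n$ draws and obtains the one-player value $c_n$. Evaluating the payoff of each of the four first-period action profiles then reproduces the bimatrix $G_n(a;d,e)$ of Table~\ref{tab:payoff_no_recall0}. A first-period profile is the opening move of an SPE of $\Gamma_n^{NR}(a)$ with payoff $(x,y)$ if and only if it is a Nash equilibrium of $G_n(a;d,e)$ for some $(d,e)\in E_n^{NR}$ attained by a continuation SPE; the converse direction is obtained by gluing such a first-period equilibrium to a continuation SPE realising $(d,e)$. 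This is the content of point~3 at the level of payoffs and of point~4 at the level of strategies, the two arguments being identical.

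The step I expect to be the only genuine difficulty is measurability. To make sense of the averaged set $\EE[E_{n-1}^{NR}(X)]$, and to assemble an honest SPE out of the period-zero equilibria, one needs the correspondence $x\mapsto E_{n-1}^{NR}(x)$ to be measurable with compact values, together with a measurable selection of Nash equilibria of $G_n(a;d,e)$ as $(a,d,e)$ vary. This is exactly the measurable selection argument alluded to in the statement: it is what guarantees both that the averaged set is well defined and that every $(d,e)\in E_n^{NR}$ can in fact be realised by a measurable continuation SPE. The remaining ingredients, namely existence of a Nash equilibrium in the finite stage game (Nash's theorem) and the backward-induction bookkeeping, are routine.
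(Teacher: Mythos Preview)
Your proposal is exactly what the paper has in mind: the paper does not prove this lemma but states that the properties ``can be proven by induction on $n$ using the recursive structure of SPE, the one-shot deviation principle and measurable selection arguments'', which is precisely your sketch.

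Two small remarks. First, in point~2 the lemma's wording ``$a$ denotes the largest item in $h$'' is almost certainly a copy-paste slip from the full-recall Lemma~\ref{lemma:propertiesSPE}; in the no-recall game only the current value $x_t$ is on the table, and your argument correctly identifies the payoff-relevant datum as the value currently available, not the historical maximum. Second, for $E_n^{NR}=E_n^{NR}(0)$ the domination is in fact strict under the standing assumption $\PP(X>0)>0$: in the stage game $G_n(0;d,e)$ one has $c_n>c_n/2$ and (inductively) $d,e>0$, so passing strictly dominates bidding for the value $0$. This matters, because weak dominance alone need not preserve the set of equilibrium payoffs.
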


 Before concluding the section with the proof of Theorem~\ref{thm:norecall}, we show Proposition~\ref{prop:convexity}.
%Before proving the theorem, we present a preliminary result, stating that the set of SPE payoffs for the no recall case is symmetric with respect to the diagonal, and if we assume that $F$ is atomless, it is also convex and compact. 

%\begin{proposition}\label{prop:convexity}
%For each natural number $n$, the set of the SPE payoffs $E^{NR}_n$ is is symmetric with respect to the diagonal. If $F$ is atomless, $E^{NR}_n$ is convex compact. 
%\end{proposition}

\begin{proof}[Proof of Proposition~\ref{prop:convexity}.]
The fact that $E^{NR}_n$ is is symmetric with respect to the diagonal is a direct consequence of the fact the the game is symmetric. 

To prove convexity, we can use the properties of the expectation of a set-valued map, also called the Aumann integral. Theorem 8.6.3 in \cite{AF09} %Aubin-Frankowska Set-valued Analysis 
implies that the expectation of a set-valued map with non-empty closed values and compact graph from $[0,1]$ to $\RR^2$ with respect to an atomless measure on $[0,1]$ is a non-empty convex compact set.
%Fabien: the cited theorem requires that the probability space is complete (need to add null sets to the Borel algebra) and the definition of the integral is with measurable selections which belong to the set-valued map only almost surely. It coincides with our definition since there exist Borel-measurable selections. Don't think we should mention this, and I didn't found another reference without these assumptions.

At first, $E^{NR}_0=\{(0,0)\}$ is non-empty compact convex. Then, $a \rightarrow E^{NR}_0(a)$ is a set-valued map with non-empty closed values and compact graph using the classical properties of Nash equilibrium payoffs of matrix games. It follows that $E^{NR}_1$ is non-empty compact convex. Let us assume that $E^{NR}_n$ is non-empty compact convex. Let $NEPG_n(a;d,e)$ denote the set of Nash equilibrium payoffs of $G_n(a;d,e)$, then $(a,d,e) \in [0,1]\times E^{NR}_n \rightarrow NEPG_n(a;d,e)$ is a set-valued map with non-empty closed values and a compact graph, and thus $a \rightarrow E^{NR}_n(a)$ is a set-valued map with non-empty closed values and a compact graph. We conclude that $E^{NR}_{n+1}$ is a  non-empty compact convex set.    
\end{proof}

Below, we prove the theorem.

\begin{proof}[Proof of Theorem~\ref{thm:norecall}.]
Given $a \in [0,1]$ and $n$ a natural number, we consider the game $\Gamma_n^{NR}(a)$ defined in Section~\ref{sec:payoffs}. We first analyze the game $G_n(a;d,e)$ described in Table~\ref{tab:payoff_no_recall0} with $(d,e)\in E^{NR}_n$.

One first remark to do regarding this game, is that what a player gets if he stays alone in the game, that is $c_n$, is at least what he gets if both stay. In other words, $c_n\ge d $ and $c_n \ge e$.
  
%   \begin{remark}
%     $c_n \geq d_n$ and $c_n \geq e_n$.
%   \end{remark}

Now, we use Table~\ref{tab:payoff_no_recall0} and the remark above to study the NE of the game $G_n(a;d,e)$, depending on the relation between the parameters $a,c_n,d,e$. It is easy to check that the following holds:
\begin{itemize}
     \item[(a)] If $a>c_n$, $(a,a)$ is the unique NE with payoff $\left( \frac{1}{2}(a+c_n),\frac{1}{2}(a+c_n)\right)$.
     \item[(b)] If $a=c_n$, there is a unique NE payoff $(c_n,c_n)$.
     \item[(c)] If $c_n>a>d \vee e$ there are two pure NE $(a,\emptyset)$ and $(\emptyset,a)$, and a mixed equilibrium; with payoffs $\left( a,c_n\right), (c_n,a)$ and $(\gamma_n^1,\gamma_n^2)$, respectively, with $\gamma_n^1=\frac{2ac_n-d(c_n+a)}{c_n+a-2d}, \gamma_n^2=\frac{2ac_n-e(c_n+a)}{c_n+a-2e}$. 
     \item[(d)] If $c_n>a=d>e$, the NE payoffs are $(c_n,a)$ and $(a,\lambda)$ with $\lambda \in [\gamma_n^2,c_n]$.
     \item[(e)] If $c_n>a=e>d$, the NE payoffs are $(a,c_n)$ and $(\lambda,a)$ with $\lambda \in [\gamma_n^1,c_n]$.
     \item[(f)] If $c_n> a=d=e$, the NE payoffs are $(a,\lambda)$ and $(\lambda,a)$ with $\lambda \in [a,c_n]$.
      \item[(g)] If $d \wedge e>a$, $(\emptyset,\emptyset)$ is the unique NE with payoff $(d,e)$.
      \item[(h)] If $d>a>e$, $(\emptyset,a)$ is the unique NE with payoff $(c_n,a)$.
     \item[(i)] If $e>a>d$, $(a, \emptyset)$ is the unique NE with payoff $(a,c_n)$.
     \item[(j)] If $d> a=e$, the NE payoffs are $(d,e)$ and $(\lambda,a)$ with $ \lambda \in [d,c_n]$.
     \item[(k)] If $e>a=d$, the Ne payoffs are $(d,e)$ and $(a,\lambda)$ with $\lambda \in [e,c_n]$.
 \end{itemize}

% Then, the set of SNE payoffs is given by:
% \[
% \E_{n+1}=\left\{ (x,y): (x,y)= \int_{c_n}^{\omega_1} \left( \frac{1}{2}(a+c_n),\frac{1}{2}(a+c_n)\right)\mathrm{d}a +  \int_{}\right\}
% \]

In particular, we are interested on computing the sum of the expected payoff corresponding to the best and the worst SPEs that is, $\max\{x+y: (x,y) \in E^{NR}_{n+1}\}$ and  $\min\{x+y: (x,y) \in E^{NR}_{n+1}\}$, respectively; as well as the worst payoff a player can get at equilibrium, that is $ \min \{ \min \{ x,y \} : (x,y) \in E^{NR}_n\}$.

 %Note that if $(x,y) \in \E_{n+1},$ then 
 %\begin{eqnarray*}
 %x+y &\leq& \int_{\omega_0}^{d_n \wedge e_n} (d_n+e_n) \mathrm{d}F(a)+ \int_{d_n \wedge e_n}^{\omega_1} (c_n+a) \mathrm{d}F(a)\\
 %&=& (d_n+e_n)(F(d_n\wedge e_n)-F(\omega_0))+ (F(\omega_1)-F(d_n\wedge e_n))c_n \\
 %&+&(\omega_1 F(\omega_1)-(d_n\wedge e_n) F(d_n\wedge e_n))-\int_{d_n \wedge e_n}^{\omega_1} F(a) \mathrm{d}a:=z_n,
 %\end{eqnarray*}
 %and $(\frac{z_n}{2},\frac{z_n}{2}) \in \E_{n+1}.$  Therefore,
 Due to Proposition~\ref{prop:convexity} we have that
 \[
 \max\{x+y: (x,y) \in E^{NR}_{n+1}\}=2 \max\{x: (x,x) \in  E^{NR}_{n+1}\},
 \]
 and
 \[
 \min\{x+y: (x,y) \in E^{NR}_{n+1}\}=2 \min\{x: (x,x) \in  E^{NR}_{n+1}\}.
 \]
% \[ \min\{\min\{x,y\}: (x,y) \in E^{NR}_{n+1}\}=\min\{x: (x,y) \in  E^{NR}_{n+1}\}.\]
 Therefore, defining $\alpha_n:= \min \{ x : (x,x) \in E^{NR}_n\}$, $\beta_n:= \max \{x : (x,x) \in E^{NR}_n \}$ and $\alpha'_n= \min \{ \min\{x,y\} : (x,y) \in E^{NR}_n\}$, it follows that it is enough to compute $2 \alpha_{n+1}$, $2\beta_{n+1}$ and $\alpha'_{n+1}$.
 
 To prove the part $a)$ of the theorem, we compute a recursive formula for $\alpha_{n+1}'$. Define $\alpha'_{n}(a)$ as the minimal NE payoff of player $1$ in the family of game $G_n(a;d,e)$ when $(d,e)$ ranges through $E^{NR}_n(a)$. It is clear from the previous characterization that $\alpha'_{n+1}=\int_{[0,1]} \alpha'_{n}(a)\mathrm{d}F(a)$.
Using our previous analysis, notice that:
\[   
\alpha_n'(a) = 
     \begin{cases}
       \alpha'_n & \text{ if } a < \alpha'_n ,\\
      a & \text{ if } \alpha'_n <a< c_n,\\ 
      \frac{a+c_n}{2} & \text{ if } a>c_n.
     \end{cases}
\]
%\begin{itemize}
%    \item[i-] if $\alpha'_n>a$, then we have that $\alpha'_{n}(a)=\alpha'_n$;
%    \item[ii-] if $c_n>a>\alpha'_n$, it holds that $\alpha'_{n}(a)=a$; and 
  %  \item[iii-] if $a>c_n,$ we obtain that $\alpha'_{n}(a)=\frac{a+c_n}{2}.$
%\end{itemize}
Since $F$ is atomless, it is sufficient to conclude that for $n\ge 1$
\begin{eqnarray}\label{eq:alpha'} \nonumber
\alpha'_{n+1}&=&\int_{0}^{\alpha'_n} \alpha'_n \mathrm{d}F(a)+\int_{\alpha'_n}^{c_n}a \mathrm{d}F(a) + \int_{c_n}^{1} \frac{a+c_n}{2} \mathrm{d}F(a) \\ \nonumber
&=&{\alpha'_n} (F({\alpha'_n})-F(0))+(c_n F(c_n)-\alpha'_n  F(\alpha'_n))-\int_{\alpha'_n}^{c_n} F(a) \mathrm{d}a \\ \nonumber
&+& \frac{c_n}{2}(F(1)-F(c_n))+\frac{1}{2}(1  F(1)-c_n F(c_n))-\frac{1}{2}\int_{c_n}^{1} F(a) \mathrm{d}a. \\
&=&\frac{(c_n+1)}{2}-\int_{\alpha'_n}^{c_n} F(a) \mathrm{d}a -\frac{1}{2}\int_{c_n}^{1} F(a) \mathrm{d}a,
%\frac{1}{2}{\alpha'_n}^2-\omega_0 \alpha'_n+\frac{1}{2}c_n \omega_1-\frac{1}{4}c_n^2+\frac{1}{4}\omega_1^2.
\end{eqnarray}
and the first statement of Theorem~\ref{thm:norecall} is proved. 

Next,  we compute $2\beta_{n+1}$ as a function of $\beta_n, \alpha'_n$ and $c_n$.
Define $2\beta_n(a)$ as the maximal sum of payoffs in any NE on the family of game $G_n(a;d,e)$ when $(d,e)$ ranges through $E^{NR}_n(a)$, so that $\beta_{n+1}=\int_{[0,1]} \beta_{n}(a)\mathrm{d}F(a)$.  We have:
\[   
2\beta(a) = 
     \begin{cases}
       2\beta_n & \text{ if } a < \alpha'_n ,\\
      \max\{a+c_n,2\beta_n\} & \text{ if } \alpha'_n <a< \beta_n,\\ 
      a+c_n & \text{ if } a>\beta_n.
     \end{cases}
\]
% \begin{itemize}
%     \item[i-] If $a< \alpha'_n$, then $2 \beta_n(a)=2 \beta_n$;
%     \item[ii-] if $a> \beta_n$, then $2\beta_n(a)=a+c_n$, and
 %    \item[iii-] if $\alpha'_n <a< \beta_n$, then $2\beta_n(a)=\max \{ a+c_n, 2 \beta_n\}$.
 %\end{itemize}
Since $F$ is atomless, it is sufficient to conclude that for $n \ge 1$
 
 \begin{eqnarray}\label{eq:beta}
 2\beta_{n+1}&=& \int_{0}^{\alpha'_n} 2 \beta_n \mathrm{d}F(a) + \int_{\alpha'_n}^{\beta_n}  \max\{a+c_n, 2\beta_n \} \mathrm{d}F(a) +\int_{\beta_n}^{1} (a+c_n) \mathrm{d}F(a),
 \end{eqnarray}
and the second statement of the theorem is obtained. 

It remains to compute $2 \alpha_{n+1}.$
As before, define $2\alpha_n(a)$ as the minimal sum of payoffs in any NE on the family of game $G_n(a;d,e)$ when $(d,e)$ ranges through $E^{NR}_n(a)$, so that $\alpha_{n+1}=\int_{[0,1]} \alpha_{n}(a)\mathrm{d}F(a)$. Notice that if  $c_n>a>d \vee e$, the mixed equilibrium gives a worse sum of payoffs than the pure.
We have:
\[   
2\alpha_n(a) = 
     \begin{cases}
       2\alpha_n & \text{ if } a < \alpha'_n ,\\
   \min\{2\alpha_n, a+c_n\} & \text{ if }  \alpha'_n<a<\alpha_n ,\\
      2a & \text{ if } \alpha_n <a< \beta_n,\\ 
      2 \frac{2ac_n-\beta_n(a+c_n)}{c_n+a-2 \beta_n} & \text{ if } \beta_n <a< c_n,\\ 
      \frac{a+c_n}{2} & \text{ if } a>c_n.
     \end{cases}
\]
%\begin{itemize}
%    \item[i-] if $a>c_n$, then  $\alpha_n(a)=a+c_n$,
%    \item[ii-] if $c_n>a>\beta_n$, then $\alpha_n(a)=\min_{\{(d,e) \in \E^{NR}_{n}(a)\}} \gamma_n^1+\gamma_n^1= 2 \frac{2ac_n-\beta_n(a+c_n)}{c_n+a-2 \beta_n},$
%    \item[iii-] if $\beta_n>a>\alpha_n$, then $\alpha_n(a)=2 a$,
%    \item[iv-] if $\alpha_n > a> \alpha_n'$,  then $\alpha_n(a)=\min \{2 \alpha_n, a+c_n\}$, and
 %   \item[v-] if $\alpha_n'>a$, then $\alpha_n(a)=2 \alpha_n$.
%\end{itemize}
Since $F$ is atomless, it is sufficient to conclude that for $n \ge 1$
\begin{eqnarray}\label{eqn:alphan}
 \alpha_{n+1}= \int_{c_{n}}^{1} (a+c_{n})\mathrm{d}F(a) + \int_{\beta_{n}}^{c_{n}} \psi_n(a) \mathrm{d}F(a) + \int_{\alpha_{n}}^{\beta_{n}} 2 a \mathrm{d}F(a) 
+ \int_{\alpha_{n}'}^{\alpha_{n}} \xi_n(a) \mathrm{d}F(a) +  \int_{0}^{\alpha_{n}'}  2 \alpha_{n} \mathrm{d}F(a),
 \end{eqnarray}
where $\psi_n(a)=\frac{4ac_{n}-2\beta_{n}(a+c_{n})}{c_{n}+a-2 \beta_{n}} $ and $\xi_n(a)=\min \{2 \alpha_{n}, a+c_{n}\}$, which is the third statement of the theorem. 

Putting together all the foregoing analysis, we obtain the desired result, concluding the proof. 
\end{proof}
% Notice that for $n=0$ and all $a \in [0, 1]$, it holds that 
% \[
% E^{NR}_0(a)= \left\{ \left( \frac{a}{2}, \frac{a}{2}\right)\right\},
% \]
%  and then $E^{NR}_1=\EE_a(E_0(a))=\left\{\frac{1}{2}\left( \EE(X), \EE(X) \right)\right\},$ where $X \sim F.$

%  Now, given $a \in [0,1]$, the expected payoff matrix of $\Gamma_n(a)$ is represented in Table~\ref{tab:payoff_no_recall}, where $(d_n, e_n) \in E_n$ and $c_n=\EE(X_{(n)} \vee b)$.
 
\subsection{Omitted proofs from Section~\ref{sec:comparison}} \label{app:comparison}
%\paragraph{Comparison between full recall and no recall cases.}

\begin{proof}[Proof of Lemma \ref{lem:bound:SPEP}.]
 We assume that one player, let us say player 1, bid in the first stage if $a\vee X_1 \geq c_n$, and passes otherwise, where $X_1$ is the realization of the first random variable arrived. Let us show that player 1 obtain an expected payoff, namely $\gamma$, of at least  $(a+c_{n+1})/2$, independently of what player 2 does, and therefore player 1 will obtain a payoff of at least $(a+c_{n+1})/2$ playing any SPE.  
 
 We divide the proof in two cases:
 \paragraph{Case 1.} Assume that $a\geq c_n.$ Note that in this case, player 1 bids for $M=a\vee X_1,$ and we have that 
 \begin{equation*}
     \gamma = \EE_{X_1}\left( (a \vee X_1) \PP(A)+\frac{1}{2}\left((a \vee X_1)+ \EE\left((a \wedge X_1)\vee X_{(n)}\right)\right) \PP(A^c) \right),
 \end{equation*}
where $A$ is the event \textit{player 2 does not bid for M}. 

Notice that  $\left((a \vee X_1)+ \EE\left((a \wedge X_1)\vee X_{(n)}\right)\right)$ is lower bounded by $a+c_n$ and by $a+X_1$ and therefore we have that 
 \begin{eqnarray*}
     \gamma &\geq& \EE_{X_1}\left( (a \vee X_1) \PP(A)+\frac{1}{2}\left((a +c_n) \textbf{1}_{\{X_1<c_n\}}+ (a + X_1)  \textbf{1}_{\{X_1 \geq c_n\}} \right) \PP(A^c) \right) \\
     &=& \frac{a}{2}+ \frac{1}{2}\left( \PP(A) (2(a \vee X_1)-a)+ \PP(A^c) \left( c_n \textbf{1}_{\{X_1<c_n\}}+ X_1 \textbf{1}_{\{X_1 \geq c_n\}} \right)   \right) \\
     &\geq & \frac{a}{2}+\frac{1}{2} \left( c_n \textbf{1}_{\{X_1<c_n\}}+ X_1 \textbf{1}_{\{X_1 \geq c_n\}} \right)  =  \frac{a}{2}+ \frac{c_{n+1}}{2},
 \end{eqnarray*}
 where the second inequality holds because $2(a \vee X_1)-a \geq  c_n \textbf{1}_{\{X_1<c_n\}}+ X_1 \textbf{1}_{\{X_1 \geq c_n\}}$ due to $a>c_n$ and the last equality because $c_{n+1}=\EE(X\vee c_n).$ We conclude that $\gamma\geq (a+c_{n+1})/2$ if $a \geq c_n.$
 \paragraph{Case 2.} Assume that $a< c_n.$ In this case, we have that 
 \begin{eqnarray*}
 \gamma &\geq& \EE_{X_1}\left( \left( X_1 \wedge \frac{1}{2} \left( X_1 + \EE\left(a \vee X_{(n)}\right)\right) \right) \textbf{1}_{\{X_1 \geq c_n\}} + \left(\frac{(a \vee X_1)+c_n}{2} \wedge \EE\left( X_{(n)} \right) \right) \textbf{1}_{\{X_1<  c_n\}}\right) \\
 &=& \EE_{X_1}\left(  \left( \frac{X_1}{2} + \frac{1}{2} \left( X_1 \wedge \EE\left(a \vee X_{(n)}\right)\right) \right) \textbf{1}_{\{X_1 \geq c_n\}}+ \left( \frac{c_n}{2} + \frac{1}{2} \left( \left(a\vee X_1\right) \wedge \left( 2 \EE\left(X_{(n)}\right) -c_n \right)\right) \right) \textbf{1}_{\{X_1<  c_n\}} \right) \\
 &=& \frac{c_{n+1}}{2}+ \EE_{X_1}\left(  \frac{1}{2} \left( X_1 \wedge \EE\left(a \vee X_{(n)}\right)\right) \textbf{1}_{\{X_1 \geq c_n\}} + \frac{1}{2} \left( \left(a\vee X_1\right) \wedge \left( 2 \EE\left(X_{(n)}\right) -c_n \right)\right) \textbf{1}_{\{X_1<  c_n\}}  \right) \\
 &\geq& \frac{a+c_{n+1}}{2},
 \end{eqnarray*}
 where the second equality holds because $c_{n+1}=\EE(X\vee c_n)$ and the last inequality because both $ X_1 \wedge \EE\left(a \vee X_{(n)}\right)$ and $\left(a\vee X_1\right) \wedge \left( 2 \EE\left(X_{(n)}\right) -c_n \right)$ are at least $a$ when $a<c_n$. Thus, we conclude that $\gamma \geq (a+c_{n+1})/2$ if $a<c_n$.
 
 Putting all together we obtain that $\gamma_n^{FR}\geq (a+c_{n+1})/2$ and the proof is complete. 
 \end{proof}
 \begin{proof} [Proof of Theorem \ref{thm:comparison}]
 Note that it is enough to prove that $l_n \geq \beta_n$ for all $n$, where $l_n$ denotes the lowest SPE payoff in the full recall case and $\beta_n$ denotes the highest symmetric SPE payoff in the no recall case. To this end, we will prove by induction on $n$ that $l_n(a,0) \geq \beta_n(a)$ for all $n$ and for all $a$, where $l_n(a,0)$ is defined as in the statement of Theorem~\ref{thm:SPEPcharact} and $\beta_n(a)$ is defined by 
   \[   
\beta_n(a) = 
     \begin{cases}
     \beta_n & \text{if } a < \alpha'_n\\
       \max\{\beta_n, \frac{a+c_n}{2}\}  &\text{if }  a \in [\alpha'_n, \beta_n] \\
        \frac{a+c_n}{2}  &\text{if } a > \beta_n, \\
     \end{cases}
\]
 where $c_n$ is the value of the decision problem in the no recall case with one decision-maker and $n$ arrivals. 
 
 First, notice that $l_1(a,0)=(a+\EE(X))/2$, $c_1=\EE(X)$ and $\beta_1=\EE(X)$, and thus $l_1(a,0) \geq \beta_1$ and $l_1(a,0) \geq (a+c_1)/2$ for all $a$, concluding that $l_1(a,0) \geq \beta_1(a)$ for all $a$.
 
 We assume now that $l_n(a,0) \geq \beta_n(a)$ for all $a$ and we prove that $l_{n+1}(a,0) \geq \beta_{n+1}(a)$ for all $a$.
 We divide the proof in two cases depending on if $a$ is higher than $\EE\left(X_{(n+1)}\right)$ or not. 
 \paragraph{Case 1.} Assume that $a>\EE\left(X_{(n+1)}\right)$. In this case, we have that $a>\beta_{n+1}$ because $\EE\left(X_{(n+1)}\right)\geq c_{n+1}\geq \beta_{n+1}.$ Thus, by the definition of $\beta_{n+1}(a),$ we have that 
 \begin{equation}\label{eqn:comp1}
     \beta_{n+1}(a)=\frac{a+c_{n+1}}{2}.
 \end{equation}

 On the other hand, holds that
 \begin{equation}\label{eqn:comp2}
     l_{n+1}(a,0)=\frac{a+\EE\left(X_{(n+1)}\right)}{2}\geq \frac{a+\EE\left(c_{(n+1)}\right)}{2}, 
 \end{equation}
 where the equality follows from using the definition of $l_{n+1}(a,0)$ when $a>\EE\left(X_{(n+1)}\right)$ and the inequality hols because $\EE\left(X_{(n+1)}\right)\geq c_{n+1}$. 
 
 Therefore, we obtain $l_{n+1}(a,0) \geq \beta_{n+1}(a)$ for all  $a>\EE\left(X_{(n+1)}\right)$ combining \eqref{eqn:comp1} and \eqref{eqn:comp2}.
 
 \paragraph{Case 2.} Assume that $a \leq \EE\left(X_{(n+1)}\right).$ To prove that  $l_{n+1}(a,0) \geq \beta_{n+1}(a)$, we show that  $l_{n+1}(a,0) \geq \beta_{n+1}$ and that  $l_{n+1}(a,0) \geq (a+c_{n+1})$/2. The latter holds by Lemma~\ref{lem:bound:SPEP}. On the other hand, 
 \begin{equation*}
     l_{n+1}(a,0)=\EE_X(l_n(a \vee X, a \wedge X)) \geq \EE_X(l_n( X,0)) \geq  \EE_X( \beta_n(X))= \beta_{n+1},
 \end{equation*}
 where the first equality follows from the definition of $l_{n+1}(a,0)$ for $a \leq \EE\left(X_{(n+1)}\right)$, the first inequality holds due to the monotonicity of the function $l_n$ in both components, the second inequality follows from the induction hypothesis and the last equality from the definition of $\beta_{n+1}.$
 
 Therefore, we conclude that $l_{n+1}(a,0) \geq \beta_{n+1}(a)$ for all  $a \leq \EE\left(X_{(n+1)}\right)$.
 
 Putting all together we obtain that $l_{n}\geq \beta_{n}$ for all $n$ and the desired result follows. 
 \end{proof}

\subsection{Omitted proofs from Section~\ref{sec:eff}}

The main result in Section~\ref{sec:eff} we prove here is Theorem~\ref{prop:boundeff} and then we work on the competitive selection problem with no recall and two arrivals. 

Before going to the proof, we obtain an expression for the price of anarchy and price of stability if the random variables are distributed according to $F$, namely $\PA_2^{NR}(F)$ and $\PS_2^{NR}(F)$ respectively.
%compute the worst case price of stability. To this end, we need to compute
%\[
%\min_{F}  \frac{2 \beta_2}{ \EE\left(X_{(1:2)}+X_{(2:2)}\right)},
%\]

Regarding the price of stability, we need to compute
\[
\frac{ \EE\left(X_{(1:2)}+X_{(2:2)}\right)}{2 \beta_2},
\]
where $2\beta_2$ is obtained from equation \eqref{eq:beta} by taking $n=1$.
That is:
 \begin{eqnarray*}
 2\beta_{2}&=& \int_{0}^{\alpha'_1} 2 \beta_1 \mathrm{d}F(a) + \int_{\alpha'_1}^{\beta_1}  \max\{a+c_1, 2\beta_1 \} \mathrm{d}F(a) +\int_{\beta_1}^{1} (a+c_1) \mathrm{d}F(a). 
 \end{eqnarray*}
 
 Now, due to $E^{NR}_1=\left\{ \left(\frac{\EE(X)}{2},\frac{\EE(X)}{2} \right)\right\}$, we have that $\alpha_1'=\beta_1=\frac{\EE(X)}{2}$. Noting that $c_1= \EE(X)$ and putting all together we have
 
  \begin{eqnarray*}
 2\beta_{2}&=& \int_{0}^{\EE(X)/2} \EE(X) \mathrm{d}F(a) + \int_{\EE(X)/2}^{\EE(X)/2}  \max\{a+\EE(X), \EE(X) \} \mathrm{d}F(a) +\int_{\EE(X)/2}^{1} (a+\EE(X)) \mathrm{d}F(a) \\
 &=&\int_{0}^{1} \EE(X) \mathrm{d}F(a) + \int_{\EE(X)/2}^{1} a \mathrm{d}F(a) \\
 &=& \EE(X) + \PP(X \geq \EE(X)/2) \EE(X | X \geq \EE(X)/2).
 \end{eqnarray*}

On the other hand, $\EE\left(X_{(1:2)}+X_{(2:2)}\right)= 2 \EE(X),$ and therefore

\begin{eqnarray}\label{eqn:pos2}
\frac{1}{\PS_2^{NR}(F)}= \frac{\EE(X) + \PP(X \geq \EE(X)/2) \EE(X | X \geq \EE(X)/2)}{2 \EE(X)}=\frac{1}{2}+\frac{\PP(X \geq \EE(X)/2) \EE(X | X \geq \EE(X)/2)}{2 \EE(X)}.
\end{eqnarray}

Regarding the price of anarchy, we need to compute
\[
 \frac{\EE\left(X_{(1:n)}+X_{(2:n)}\right)}{2 \alpha_2},
\]
where $2 \alpha_2$ follows from equation \eqref{eqn:alphan}  by taking $n=1$.
After some algebra, we obtain 
\begin{eqnarray}\label{eqn:alpha2}\nonumber
2 \alpha_2 &=& \EE(X)+ \int_{\EE(X)/2}^{\EE(X)} 2\EE(X)- \frac{\EE(X)^2}{a} \mathrm{d}F(a)+ \int_{\EE(X)}^1 a \mathrm{d}F(a) \\ 
&=& 2 \EE(X) - \int_0^{\EE(X)/2} a \mathrm{d}F(a) -   \int_{\EE(X)/2}^{\EE(X)} a - 2\EE(X) + \frac{\EE(X)^2}{a} \mathrm{d}F(a),
\end{eqnarray}
and thus
\begin{eqnarray}\label{eqn:poa2}
\frac{1}{\PA_2^{NR}(F)}&=&1- \frac{1}{2 \EE(X)} \int_0^{\EE(X)/2} a \mathrm{d}F(a) -  \frac{1}{2 \EE(X)} \int_{\EE(X)/2}^{\EE(X)} a - 2\EE(X) + \frac{\EE(X)^2}{a} \mathrm{d}F(a).
\end{eqnarray}

Also, we can write the inverse of price of anarchy as follows:
\begin{eqnarray}\label{eqn:poa2_2}
\frac{1}{\PA_2^{NR}(F)}&=&\frac{1}{\PS_2^{NR}(F)}- \frac{1}{2 \EE(X)} \int_{\EE(X)/2}^{\EE(X)} (a-2\EE(X)+ \EE(X)^2/2) \mathrm{d}F(a).
\end{eqnarray}

We now prove Theorem~\ref{prop:boundeff} using the formulas obtained above. 

%The following result states that the price of stability and the price of anarchy for this setting in the no recall case is at most $4/3$. Furthermore, we will prove that this bound is in fact tight. 

\begin{proof}[Proof of Theorem~\ref{prop:boundeff}.]
To prove that $\PS_2^{NR}(F) \leq 4/3,$ let us consider the second term in the rhs of \eqref{eqn:pos2} and notice that
\begin{eqnarray*}
\PP(X \geq \EE(X)/2) \EE(X | X \geq \EE(X)/2)&=&\EE(X)-\PP(X < \EE(X)/2) \EE(X | X < \EE(X)/2) \\
&>&\EE(X)-\PP(X < \EE(X)/2)\frac{\EE(X)}{2} \\
&\geq& \EE(X)-\frac{\EE(X)}{2} = \frac{\EE(X)}{2}.
\end{eqnarray*}
 Then, 
\begin{eqnarray*}
\frac{1}{\PS_2^{NR}(F)} \geq \frac{1}{2}+ \frac{\EE(X)}{2} \frac{1}{2 \EE(X)}= \frac{3}{4}.
\end{eqnarray*}

Regarding the price of anarchy, by equation \eqref{eqn:poa2} and using that $ a - 2\EE(X) + \frac{\EE(X)^2}{a} \leq \frac{\EE(X)}{2}$ if $a \in [\EE(X)/2,\EE(X)]$,  we have 
\begin{eqnarray*}
\frac{1}{\PA_2^{NR}(F)}&\geq&1- \frac{\PP(X < \EE(X)/2) \EE(X | X < \EE(X)/2)}{2 \EE(X)} -  \frac{1}{4}\PP(\EE(X)/2 \leq X \leq \EE(X)) .
\end{eqnarray*}

But, $\EE(X | X < \EE(X)/2) \leq \EE(X)/2$ and thus
\begin{eqnarray*}
\frac{1}{\PA_2^{NR}(F)}&\geq&1- \frac{1}{4} \PP(X < \EE(X)/2)  -  \frac{1}{4}\PP(\EE(X)/2 \leq X \leq \EE(X)) \\
&=& 1- \frac{1}{4}  \PP(X \leq \EE(X)) \geq \frac{3}{4} ,
\end{eqnarray*}
obtaining the desire inequalities. 

To prove the tightness of the bound, let us take $\varepsilon > 0$ and $\eta>0$ two small positive real numbers and consider the random variable $X_{\varepsilon, \eta}=(1-\eta) X + \eta Unif[0,1]$, where

  \[   
X= 
     \begin{cases}
     \varepsilon-\varepsilon^2   & \text{with probability } 1-\varepsilon,\\
      1 &\text{with probability } \varepsilon.\\ 
     \end{cases}
\]

Note that when $\eta$ goes to $0$, $$\frac{1}{\PS_2^{NR}(F_{\varepsilon,\eta})} \to \frac{1}{2}+\frac{\PP(X \geq \EE(X)/2) \EE(X | X \geq \EE(X)/2)}{2 \EE(X)}, $$

where $F_{\varepsilon, \eta}$ represents the c.d.f. of $X_{\varepsilon, \eta}$, and therefore it is enough to prove that 
$$\frac{1}{2}+\frac{\PP(X \geq \EE(X)/2) \EE(X | X \geq \EE(X)/2)}{2 \EE(X)} \to 3/4$$
when $\varepsilon$ goes to $0$.

The c.d.f. of $X$ is given by
  \[   
F_\varepsilon(x) = 
     \begin{cases}
       0 & \text{if } x <  \varepsilon-\varepsilon^2  ,\\
      1-\varepsilon &\text{if } x \in [\varepsilon-\varepsilon^2,1),\\ 
      1 & \text{if } x \geq 1; \\
     \end{cases}
\]
and the expected value is $\EE(X)=(\varepsilon-\varepsilon^2)(1-\varepsilon)+\varepsilon=\varepsilon (1+(1-\varepsilon)^2).$

Therefore, after some algebra, it follows that 
\[
\frac{1}{2}+\frac{\PP(X \geq \EE(X)/2) \EE(X | X \geq \EE(X)/2)}{2 \EE(X)}= \frac{1}{2}+ \frac{\varepsilon}{2 \varepsilon (1+(1-\varepsilon)^2)}=\frac{1}{2}+ \frac{1}{2(1+(1-\varepsilon)^2)},
\]
which converges to $3/4$ when $\varepsilon \to 0$, and therefore we obtain price of stability $4/3.$

%\b{Jerome: to work with continuous distributions, maybe define $X_{\varepsilon, \eta}=(1-\eta) X + \eta Unif[0,1]$, then take limit  when $\eta\to 0$ and finally take limit when $\varepsilon \to 0$.}

On the other hand, note that by definition, for each distribution $F$, holds that $\PS_2(F) \leq \PA_2(F)$, and then $\PA_2(F_{\varepsilon,\eta}) \geq 4/3$, where $4/3$ is the upper bound we already prove. Therefore, the bound is tight also for the price of anarchy. 
\end{proof}

\bibliographystyle{abbrv} 
\bibliography{GPR-biblio}

  \end{document}